\DeclareMathOperator{\Ima}{Im}
\DeclareMathOperator{\uniform}{unif}
\DeclareMathOperator{\poly}{poly}
\newcommand{\emptyvec}{{\epsilon}}
\newcommand{\reals}{\mathbb{R}}
\newtheorem{theorem}{Theorem}
\newtheorem{lemma}[theorem]{Lemma}
\newtheorem{claim}[theorem]{Claim}
\newtheorem{obs}[theorem]{Observation}
\newtheorem{definition}[theorem]{Definition}
\DeclareMathOperator*{\argmax}{arg\,max}
\DeclareMathOperator*{\argmin}{arg\,min}
\begin{document}
\pagenumbering{gobble}%
\clearpage
\thispagestyle{empty}
\newcommand{\floor}[1]{\left\lfloor #1 \right\rfloor}	
\newcommand{\LP}{\mbox{LP}}
\newcommand{\sol}{\mbox{sol}}
\newcommand{\empt}{\mbox{empty}}
\newcommand{\set}[1]{\{#1\}}
\newcommand{\eps}{\varepsilon}
\newcommand{\tE}{\tilde{E}}
\newcommand{\OPT}{\mbox{OPT}}

\newenvironment{claimproof}[1][\proofname]{\begin{proof}[#1]\renewcommand{\qedsymbol}{$\lrcorner$}}{\end{proof}}
\newcommand{\tp}{\tilde{p}}
\newcommand{\nonneg}{\reals_{\geq0}}
\newcommand{\ba}{{\bar{a}}}
\newcommand{\bb}{{\bar{b}}}
\newcommand{\bc}{{\bar{c}}}
\newcommand{\bn}{{\bar{n}}}
\newcommand{\bm}{{\bar{m}}}
\newcommand{\be}{{\bar{e}}}
\newcommand{\bp}{{\bar{p}}}
\newcommand{\br}{{\bar{r}}}
\newcommand{\bq}{{\bar{q}}}
\newcommand{\bk}{{\bar{k}}}
\newcommand{\bt}{{\bar{t}}}
\newcommand{\bdel}{{\bar{\delta}}}
\newcommand{\bup}{{\bar{\upsilon}}}
\newcommand{\newop}{\bullet}
\newcommand{\inner}{\cdot}
\renewcommand{\log}{\ln}
\newcommand{\dist}{\mathcal{D}}
\newcommand{\D}[2]{D\left({#1} \middle\|{#2}\right)}
\newcommand{\entropy}{\mathcal{H}}
\newcommand{\terment}{\mathcal{H}_{\textnormal{\texttt{term}}}}
\newcommand{\mD}[2]{D_e\left({#1} \middle\|{#2}\right)}
\newcommand{\Chi}{{\mathcal{X}}}
\newcommand{\Yi}{{\mathcal{Y}}}
\newcommand{\bChi}{{\bar{\Chi}}}
\newcommand{\bYi}{{\bar{\Yi}}}
\newcommand{\bgam}{{\bar{\gamma}}}
\newcommand{\blam}{{\bar{\lambda}}}
\newcommand{\bv}{{\bar{v}}}
\newcommand{\ariel}[1]{{\color{red} (Ariel: #1)}} 
\newcommand{\Rdi}{R_{\Delta}^{\infty}}
\newcommand{\Rdn}{R_{\Delta}^{n}}
\newcommand{\Rdk}{R_{\Delta}^{\leq k}}
\newcommand{\Rgi}{R_{\Gamma}^{\infty}}
\newcommand{\Rgn}{R_{\Gamma}^{n}}
\newcommand{\Rgk}{R_{\Gamma}^{\leq k}}
\newcommand{\Rqk}{R_{Q}^{\leq k}}
\newcommand{\li}[1]{ {\left\| #1\right\|^{\infty}}}
\newcommand{\critical}{\textnormal{\texttt{critical}}}

\newcommand{\terms}{\textnormal{\texttt{terms}}}
\newcommand{\type}{\textnormal{\texttt{type}}}

\newcommand{\ceil}[1]{{\left\lceil #1 \right\rceil}}
\newcommand{\lkr}{{\blam_k^R}}
\newcommand{\Pkr}{{P_k^R}}
\newcommand{\nkr}{{n_k^R}}
\newcommand{\Ckr}{{C_k^R}}
\newcommand{\mC}{\mathcal{C}}
\newcommand{\mT}{\mathcal{T}}
\newcommand{\mK}{\mathcal{K}}
\newcommand{\ulim}{{\unif{R} \lim_{k\rightarrow \infty}}}
\newcommand{\kulim}{{\unif{R\in \mR(k)} \lim_{k\rightarrow \infty}}}
\newcommand{\unif}[1]{{\underset{#1}{\uniform} }}
\newcommand{\tO}{O^*}
\newcommand{\skipfig}[1]{#1}
\newcommand{\mF}{\mathcal{F}} 
\newcommand{\mR}{\c{R}}
\newcommand{\mS}{\mathcal{S}}
\newcommand{\mI}{\mathcal{I}}
\newcommand{\mN}{\mathcal{N}}
\newcommand{\kmax}{{k_{\max}}}
\newcommand{\bmax}{{b_{\max}}}
\newcommand{\dR} {{\ddot{R}}}
\newcommand{\NeGra}{\text{{\sf NG}}}
\newcommand{\Ind}{\text{{\sf Ind}}}
\newcommand{\cG}{\mathcal{G}}
\newcommand{\one}{\mathbbm{1}}
\renewcommand{\Pr}{\mathrm{Pr}}
\newcommand{\E}{\mathrm{E}}

\begin{titlepage}
\title{Analysis of Two-variable Recurrence Relations \\ with Application to Parameterized Approximations
}
\date{}
\author{	
	Ariel Kulik\thanks{Department of Industrial Engineering and Management, Ben-Gurion University of the Negev, Beer-Sheva 8410501, Israel \mbox{kulik@bgu.ac.il }}
	\and
	Hadas Shachnai\thanks{Computer Science Department, Technion, Haifa 3200003,
		Israel. \mbox{E-mail: {\tt hadas@cs.technion.ac.il}}.
	}
}
\maketitle
\begin{abstract}

In this paper we introduce randomized branching as a tool for parameterized approximation and develop the mathematical machinery for its analysis. 
Our algorithms improve the best known running times of parameterized approximation algorithms for Vertex Cover and $3$-Hitting Set for a wide range of approximation ratios.
One notable example is
a simple parameterized random $1.5$-approximation algorithm for Vertex Cover, 
whose running time of $\tO(1.01657^k)$ substantially improves the 
best known runnning time of $\tO(1.0883^k)$ [Brankovic and Fernau, 2013]. 
For $3$-Hitting Set we present a parameterized random $2$-approximation algorithm with
running time of $\tO(1.0659^k)$, improving the best known $\tO(1.29^k)$ algorithm of
[Brankovic and Fernau, 2012].

The running times of our algorithms are derived from
an asymptotic analysis of a wide class of two-variable recurrence relations  of the form:
$$p(b,k) = \min_{1\leq j \leq N} \sum_{i=1}^{r_j} \bgam_i^j \cdot p(b-\bb^j_i, k-\bk_i^j),$$
where $\bb^j$ and  $\bk^j$ are vectors of natural numbers, and $\bgam^j$ is a probability distribution over $r_j$ elements, for $1\leq j \leq N$.   Our main theorem asserts that for any $\alpha>0$, 
$$\lim_{k \rightarrow \infty } \frac{1}{k} \cdot \log p(\floor{\alpha k},k) = -\max_{1\leq j \leq N} M_j,$$
 where $M_j$ depends only on $\alpha$, $\bgam^j$, $\bb^j$ and $\bk^j$, and can be efficiently calculated by solving a simple numerical optimization problem. 
To prove the theorem we show an equivalence between the recurrence and a stochastic process.
We analyze this process using the {\em method of types}, by introducing
an adaptation of Sanov's theorem to our setting. We believe our novel analysis of
recurrence relations which is of independent interest is a main contribution of this paper.

\end{abstract}
\end{titlepage}

\clearpage
\pagenumbering{arabic}%

\section{Introduction}
\label{sec:intro}

In search of tools for deriving efficient parameterized approximations, we 
explore the power of randomization in branching algorithms.
Recall that a {\em vertex cover} (or simply a {\em cover})  of an undirected graph $G=(V,E)$ is a subset $S\subseteq V$ such that for any $(u,v)\in E$ it holds that $S\cap \{u,v\} \neq \emptyset$. The {\em Vertex Cover} problem is to find 
a  cover of minimum cardinality for $G$.
In Vertex Cover parameterized by the solution size, $k$, we are given an integer parameter $k \geq 1$, and 
we wish to determine if $G$ has a vertex cover of size $k$ in 
time $\tO(f(k))$, for some computable function $f$.\footnote{The notation $O^*$ hides factors polynomial in the input size.}

Consider the following 
simple algorithm for the problem. Recursively pick a vertex $v$ of degree at least $3$, and 
branch over the following two options: $v$ is in the cover, or three of $v$'s neighbors are in the cover. If the maximal degree is $2$ or less then find a minimal vertex cover in polynomial time. The algorithm has a running time  $\tO(1.4656^k)$ (see Chapter 3 in \cite{CFKLMPPS15} for more details). 

The randomized branching version of this algorithm replaces branching by a random selection with some probability $\gamma \in (0,1) $. In each recursive call the algorithm selects either  $v$ or three of its neighbors into the solution, with probabilities $\gamma$ and $1-\gamma$, respectively (see Algorithm \ref{alg:VC3} for a formal description).
If $v$ is in a minimal cover then the algorithm has probability $\gamma$ to
decrease the minimal cover size by one, and probability $1-\gamma$ to select three vertices into the solution, possibly with no decrease in the minimal cover size.
A similar argument holds in case $v$ is not in a minimal cover. This suggests that the function $p(b,k)$ defined in  equation \eqref{eq:vc3intro}  lower bounds the probability the above algorithm returns a cover of size $b$, given a graph which has a cover of size $k$.
\begin{equation}
\label{eq:vc3intro}
\begin{aligned}
p(b,k) &=&\min\Bigg\{
 \begin{aligned}
 &\gamma \cdot p(b-1, k-1)~&&+ ~~(1-\gamma ) \cdot p(b-3,k) \\
&\gamma \cdot p(b-1,k) &&+ ~~(1-\gamma) \cdot  p(b-3,k-3)  
\end{aligned}
\\
p(b,k) &= 0 &~~~~\forall b<0, k\in \mathbb{Z}\\
p(b,k) &= 1& ~~~~\forall b\geq 0, k \leq 0
\end{aligned}
\end{equation}

Thus, for any $\alpha >1$, we can obtain an
$\alpha$-approximation with  constant probability  by repeating the randomized branching process $\frac{1}{p(\alpha k , k )}$ times.
While $p(b,k)$ can be evaluated using dynamic programming for any $b,k \geq 0$, finding the asymptotic behavior of $\frac{1}{p(\alpha k,k)}$ as $k \rightarrow \infty$, which dominates the running time 
of our algorithm, is less trivial.

\subsection{Our Results}

In this paper we show that randomized branching is a highly efficient tool in the development of parameterized approximation algorithms for Vertex Cover and $3$-Hitting Set, leading to significant improvements in running times over algorithms developed by using existing tools.\footnote{See Section~\ref{sec:results_vc_3hs} for a formal definition of $3$-Hitting Set.} 
One notable example is
a simple parameterized random $1.5$-approximation algorithm for Vertex Cover, 
whose running time of $\tO(1.01657^k)$ substantially 
improves the currently best known $\tO(1.0883^k)$ algorithm for the problem \cite{BF13}. 

To evaluate the running times of our algorithms, we develop mathematical tools for analyzing the asymptotic behavior of a wide class of two-variable recurrence relations generalizing the relation in~\eqref{eq:vc3intro}.
To this end, we introduce 
an adaptation of Sanov's theorem~\cite{S58} (see also~\cite{Co06}) to our setting, which facilitates the use of  {\em method of types} and {\em information theory}
for the first time in the analysis of {\em branching} algorithms.
We believe our novel analysis of recurrence relations
which is of independent interest is a main contribution of this paper.

\subsubsection{Vertex Cover and $3$-Hitting Set}
\label{sec:results_vc_3hs}

\begin{figure}
	\centering
	\begin{subfigure}{.5\textwidth}
		\centering
		\caption{Vertex Cover}
		\skipfig{
		\begin{tikzpicture}[scale = 0.9]
\begin{axis}[
xmin = 1, xmax=2, ymin =0.99 , ymax=1.35 , xlabel= 	{approximation ratio}, 
ylabel={exponent base}, samples=50]
\addplot[blue, ultra thick] (2-x, 1.2738^x ); 

\addplot[black, ultra thick] coordinates {
( 1.0  ,  1.2738 )
( 1.01  ,  1.2697966946339028 )
( 1.02  ,  1.2658059708770486 )
( 1.03  ,  1.2618277891878902 )
( 1.04  ,  1.2578621101491503 )
( 1.05  ,  1.2539088944674337 )
( 1.06  ,  1.249968102972836 )
( 1.07  ,  1.2460396966185563 )
( 1.08  ,  1.2421236364805097 )
( 1.09  ,  1.2382198837569434 )
( 1.1  ,  1.2343283997680499 )
( 1.11  ,  1.2304491459555849 )
( 1.12  ,  1.2265820838824855 )
( 1.13  ,  1.222727175232489 )
( 1.1400000000000001  ,  1.218884381809753 )
( 1.15  ,  1.215053665538477 )
( 1.16  ,  1.211234988462526 )
( 1.17  ,  1.2074283127450531 )
( 1.18  ,  1.2036336006681256 )
( 1.19  ,  1.199850814632351 )
( 1.2  ,  1.196079917156504 )
( 1.21  ,  1.1923208708771553 )
( 1.22  ,  1.188573638548303 )
( 1.23  ,  1.1848381830410002 )
( 1.24  ,  1.1811144673429903 )
( 1.25  ,  1.1774024545583384 )
( 1.26  ,  1.1737021079070669 )
( 1.27  ,  1.1700133907247903 )
( 1.28  ,  1.1663362664623513 )
( 1.29  ,  1.1626706986854614 )
( 1.3  ,  1.1590166510743358 )
( 1.31  ,  1.1553740874233371 )
( 1.32  ,  1.1517429716406147 )
( 1.33  ,  1.148123267747748 )
( 1.34  ,  1.1445149398793886 )
( 1.35  ,  1.1409179522829078 )
( 1.3599999999999999  ,  1.137332269318038 )
( 1.37  ,  1.1337578554565237 )
( 1.38  ,  1.130194675281768 )
( 1.3900000000000001  ,  1.1266426934884801 )
( 1.4  ,  1.1231018748823276 )
( 1.4100000000000001  ,  1.1195721843795874 )
( 1.42  ,  1.1160535870067976 )
( 1.43  ,  1.1125460479004097 )
( 1.44  ,  1.1090495323064467 )
( 1.45  ,  1.105564005580155 )
( 1.46  ,  1.1020894331856639 )
( 1.47  ,  1.0986257806956408 )
( 1.48  ,  1.0951730137909526 )
( 1.49  ,  1.091731098260324 )
( 1.5  ,  1.0883 )
};

\addplot[mark=*, mark options={scale=0.5}, only marks, black, forget plot] coordinates {
	(1.5,1.0883) 
	(1.666666, 1.04)
	(1.75, 1.024)
	(1.8, 1.017)
	(1.8333 , 1.01208191)
	( 1.875 ,  1.00734187 )
	( 1.88888888889 ,  1.0060641 )
	( 1.9 ,  1.00503861 )
	( 1.90909090909 , 1.00425981 )
	( 1.91666666667 , 1.00365343 )
	( 1.92307692308 , 1.00317141 )
	( 1.92857142857 , 1.00278148 )
	( 1.93333333333 , 1.00246127 )
	( 1.9375 ,  1.00221275)
	( 1.94117647059 , 1.00198584 )
	( 1.94444444444 ,  1.0017931)
	( 1.94736842105 , 1.0016279 )
	( 1.95 ,   1.00148516 )
};  %

\addplot[red, ultra thick] coordinates{ 
( 1.000000001 , 1.324717943439684 )
( 1.01 , 1.2930971872855843 )
( 1.02 , 1.2709553562950033 )
( 1.03 , 1.2524114847946912 )
( 1.04 , 1.236184666676916 )
( 1.05 , 1.221659939266156 )
( 1.06 , 1.208472688759161 )
( 1.07 , 1.196381051359742 )
( 1.08 , 1.1852128031155162 )
( 1.09 , 1.1748391359811732 )
( 1.1 , 1.1651601693734248 )
( 1.11 , 1.1560962764757634 )
( 1.12 , 1.1475825675011355 )
( 1.13 , 1.1395652100002955 )
( 1.1400000000000001 , 1.1319988798097702 )
( 1.15 , 1.1248449379905316 )
( 1.16 , 1.118070092453454 )
( 1.17 , 1.1116453932805612 )
( 1.18 , 1.1056611400607466 )
( 1.19 , 1.1005318066137195 )
( 1.2 , 1.0956590093845138 )
( 1.21 , 1.0910239370064931 )
( 1.22 , 1.0866099008693249 )
( 1.23 , 1.082402020072795 )
( 1.24 , 1.0783869639245827 )
( 1.25 , 1.0745527396574905 )
( 1.26 , 1.0708885156125345 )
( 1.27 , 1.0673844735995928 )
( 1.28 , 1.0640316839126145 )
( 1.29 , 1.0608219997059647 )
( 1.3 , 1.0577479666683236 )
( 1.31 , 1.0548027455227311 )
( 1.32 , 1.0519800455276038 )
( 1.33 , 1.0492740664240956 )
( 1.34 , 1.0466794482170065 )
( 1.35 , 1.0441912273818021 )
( 1.3599999999999999 , 1.0418047983557137 )
( 1.37 , 1.0395158794903296 )
( 1.38 , 1.037320483486005 )
( 1.3900000000000001 , 1.0352148905347995 )
( 1.4 , 1.0331956250110093 )
( 1.4100000000000001 , 1.0312594344906947 )
( 1.42 , 1.0294032711038863 )
( 1.43 , 1.027624274373305 )
( 1.44 , 1.0259197567144593 )
( 1.45 , 1.0242871892896974 )
( 1.46 , 1.022724190237944 )
( 1.47 , 1.0211061148691052 )
( 1.48 , 1.0194724241333957 )
( 1.49 , 1.0179187691522171 )
( 1.5 , 1.0165674569904897 )
( 1.51 , 1.0154641391259127 )
( 1.52 , 1.0144087091873695 )
( 1.53 , 1.0133997923158904 )
( 1.54 , 1.012436083058512 )
( 1.55 , 1.0114547596039911 )
( 1.56 , 1.0104819722592377 )
( 1.57 , 1.009576591546044 )
( 1.58 , 1.0088566788288071 )
( 1.5899999999999999 , 1.0081704955747772 )
( 1.6 , 1.007517193989135 )
( 1.6099999999999999 , 1.0068589214444632 )
( 1.62 , 1.0062168673941576 )
( 1.63 , 1.0056706102535804 )
( 1.6400000000000001 , 1.0051821501633222 )
( 1.65 , 1.004716811708091 )
( 1.6600000000000001 , 1.0042450021290232 )
( 1.67 , 1.003826644950794 )
( 1.6800000000000002 , 1.0034626045761648 )
( 1.69 , 1.0031096093998462 )
( 1.7000000000000002 , 1.0027685566578624 )
( 1.71 , 1.002483044291485 )
( 1.72 , 1.002207736466775 )
( 1.73 , 1.001951261875834 )
( 1.74 , 1.001728912143514 )
( 1.75 , 1.0015114858320462 )
( 1.76 , 1.0013276079407811 )
( 1.77 , 1.001150938002623 )
( 1.78 , 1.0009985138018003 )
( 1.79 , 1.0008579215400477 )
( 1.8 , 1.0007314241337721 )
( 1.81 , 1.0006210015967711 )
( 1.82 , 1.0005224850122387 )
( 1.83 , 1.0004353015474952 )
( 1.8399999999999999 , 1.000358773954474 )
( 1.85 , 1.0002921261368678 )
( 1.8599999999999999 , 1.0002352464447102 )
( 1.87 , 1.00018616407132 )
( 1.88 , 1.0001448402740132 )
( 1.8900000000000001 , 1.0001104438422828 )
( 1.9 , 1.0000820672258894 )
( 1.9100000000000001 , 1.0000592265304418 )
( 1.92 , 1.0000411663331896 )
( 1.9300000000000002 , 1.0000272851986407 )
( 1.94 , 1.0000170095656638 )
( 1.9500000000000002 , 1.0000097419997498 )
( 1.96 , 1.0000049368308817 )
( 1.97 , 1.000002061518123 )
( 1.98 , 1.0000005075929264 )
( 1.99 , 1.0000000000237768 )
( 1.9999999 , 1.000000000000005 )
};

\addlegendentry[no markers, blue]{FKRS \cite{FKRS18}}
\addlegendentry[no markers, black]{BF \cite{BF13}}
\addlegendentry[no markers, red]{This paper}

\end{axis}
\end{tikzpicture}
		}
		\label{fig:vc_results}
	\end{subfigure}%
	\begin{subfigure}{.5\textwidth}
		\centering
		\caption{$3$-Hitting Set}
		\skipfig{
		\begin{tikzpicture}[scale =0.9]
\begin{axis}[
xmin = 1, xmax=3.0, ymin =0.99 , ymax=2.7 , xlabel= 	{approximation ratio}, 
ylabel={
	exponent base
}, samples=50]
\addplot[blue, ultra thick] (3-2*x, 2.076^x ); 
\addplot[mark=*, mark options={scale=0.5}, black, ultra thick] coordinates {(2,1.29)};  %

\addplot[red, ultra thick] coordinates {
 (1.000100, 2.56947)	
(1.02, 2.33899355)
(1.04, 2.18878798)
(1.06, 2.07146573)
(1.08, 1.97362555)
(1.10, 1.88931003)
(1.12, 1.81538977)
(1.14, 1.75042413)
(1.16, 1.69100892)
(1.18, 1.63780126)
(1.20, 1.58981271)
(1.22, 1.54556289)
(1.24, 1.50492419)
(1.26, 1.46752551)
(1.28, 1.43283758)
(1.30, 1.40082990)
(1.32, 1.37140465)
(1.34, 1.34383936)
(1.36, 1.31802952)
(1.38, 1.30053342)
(1.40, 1.28394017)
(1.42, 1.27107745)
(1.44, 1.25768523)
(1.46, 1.24619737)
(1.48, 1.23334420)
(1.50, 1.22234041)
(1.52, 1.21063610)
(1.54, 1.19977383)
(1.56, 1.18960297)
(1.58, 1.18078432)
(1.60, 1.17290836)
(1.62, 1.16461203)
(1.64, 1.15676366)
(1.66, 1.14852382)
(1.68, 1.14189353)
(1.70, 1.13586483)
(1.72, 1.12912413)
(1.74, 1.12325692)
(1.76, 1.11762499)
(1.78, 1.11291300)
(1.80, 1.10712792)
(1.82, 1.10238824)
(1.84, 1.09805720)
(1.86, 1.09338133)
(1.88, 1.08905733)
(1.90, 1.08478085)
(1.92, 1.08085640)
(1.94, 1.07722764)
(1.96, 1.07322820)
(1.98, 1.06953912)
(2.00, 1.06587612)
(2.02, 1.06264353)
(2.04, 1.05957415)
(2.06, 1.05680487)
(2.08, 1.05378660)
(2.10, 1.05067894)
(2.12, 1.04803169)
(2.14, 1.04576359)
(2.16, 1.04314166)
(2.18, 1.04042956)
(2.20, 1.03824326)
(2.22, 1.03614700)
(2.24, 1.03413029)
(2.26, 1.03192147)
(2.28, 1.03013903)
(2.30, 1.02815843)
(2.32, 1.02628631)
(2.34, 1.02455146)
(2.36, 1.02334167)
(2.38, 1.02155238)
(2.40, 1.02004664)
(2.42, 1.01856280)
(2.44, 1.01741481)
(2.46, 1.01588359)
(2.48, 1.01467354)
(2.50, 1.01331457)
(2.52, 1.01234996)
(2.54, 1.01132808)
(2.56, 1.01035226)
(2.58, 1.00955754)
(2.60, 1.00842094)
(2.62, 1.00761580)
(2.64, 1.00738664)
(2.66, 1.00611551)
(2.68, 1.00545855)
(2.70, 1.00491500)
(2.72, 1.00436377)
(2.74, 1.00365422)
(2.76, 1.00313634)
(2.78, 1.00273339)
(2.80, 1.00252976)
(2.82, 1.00213331)
(2.84, 1.00163226)
(2.86, 1.00186557)
(2.88, 1.00121823)
(2.90, 1.00113434)
(2.92, 1.00079890)
(2.94, 1.00050466)
(2.96, 1.00016512)
(2.98, 1.00007352)
};
\addlegendentry[no markers, blue]{FKRS \cite{FKRS18}}
\addlegendentry[no markers, black]{BF \cite{BF12}}
\addlegendentry[no markers, red]{This paper}
\end{axis}
\end{tikzpicture}
		}
		\label{fig:sub2}
	\end{subfigure}
	\caption{Results for Vertex Cover and $3$-Hitting Set. A dot at $(\alpha,c)$ means that the respective algorithm outputs $\alpha$-approximation in time $\tO(c^k)$ or $\tO\left( (c+\eps)^k\right)$ for any $\eps>0$.}
	\label{fig:HS}
\end{figure}

We say that an algorithm $\cal A$  is a {\em parameterized random $\alpha$-approximation for Vertex Cover} if, given a graph $G$ and a parameter $k$ such that $G$ has a vertex 
cover of size $k$, $\cal A$ returns a vertex cover $S$ of $G$ satisfying $|S| \leq \alpha k$ with constant probability $\lambda >0$, and has running time $\tO(f(k))$. 
We refer the reader to \cite{FKRS18, BF12, LPRS17} for similar and more general definitions.

\paragraph{Vertex Cover:} Our results for Vertex Cover include two parameterized random $\alpha$-approximation algorithms,  {\sc EnhancedVC3*} and {\sc BetterVC} (presented in Sections \ref{sec:warmup} and \ref{sec:VC}, respectively). Algorithm {\sc EnhancedVC3*} uses a single branching rule (either $v$ or $N(v)$ are in a minimal cover) and has
the best running times for approximation ratios 
greater than $1.4$. 
We note that this simple algorithm outputs a $1.5$-approximation in time $\tO(1.01657^k)$.  

Algorithm {\sc BetterVC} is more complex.  It is based on a parameterized $\tO(1.33^k)$ algorithm
for Vertex Cover presented in \cite{Nied06}. {\sc BetterVC}
achieves the best running  times for approximation ratios smaller than
$1.4$. This algorithm shows that 
applying randomization in a sophisticated branching algorithm can result in 
an excellent tradeoff between approximation and time complexity for 
approximation ratios approaching $1$.

The table below compares the running time of the best algorithm presented in this paper for a given approximation ratio to the previous best results due to Brankovic and Fernau \cite{BF13}. A value of $c$ for ratio $\alpha$ means that the respective algorithm yields an $\alpha$-approximation with running time $\tO(c^k)$.
The set of values selected for $\alpha$ matches the set of
approximation ratios listed in~\cite{BF13}.  The running times presented in this paper are always rounded up. 
\begin{center}
	\begin{tabular}{ c|| c |c |c | c | c | c | c | c|  c| c} 
		ratio %
		& $1.1$ & $1.2$ &  $1.3$ &  $1.4$ &  $1.5$ &  $1.666$ & $1.75$ & $1.8$ & $1.9$ \\
		\hline
		BF \cite{BF13} &
		1.235 &
		1.197 &
		1.160 &
		1.1232 &
		1.0883  &
		1.0396 &
		1.0243 &
		1.0166 &
		1.0051 \\
		\hline 
		This paper &
		1.1652 &
		1.096 &
		1.058 &
		1.0332 & 
		1.0166 & 
		1.004 &
		1.0016 &
		1.00074 &
		1.000083  \\
		\hline	
		
	\end{tabular}
\end{center}
Figure~\ref{fig:vc_results} shows a graphical comparison between our results 
and the previous best known results~\cite{BF13, FKRS18}.\footnote{The running times 
	presented in Figures~ \ref{fig:HS}, \ref{fig:warmup} and \ref{fig:bettervc} are extrapolations of numerically evaluated running times for $99$ evenly spaced approximation ratios over the relevant range, with possible additional approximation ratios close to the endpoints of this range.}

\paragraph{$3$-Hitting Set:} 
The input for $3$-Hitting Set is a hypergraph $G=(V,E)$, where each hyperedge $e$ 
contains at most $3$ vertices, i.e., $|e|\leq 3$. We refer to such hypergraph as {\em $3$-hypergraph}. 
We say that a 
subset $S\subseteq V$ is a {\em hitting set} if, for every $e\in E$, 
$e\cap S \neq \emptyset$. The objective is to find
a hitting set of minimum cardinality. In the parameterized version, the 
goal is to determine if the input graph has a hitting set 
of at most $k$ vertices, where $k \geq 1$ is the parameter.

We say that an algorithm ${\cal A}$ is a {\em parameterized random $\alpha$-approximation for $3$-Hitting Set} if, given a $3$-hypergraph $G$ and a parameter $k$, such that $G$ has a hitting
set of size $k$, ${\cal A}$ returns a hitting set $S$ of $G$ satisfying $|S| \leq \alpha k$ with constant probability $\lambda >0$, and has running time $\tO(f(k))$.

In Section \ref{sec:HS} we present a parameterized random $\alpha$-approximation algorithm for $3$-Hitting Set for any $1<\alpha<3$.
The algorithm, \textsc{3HS} (Algorithm \ref{alg:3HS}), 
can be viewed as an adaptation of \textsc{EnhancedVC3*} to 
hypergraphs, using the following observation.
For any $v\in V$ we define the {\em neighbors graph} of $v$ as the hypergraph in which 
$\{u,w\}$ (or $\{u\}$) is an edge if $\{u,v,w\}$ ($\{u,v\}$) is an edge in the original hypergraph.
It holds that for any hitting set $S$ either $v\in S$ or $S$ contains a hitting set of the neighbors graph of $v$.  
The actual branching rules of \textsc{3HS} were determined via computer-aided search tree generation, using the above observation.

 While \textsc{3HS} may not be the best for approximation
ratios close to $1$, it yields a significant improvement over previous results for higher approximation ratios. 
For $\alpha = 2$ the running time is $\tO(1.0659^k)$, substantially improving 
the best known result of $\tO(1.29^k)$ due to \cite{BF12}.
Figure~\ref{fig:sub2} 
gives a graphical comparison between the running times achieved in this paper and the results of  \cite{BF12} and \cite{FKRS18}.

We note that while our algorithms yield significant improvements in running times for both Vertex Cover and $3$-Hitting Set over the algorithms of~\cite{BF12,BF13} and~\cite{FKRS18}, the previous algorithms are deterministic; our algorithms use randomization as a key tool.

The parameterized approximation algorithms presented in this work can also be used to derive exponential time (non-parameterized) 
$\alpha$-approximation algorithms for Vertex Cover and $3$-Hitting Set. In a recent work Esmer et al.~\cite{EKMNS22} used the parameterized approximation algorithms presented in the conference version of this paper~\cite{KS20},
along with {\em approximate monotone local search}, to derive faster exponential time approximations for Vertex Cover and $3$-Hitting Set. We refer the reader to \cite{EKMNS22} for further details.

\subsubsection{Recurrence Relations}
\label{sec:rec_intro}

The objective of our algorithms is to find a vertex cover of a 
graph under the restriction that this cover must not exceed a  given budget. The algorithms proceed by recursive application 
of a random branching step. Each time this step is executed
it adds vertices to the solution, thereby decreasing the available budget, and possibly reducing 
the number of vertices required to complete the solution.
To analyze the running times of our algorithms, we need to evaluate the probability of obtaining a cover
satisfying the budget constraint.

Similar to branching algorithms,
this property can be formulated using a recurrence relation. 
We define a function $p:\mathbb{Z}\times \mathbb{Z}\rightarrow [0,1]$ for every set of {\em terms} $\{(\bb^j,\bk^j,\bgam^j)~|~1\leq j\leq N\} $ and refer to $p$ as the {\bf composite recurrence} of $\{(\bb^j,\bk^j,\bgam^j)~|~1\leq j\leq N\} $. We require that each of the terms $(\bb^j, \bk^j,\bgam^j)$ satisfies the following technical conditions: $\bb^j \in \mathbb{N}_+^{r_j}$, 
$\bk^j \in \mathbb{N}^{r_j}$, $\bk^j$ is not the all zeros vector,  and $\bgam^j \in \mathbb{R}_{+}^{r_j}$ with $\sum_{i=1}^{r_j} \bgam^j_i= 1$.\footnote{
	Throughout the paper
	we use $\mathbb{N}$ (resp. $\mathbb{N}_+$) to denote the  non-negative (resp. positive) integers
	($\mathbb{N}=\mathbb{N}_+ \cup \{0\}$).
}
The function  $p:\mathbb{Z} \times \mathbb{Z} \rightarrow [0,1]$ is  defined by the following equations.
\begin{equation}
\label{eq:rec_relation}
\begin{aligned}
p(b,k) &= \min_{1\leq j \leq N}  \sum_{i=1}^{r_j} \bgam^j_i \cdot p(b-\bb^j_i, k-\bk^j_i) &\\
p(b,k) &= 0 & \forall b<0, k \in \mathbb{Z} \\
p(b,k) &= 1 & \forall  b\geq 0, k \leq 0
\end{aligned}
\end{equation}
 For example, the function $p$  defined in \eqref{eq:vc3intro} is the composite recurrence of 
 $\left\{(\bb^j, \bk^j, \gamma^j)|~ j= 1,2 \right\}$ with $\bb^1 = \bb^2 = (1,3)$, $\bgam^1 = \bgam^2 = (\gamma, 1-\gamma)$, $\bk^1 = (1,0)$ and $\bk^2= (0,3)$.
  
  In the context of  our randomized branching algorithms, the number of terms, 
  $N$, corresponds to the number of possible branching {\em states} (which differs from the number of {\em branching rules}). For example, in Algorithm \ref{alg:VC3} %
  (See Section \ref{sec:warmup}
   and an informal outline at the beginning of  Section \ref{sec:intro}) there are two  possible states: either $v$ is in an optimal cover, or its neighbors are. 
   Indeed, the analysis of the algorithm utilizes a composite relation with $N=2$ as given in \eqref{eq:vc3intro}.

To evaluate the running times of our algorithms we need to analyze the
asymptotic behavior of $p(\floor{\alpha k} , k )$ for a fixed $\alpha$ as $k$ grows to infinity. 
With some surprise, we did not find an existing analysis of this behavior, even for $N=1$.
The main technical contribution of this paper is Theorem \ref{thm:rec} that gives
such analysis for any $N\geq 1$. We emphasize that while the recurrence relations we want to solve are derived from coverage problems, our solution is generic and can be used for any composite recurrence.

 We say that a vector $\bq \in \nonneg^r$ is a {\em distribution} if $\sum_{i=1}^{r} \bq_i = 1$ and use $\D{\cdot}{\cdot}$ to denote {\bf Kullback}-{\bf Leibler divergence} \cite{Co06}. That is, for every $\bar{c},\bar{d} \in \mathbb{R}^k$ define\footnote{Throughout the paper we refer by $\log$ to the natural logarithm.} $$\D{\bar{c}}{\bar{d}} = \sum_{i=1}^k \bar{c}_i \log \frac{\bar{c}_i}{\bar{d}_i}.$$
 To state our main result we need the next definitions.
 For short, associate the term $(\bb, \bk ,\bgam)$ with the expression $\sum_{i=1}^{r}  \bgam_i \cdot p(b-\bb_i, k-\bk_i)$.
We first associate a {\em critical ratio} with each term. If $\alpha$ is strictly smaller than the critical ratio of any of the terms which define the composite recurrence then it can be easily shown that $p(\floor{\alpha k},k) =0$. 
 \begin{restatable}{definition}{critialdef}
 	\label{def:critical_ratio}
 	Let $\bb \in \mathbb{N}_+^{r}$, 
 	$\bk \in \mathbb{N}^{r}\setminus \{0\}$ and $\bgam \in \nonneg^r$ with $\sum_{i=1}^{r} \bgam_i= 1$.
 	The {\bf critical ratio} of the term $(\bb, \bk, \bgam)$ is
 	\begin{equation*}
 	\critical(\bb,\bk,\bgam)= \min_{1\leq i\leq r:~\bk_i\neq 0}  \frac{\bb_i}{\bk_i }.
 \end{equation*}
 \end{restatable}

We associate an {\em $\alpha$-branching number} with every term $(\bb,\bk,\bgam)$. In Theorem~\ref{thm:rec} we show that the value of $p(\floor{\alpha,k},k)$ is dominated by the maximum $\alpha$-branching number of its terms.
\begin{restatable}{definition}{alphabranching}
	\label{def:alpha_branching}
	Let $\bb \in \mathbb{N}_+^{r}$, 
	$\bk \in \mathbb{N}^{r}\setminus \{0\}$ and $\bgam \in \nonneg^r$ with $\sum_{i=1}^{r} \bgam_i= 1$.
	Then for $\alpha > \critical(\bb,\bk,\bgam)$,
	the {\bf $\alpha$-branching number} of the term $(\bb, \bk, \bgam)$ 
	is the optimal value $M^*$ of the following minimization problem over $\bdel \in \nonneg^r$:
	\begin{equation}
	\label{eq:alpha_num}
	M^*= \min \left\{\frac{1} {\sum_{i=1}^r \bdel_i \cdot \bk_i}\cdot  \D{\bdel}{ \bgam} \middle| ~ \sum_{i=1}^r \bdel_i\cdot \bb_i \leq \alpha \sum_{i=1}^r \bdel_i \cdot \bk_i,  \text {~$\bdel$ is a distribution} \right\}
	\end{equation}
\end{restatable}
The formula in \eqref{eq:alpha_num}  arise from an interpretation of the composite recurrence as a random walk, and the $\alpha$-branching number provides the probability for  a rare event in this walk. Observe that the condition $\alpha>\critical(\bb,\bk,\bgam)$ ensures that the feasibility region for the optimization problem in \eqref{eq:alpha_num} is not empty.
Our main result is the following.
\begin{restatable}{theorem}{main}
	\label{thm:rec}
	Let $p$ be the composite recurrence of $\{(\bb^j, \bk^j, \bgam^j ) |~ 1\leq j \leq N\}$, and  $\alpha>0$ such that $\alpha > \critical(\bb^j,\bk^j,\bgam^j)$ for $1\leq j\leq N$. Denote by $M_j$ the
	$\alpha$-branching number of $(\bb^j, \bk^j, \bgam^j)$, 
	and let $M=\max\{M_j | 1\leq j \leq N\}$.  
	Then,
	\[\lim_{k\rightarrow \infty} \frac{\log p\left(\floor{\alpha k} , k\right)}{k}  = -M. \]
\end{restatable}

Intuitively, Theorem \ref{thm:rec} asserts that $p(\alpha k , k )\approx \exp(-M)^k$.  Furthermore, it shows that the asymptotics of $p(\alpha k , k )\approx \exp(-M)^k$ is dominated by the ``worst'' term in $p$. 
We note that the optimization problem~\eqref{eq:alpha_num} is {\em quasiconvex}. Furthermore, all of the numerical problems in this paper arising as consequences of \eqref{eq:alpha_num} and Theorem \ref{thm:rec} are {quasiconvex}, and as such  can be solved efficiently using standard tools (these problems involve the optimization of $\bgam^j$ as well). We also note that most of these problems have a nearly closed form solution.

It is easy to show that for $p$ as defined in \eqref{eq:rec_relation} and every $b,k,n\in \mathbb{N}_+$ it holds that $p(nb,nk)\geq \left(p(b,k) \right)^n$. 
This suggests that $p$ can be lower bounded empirically
by  $p(\alpha k , k ) = \Omega( c^k)$ where $c=\left({p(\alpha k_0, k_0)}\right)^{\frac{1}{k_0}}$  for any fixed $k_0$. Indeed, this simple approach can be used in practice to derive a fairly good lower bound for $p$ in  simple cases such as \eqref{eq:vc3intro}. However, it lacks both the scale and insight required to derive the algorithmic results presented in this paper.
Furthermore, Theorem \ref{thm:rec} readily gives the desired solution, thus eliminating the need for an empirical approach as described above.

The observation that the asymptotic behavior of $p(b,k)$ is  dominated by the highest $\alpha$-branching number of the terms in $p$ served as a main guiding rule for designing the algorithms in this paper.
Most notably, the $1.5$-approximation for Vertex Cover was explicitly derived by this insight (see Section \ref{sec:eVC3star}). 
In addition, Theorem \ref{thm:rec} reduces the problem of optimizing the values of $\bgam^j$ of the terms of $p$ (e.g., the selection of $\gamma$ in \eqref{eq:vc3intro}) to multiple simple continuous quasiconvex optimization problems. In contrast, the empirical approach provides no tools for optimizing the distributions $\bgam^j$. This was crucial for deriving all of our algorithmic results, in particular the results for $3$-Hitting Set (see Section \ref{sec:HS})  which involve multiple (computer generated) branching rules.

The proof of Theorem \ref{thm:rec} is given in Section \ref{sec:rec} that is written as a stand-alone part in this paper.

We note that the requirement $\alpha>\critical(\bb^j,\bk^j,\bgam^j)$ in 
the statement of Theorem~\ref{thm:rec} is essential; 
indeed, $\lim_{k\rightarrow \infty} \frac{1}{k}\cdot \log p (\floor{\alpha k},k)$ may not exist when $\alpha$ is the critical ratio for one of the terms in $p$.\footnote{
	Consider, for example, the  recurrence $p(b,k)=p(b-4, k-2)$
	with $p(b,k)=0$ for $b<0$ and $p(b,k)=1$ for $k\leq 0 \leq b$.
	In this case, for every odd $k$ it holds that $p(2k,k)=0$ while $p(2k,k)=1$ for every even $k$, therefore $\lim_{k \rightarrow \infty } \frac{1}{k}\log p(2k,k)$ does not exist.}
In the conference version of the paper~\cite{KS20} we tackled this corner case in a different way, by slightly modifying the definition of composite recurrences~\eqref{eq:rec_relation}. Avoiding this corner case, by requiring that 
$\alpha>\critical(\bb^j,\bk^j,\bgam^j)$,
led to a significantly simpler proof
for Theorem~\ref{thm:rec}. We remark that in all of our recurrence relations the critical ratio of each term is equal to $1$, while $\alpha >1$. Hence, the requirement that $\alpha>\critical(\bb^j,\bk^j,\bgam^j)$ does not 
affect our algorithms or their analyses.

\subsection{Recurrences, Random Walks and Types}
\label{sec:int_walk}
In the following we give a brief and informal introduction to the tools and ideas used in  the proof of Theorem~\ref{thm:rec}. To do so, we focus on a specific simple instance of a  composite recurrence,  show how it can be viewed through the lens of a random walk, and explain how to analyze the random walk using  the method of types.

Let $\bgam= (\bgam_1,\bgam_2)\in \nonneg^2$ be a distribution ($\bgam_1+\bgam_2=1$), and consider the composite recurrence:
\begin{equation}
	\label{eq:example_rec}
	\begin{aligned}
		p(b,k) &= \bgam_1 \cdot p(b-4,k-3)+ \bgam_2\cdot p(b-2,k-1)\\
		p(b,k) &= 0 & \forall b<0, k \in \mathbb{Z} \\
		p(b,k) &= 1 & \forall  b\geq 0, k \leq 0
	\end{aligned}
\end{equation}
That is, $p$ is the composite recurrence of $\{ ( \bb, \bk,\bgam)\}$ where $\bb=(4,2)$ and $\bk=(3,1)$.  Observe that $\critical(\bb,\bk,\bgam) = \frac{4}{3}$. Our objective is to evaluate $\lim_{k\rightarrow \infty} \frac{1}{k}\cdot \ln p\left( \floor{\alpha\cdot k} ,k\right)$. In this informal introduction we  focus on finding $M$  such that $\liminf_{k\rightarrow \infty} \frac{1}{k}\cdot \ln p\left( \floor{\alpha\cdot k} ,k\right)>-M$ which implies $p(\floor{\alpha \cdot k } ,k)\gtrsim \left(\exp \left( -M\right) \right)^k $.  We note such lower bound suffices for all our algorithmic applications. 

\paragraph{A Random Walk.}
We associate a {\em random walk} with $p$. The walk starts at $(X_0,Y_0) = (0,0)$. At the $n$-th step of the walk a random variable $A_n\in \{1,2\}$ is sampled with $\Pr(A_n=1)=\bgam_1$ and $\Pr(A_n=2)=\bgam_2$. If $A_n=1$ then the next location of the walk is
$(X_n,Y_n) = (X_{n-1} + 4,Y_{n-1}+3) = (X_{n-1} + \bb_1, Y_{n-1}+\bk_1)$, and if $A_n=2$ then the next location is $(X_n,Y_n) = (X_{n-1} + 2,Y_{n-1}+1) = (X_{n-1} + \bb_2, Y_{n-1}+\bk_2)$. That is, with probability $\bgam_1$ the in the $n$-th step the position of that walk changes  by $(4,3)=(\bb_1,\bk_1)$, and with probability $\bgam_2$ the $n$-th step the position changes by $(2,1)=(\bb_2,\bk_2)$. See illustration in Figure~\ref{fig:walk}.

The random walk is tightly related to the recurrence $p(b,k)$.
\begin{lemma}
	\label{lem:example_equiv}
	$p(b,k) = \Pr(\exists n\in \mathbb{N}_{\geq 0}:~X_n\leq b \textnormal{ and } Y_n\geq k)$.
\end{lemma}

\begin{figure}
	\centering
\begin{tikzpicture}[scale=0.75]
	\draw[step=1cm,gray,very thin,dashed] (0,0) grid (15,8);
	\draw[thick,->] (0,0) -- (15,0) ;
	\draw[thick,->] (0,0) -- (0,8);

	\pgfsetlinewidth{0.0pt} 
	
    \draw[pattern=north east lines, pattern color=yellow] (0,6) -- (0,8) -- (9,8) -- (9,6) -- cycle;

	\foreach \x in {1,2,3,4,5,6,7,8,9,10,11,12,13,14,15}
	\draw (\x cm,2pt) -- (\x cm,-2pt)  node[anchor=north] {$\x$};
	
	\foreach \y in { 1,2,3,4,5,6,7,8}
	\draw (2pt,\y cm) -- (-2pt,\y cm) node[anchor=east] {$\y$};
	\draw[blue,->,thin,dashed] (0,0) -- (3.9,2.925);
	\filldraw[blue] (4,3) circle (3pt) node[anchor=north west] {$(X_1,Y_1)=(4,3)$};
	\draw[blue,->,thin,dashed] (4.1,3.05) -- (6-0.1,4-0.05);
	\filldraw[blue] (6,4) circle (3pt) node[anchor=north west] {$(X_2,Y_2)=(6,4)$};
	\draw[blue,->,thin,dashed] (6.1,4.075) -- (10-0.1,7-0.075);
	\filldraw[blue] (10,7) circle (3pt) node[anchor=north west] {$(X_3,Y_3)=(10,7)$};
	\filldraw[blue] (0,0) circle (3pt);

\end{tikzpicture}
\caption{In blue: an instance of the first three steps of the random walk, corresponding to  $A_1=1$, $A_2=2$ and $A_3=1$. In yellow: the area to which the walk should reach for the event $\{\exists n\in \mathbb{N}: X_n\leq 9 \textnormal{ and } Y_n\geq 6\}$ to occur.
}
\label{fig:walk}
\end{figure}

That is, $p(b,k)$ is the probability that the random walk $(X_n,Y_n)$ crossed the value  $k$ on the $y$-axis {\em before} it crossed the value $b$ on the $x$-axis (see Figure~\ref{fig:walk}). To show Lemma~\ref{lem:example_equiv} we consider the probability of the event $\{\exists n\in \mathbb{N}_{\geq 0}:~X_n\leq b \textnormal{ and } Y_n\geq k\}$ depending on $A_1=1$ or $A_1=2$. For example, if $A_1=1$ then $X_n\leq b \textnormal{ and } Y_n\geq k$ holds if and only if $X_n-X_1\leq b-4 \textnormal{ and } Y_n-Y_1\geq k-3$, as $(X_1,Y_1) = (4,3)$ in this case. That is, for $b\geq 0$ and $k>0$ we have,
\begin{equation}
	\label{eq:ex_rand_walk}
\begin{aligned}
	\Pr&(\exists n\in \mathbb{N}_{\geq 0}:~X_n\leq b \textnormal{ and } Y_n\geq k) \\[0.1cm]
	=\,& 
	\Pr(\exists n\in \mathbb{N}_{\geq 0}:~X_n\leq b \textnormal{ and } Y_n\geq k  \textnormal{ and }A_1=1)\\
	&\,+  \Pr(\exists n\in \mathbb{N}_{> 0}:~X_n\leq b \textnormal{ and } Y_n\geq k  \textnormal{ and } A_1=2)\\[0.1cm]
	=\,& \Pr(\exists n\in \mathbb{N}_{> 0}:~X_n-X_1\leq b-4 \textnormal{ and } Y_n-Y_1\geq k-3   \textnormal{ and }A_1=1)\\
	&\,+  \Pr(\exists n\in \mathbb{N}_{> 0}:~X_n-X_1\leq b-2 \textnormal{ and } Y_n-Y_1\geq k-1   \textnormal{ and }A_1=2)\\[0.1cm]
	=\,&\Pr(A_1 = 1) \cdot \Pr(\exists n\in \mathbb{N}_{> 0}:~X_n-X_1\leq b-4 \textnormal{ and } Y_n-Y_1\geq k-3)\\
	&\,+\Pr(A_1=2)\cdot  \Pr(\exists n\in \mathbb{N}_{> 0}:~X_n-X_1\leq b-2 \textnormal{ and } Y_n-Y_1\geq k-1)\\[0.1cm]
	=\,&\bgam_1\cdot \Pr(\exists n\in \mathbb{N}_{> 0}:~X_n-X_1\leq b-4 \textnormal{ and } Y_n-Y_1\geq k-3)\\
	&\,+\bgam_2\cdot  \Pr(\exists n\in \mathbb{N}_{> 0}:~X_n-X_1\leq b-2 \textnormal{ and } Y_n-Y_1\geq k-1)
\end{aligned}
\end{equation}
The third equality holds as $A_1$ and $(X_n -X_1, Y_n-Y_1)$ (which only depends on $A_2,\ldots ,A_n$) are independent. We further  observe that the distribution of $(X_n,Y_n)$ is identical to the distribution of $(X_{n+m},Y_{n+m}) - (X_m,Y_m)$ for every $n,m\in \mathbb{N}$. That is, for every $b,k,\in \mathbb{Z}$ and $m,n\in \mathbb{N}$ it holds that 
$$\Pr\left( (X_n,Y_n) = (b,k) \right) \,=\,  \Pr\left( (X_{n+m},Y_{n+m})-(X_m,Y_m) = (b,k)\right)
$$
and therefore,
$$
\Pr( \exists n\in \mathbb{N}_{\geq 0}: X_n\leq b \textnormal{ and } Y_n\geq k ) ~= ~\Pr( \exists n\in \mathbb{N}_{\geq 0}: X_{m+n}-X_m\leq b \textnormal{ and } Y_{m+n}-Y_m\geq k ).
$$
By the above equality and \eqref{eq:ex_rand_walk}, we have
\begin{equation}
	\label{eq:ex_rand_walk2}
	\begin{aligned}
		\Pr&(\exists n\in \mathbb{N}_{\geq 0}:~X_n\leq b \textnormal{ and } Y_n\geq k) \\[0.1cm]
		=\,&\bgam_1\cdot \Pr(\exists n\in \mathbb{N}_{\geq 0}:~X_n\leq b-4 \textnormal{ and } Y_n\geq k-3)\\
		&\,+\bgam_2\cdot  \Pr(\exists n\in \mathbb{N}_{> 0}:~X_n\leq b-2 \textnormal{ and } Y_n\geq k-1).
	\end{aligned}
\end{equation}
Lemma~\ref{lem:example_equiv} follows from \eqref{eq:ex_rand_walk2} via a simple induction. 

\paragraph{Types.} Fix $\alpha>\frac{4}{3}$. We use the {\em method of types} to estimate the probability of the event  $\Pr(\exists n\in \mathbb{N}_{\geq 0}:~X_n\leq b \textnormal{ and } Y_n\geq k)$. The {\em type} of $(a_1,\ldots ,a_n)\in \{1,2\}^{n}$ is the vector $T\in \mathbb{R}^{2}_{\geq 0}$ defined by $T_i = \frac{1}{n}\cdot \left| \{\ell\,|\,a_{\ell}=i\}\right|$. That is, $T_1$ ($T_2$) is the relative frequency of $1$ ($2$) in $(a_1,\ldots,a_n)$. For example, the type of $(1,2,1,2,1)$ is $\left(\frac{3}{5},\,\frac{2}{5}\right)$, as $1$ appears thrice and $2$ appears twice in  $(1,2,1,2,1)$. 
We use $\type(a_1,\ldots,a_n)$ to denote the type of $(a_1,\ldots,a_n)$. 

Our analysis relies on the property that the $n$-th location of the walk  only depends on the type of $A_1,\ldots, A_n$. For every type $T\in \mathbb{R}^2_{\geq 0}$ define $\beta(T)= 4\cdot T_1 +2\cdot T_2 = \sum_{i=1}^{2} T_i\cdot \bb_i$ and $\kappa(T) = 3\cdot T_1 +1\cdot T_2 = \sum_{i=1}^{2} T_i\cdot \bk_i$.  The values $\beta(T)$ ($\kappa(T)$) can be interpreted as the average step size of the walk on the $x$-axis ($y$-axis) if we re-adjust the probabilities such that $\Pr(A_n=1) =T_1$ and $\Pr(A_n=2) =T_2$.   Fix $n\in \mathbb{N}$ and let $T=\type(A_1,\ldots,A_n)$. Then,
$$
X_n \,=\, 4\cdot \left|\{1\leq  \ell \leq n  | A_{\ell} =1  \}\right|+2\cdot \left|\{1\leq  \ell \leq n  | A_{\ell} =2  \}\right| \, = \, n\cdot 4\cdot T_1 + n \cdot 2 \cdot T_2 = n\cdot \beta(T),
$$
where the first equality holds as $X_{n}$ advances by $4$ when $A_n =1$ and by $2$ when $A_n=2$. The second equality follows from the definition of types.  Similarly, it can be shown that $Y_n = n\cdot \kappa(T)$. Therefore,
\begin{equation}
	\label{eq:example_pr_type}
\begin{aligned}
p(\floor{\alpha \cdot k},k )\,&=\, \Pr(\exists n\in \mathbb{N}_{\geq 0}:~X_n\leq \alpha\cdot k \textnormal{ and } Y_n\geq k)\\
&=\,  \Pr\left(\exists n\in \mathbb{N}_{\geq 0}:~\beta(\type(A_1,\ldots,A_n))\leq \frac{\alpha\cdot k}{n} \textnormal{ and } \kappa(\type(A_1,\ldots,A_n))\geq \frac{k}{n}\right).
\end{aligned}
\end{equation}
That is, in \eqref{eq:example_pr_type} we showed the event  $\left\{\exists n\in \mathbb{N}_{\geq 0}:~X_n\leq \alpha\cdot k \textnormal{ and } Y_n\geq k\right\}$ only depends on the types of the random vectors $(A_1,\ldots, A_n)$ for various values of $n$.  In the following we use Sanov's theorem \cite{S58} to lower bound the probability of the event in \eqref{eq:example_pr_type}.

We can arbitrarily lower bound the probability of the event in the last expression of \eqref{eq:example_pr_type} by focusing on a specific value  for $n$. We would guess that $n^*= \rho \cdot k$  is a useful choice, and later optimize the value of $\rho$. 
\begin{equation}
	\label{eq:example_pr_type2}
	\begin{aligned}
		p(\floor{\alpha \cdot k},k )\,
		&=\,  \Pr\left(\exists n\in \mathbb{N}_{\geq 0}:~\beta(\type(A_1,\ldots,A_n))\leq \frac{\alpha\cdot k}{n} \textnormal{ and } \kappa(\type(A_1,\ldots,A_n))\geq \frac{k}{n}\right)\\
		&\geq  \,\Pr\left(\beta(\type(A_1,\ldots,A_{\rho \cdot k }))\leq \frac{\alpha\cdot k}{\rho \cdot k} \textnormal{ and } \kappa(\type(A_1,\ldots,A_{\rho \cdot k}))\geq \frac{k}{\rho \cdot k}\right).
	\end{aligned}
\end{equation}
Define 
$$Q_{\rho}\,=\,\left\{ T\in \mathbb{R}^{2}_{\geq 0} ~|~ \beta(T) \leq \frac{\alpha}{\rho} \textnormal{ and } \kappa(T)\geq \frac{1}{\rho} \textnormal{ and  $T$ is a distribution}\right\},$$
then by \eqref{eq:example_pr_type2} we have 
\begin{equation}
	\label{eq:example_pr_type3}
	\begin{aligned}
		p(\floor{\alpha \cdot k},k )\,
		\geq \, \Pr\left(\type(A_1,\ldots,A_{\rho\cdot k})\in Q_{\rho}\right).
	\end{aligned}
\end{equation}

The following is Sanov's theorem  \cite{S58} (see also \cite{Co06}) stated  to our setting.
\begin{theorem}[Sanov's Theorem]
	\label{thm:sanov}
	Let $R\subseteq\{ T\in \mathbb{R}_{\geq 0}^{2 }\,|\,\textnormal{$T$ is a distribution}\} $ be a set of distributions, such that $R$ is the closure of its interior.
	 Then, $$\lim_{n\rightarrow \infty } \frac{1}{n} \ln \Pr(\type(A_1,\ldots, A_n) \in R) = -\min_{T\in R } \D{T}{\bgam}.$$ 
\end{theorem}
	Intuitively, the theorem states that the probability that the type of  a sequence of $n$ independent and identically distributed ({\em i.i.d}) random variables, 
    distributed by $\bgam$, is in a set $R$ is dominated  by the distance between $\bgam$ and the distribution closest to $\bgam$ in $R$; the distance is measured using  the Kullback-Leibler divergence.  As $Q_{\rho}$ is the closure of its interior, it holds that 
\begin{equation}
	\label{eq:example_lower}
\liminf_{k\rightarrow \infty} \frac{1}{k}\cdot \ln p(\floor{\alpha\cdot k},k) \,\geq \, \lim_{k\rightarrow \infty} \rho \cdot \frac{1}{\rho\cdot k}\cdot \ln \Pr(\type(A_1,\ldots, A_{\rho\cdot k}) \in Q_{\rho}) \,=\,-\rho \cdot \min_{T\in Q_{\rho} } \D{T}{\bgam}.
\end{equation}
The inequality is by \eqref{eq:example_pr_type3} and the equality follows from Theorem~\ref{thm:sanov}. 
The above provides a lower bound for $\liminf_{k\rightarrow \infty} \frac{1}{k}\cdot \ln p(\floor{\alpha\cdot k},k)$ which depends on $\rho$. By selecting the optimal value of $\rho$ we can match the lower bound in Theorem~\ref{thm:rec} which uses branching numbers,  
 and attain a matching upper bound (we omit the details).

\paragraph{The General Case.} In the above we outlined concepts and ideas which can be used to evaluate the asymptotic behavior of the recurrence in \eqref{eq:example_rec}.  The proof of Theorem~\ref{thm:rec} uses a similar outline, while  overcoming major challenges which arise when considering composite recurrences which involve multiple terms (i.e., $N>1$).  

The random walk  used for the proof of Theorem~\ref{thm:rec} involves an {\em adversary} which selects the term $(\bb^j,\bk^j,\bgam^j)$ that will be used for the $n$-th step of the random walk. 
That is, the $n$-th step is $(X_n,Y_n) = (X_{n-1},Y_{n-1})+(\bb^j_i,\bk^j_i)$ where $j$ is selected by the adversary (and may depend on the first $n-1$ steps of the random walk ) and then $i$ is sampled according to $\bgam^j$ ($\Pr(i=i') = \bgam^j_{i'}$).
We model the adversary's behavior using a strategy and show equivalence between the recurrences and the random walk with an optimal adversary. 

We subsequently use the method of types for the analysis of the random walk. However, due to the adversary, the steps of the random walk do not form a sequence of i.i.d, and the vanilla techniques from the method of types, and specifically Sanov's theorem (Theorem~\ref{thm:sanov}), do not apply.  We show that several properties of types can be adjusted to our setting, and use those to prove Theorem~\ref{thm:rec}. The proof is inspired by the proof of Sanov's theoem, and can be viewed as an adaptation of the proof to our setting.

\subsection{Related Work}

Vertex Cover is one of the fundamental problems in computer science, and a 
testbed for new techniques in parameterized complexity.
The problem admits a polynomial time $2$-approximation, which 
cannot be improved under the {\em Unique Games Conjecture (UGC)}~\cite{KR08}. 
Vertex Cover has been widely studied from the viewpoint of parameterized 
complexity. 
We say that a problem (with a particular parameter $k$) is {\it fixed-parameter tractable (FPT)} if it can be solved in time $f(k) \cdot \poly(n)$, where $f$ is 
some computable function depending only on $k$. 
Vertex Cover parameterized by the solution size is well known to be FPT (see, e.g.,~\cite{Nied06}). 
Until very recently, the fastest known running time of an FPT
algorithm for the problem was $O^*(1.273^k)$, due to Chen et at.~\cite{CKX10}.
The current best result is $O^*(1.25288^k)$ due to Harris and Narayanaswamy \cite{HN22}.
Also, there is no $2^{o(k)} \cdot \poly(n)$ 
algorithm for the problem, under the {\em exponential time hypothesis (ETH)}. 

In~\cite{BRKP15} it was shown that there is no $(7/6 - \eps)$
approximation for Vertex Cover with running time $O(2^{n^{1-\delta}})$ for
any $\delta>0$ under ETH. In~\cite{MT18}   Manurangsi and
Trevisan showed a $\left(2-{1}/{O(r)}\right)$-approximation for the problem
with running time $\tO ( \exp({n}\cdot{2^{-r^2}}))$, improving
upon earlier results of~\cite{BCLNN18}. To the best of our knowledge,
the existence of a $(2-\eps)$-approximation 
for Vertex Cover with running time $2^{o(n)}$ is still open.

The above results suggest that for $\alpha< 7/6$ subexponential $\alpha$-approximation algorithms are unlikely to exist, and even as the approximation
ratio approaches $2$ the barrier of exponential running time remains unbreached.  
This motivates our study of parameterized $\alpha$-approximation algorithms for Vertex Cover, for
$1 < \alpha < 2$, whose running times are exponential in the solution size, $k$. 

Brankovic and Fernau presented in~\cite{BF13} a branching 
algorithm that yields a parameterized $1.5$-approximation for Vertex Cover with 
running time $O^*(1.0883^k)$. 
In~\cite{FKRS18} Fellows et al. presented an $\alpha$-approximation algorithm whose running time is 
$O^*(1.273^{(2-\alpha)k})$, for any $1\leq \alpha \leq 2$. 
A similar result was obtained in~\cite{BEP11} using a different technique.

Similar to Vertex Cover, $3$-Hitting Set cannot be approximated 
within a constant factor better than $3$ under UGC~\cite{KR08}, and there is no subexponential algorithm for the problem 
under ETH. The best known parameterized algorithm for the problem has running time of $\tO(2.076^k)$ \cite{Wa07}. 
Previous works on parameterized approximation for $3$-Hitting Set
resulted in an $\alpha$-approximation in time $\tO(2.076^{k(3-\alpha)/2})$
due to \cite{FKRS18}, for any $1 \leq \alpha \leq 3$, and a $2$-approximation in time $\tO(1.29^k)$ using a branching 
algorithm by Brankovic and Fernau \cite{BF12}.

Randomized branching is a well known approach for algorithm design (see, e.g, \cite{BBG99, BE05,LRS18}). 
Often, 
the analysis of such algorithms is narrowed to evaluating the probability that in {\em every} branching step the algorithm makes a correct branching choice (in contrast, in our analysis the aim is to bound the number of incorrect steps). This leads to a  one-variable recurrence  which can be simply solved.
Randomized branching was used for approximation in \cite{BCLNN18}, along with a tailored analysis for the approximation ratio. 

The idea of sampling leaves from a branching tree was studied in the past from a different perspective.
Specifically, it was used in \cite{DNS16}   to justify  
one-sided probabilistic polynomial algorithms as a computational model for 
branching algorithms. 
Within this model, the authors derived lower bounds for branching algorithms.

Previous works on parameterized approximations for both  Vertex Cover and $3$-Hitting Set either considered 
approximative preprocessing \cite{FKRS18} or used approximative (worsening) steps within branching algorithms \cite{BF12,BF13}. While these techniques
use the approximative step explicitly at given stages of the algorithm execution, in randomized branching the approximative step takes the form of an incorrect branching decision, which may add unnecessary vertices to the solution. As incorrect branching is not restricted to a specific stage, a degree of freedom is added to the number of 
{\em good paths} within the branching tree. This degree of freedom in turn increases the probability of finding an approximate solution. 
This gives some intuition 
to the improved performance of our algorithms.

\subsubsection{Recurrence Relations and the Method of Types}

The analysis of single variable recurrence relations (e.g., $f(n) = \sum_{i=1}^{N} f(n-a_i)$) is a cornerstone in the analysis of parameterized branching algorithms 
that is often included in introductory textbooks on parameterized algorithms (see, e.g.,~\cite{Nied06,CFKLMPPS15}). 

In~\cite{Epp06} Eppstein introduced a technique for computing the asymptotic
behavior of multivariate recurrences of the form $f(x) = \max_i \sum_j f(x-\delta_{i,j})$, where $f:\mathbb{Z}^d\rightarrow \mathbb{Z}$ and $\delta_{i,j}\in \mathbb{N}^d$. For any $t\in \mathbb{N}^d$, the technique
shows how to compute a constant $c$ such that $f(nt) \approx c^n$ up to a polynomial factor. The technique is based on a tight reduction of the multivariate recurrence 
to a solvable single variable recurrence, where the reduction is computed using a quasiconvex program.
A matching lower bound  to the result of the quasiconvex program is derived using a random walk, which bears some similarity to the reduction used in this paper from a recurrence to a stochastic process. Nevertheless, the analysis in this paper is significantly different. 
 
The result in~\cite{Epp06} is commonly used in the analysis  of parameterized algorithms, and specifically within the context of Measure and Conquer~\cite{FGK09} as a black box. 

We emphasize that the recurrences considered in~\cite{Epp06} are different from the recurrences studied in this paper. The difference seems to be more than merely technical. The recurrences in~\cite{Epp06}
commonly measure the size of a branching tree, while our recurrence relations are aimed at bounding the number of leaves adhering to certain property within the tree. In fact, the size of the branching trees
considered in this paper can be easily evaluated using standard single variable recurrence relations.
We are not aware of other works relating to the analysis of similar multivariate recurrences.

The {\em method of types} is a powerful technique developed mostly within the context of information theory in a line of works, starting  from the early works of Sanov \cite{S58} 
and Hoeffding \cite{Ho65}.
The current form of the method is attributed to the works of Csiszar et al. \cite{C98}.  Along with Sanov's theorem, the prominent results attained using the method of types are universal block coding and 	hypothesis testing (we refer the reader to the survey in \cite{C98} and to  Chapter 11 in \cite{Co06}). While the method of types is considered a basic tool in
information theory, it seems much less known in theoretical computer science.

\subsection{Organization}

Section~\ref{sec:warmup} includes a technical introduction to randomized branching using several algorithms for Vertex Cover, which gradually 
reveal the main algorithmic ideas presented in this paper.
 The algorithmic results for $3$-Hitting Set and a more sophisticated
algorithm for Vertex Cover are given in Sections~\ref{sec:HS} and~\ref{sec:VC}.
An overview of the numerical tools used to calculate the running times of our algorithms, based on Theorem~\ref{thm:rec}, is given in 
Section \ref{sec:numerical}. %
Section~\ref{sec:rec} gives the proof of Theorem~\ref{thm:rec}. 
Finally, in Section~\ref{sec:discussion} we discuss open problems and some directions for future work.

\section{Our Technique: Warm-up}
\label{sec:warmup}
We start by completing the analysis of the algorithm presented in Section \ref{sec:intro}.  
A formal description of the algorithm, $\text{VC3}_\gamma$, is given in Algorithm \ref{alg:VC3}. 
While the performance of Algorithm \ref{alg:VC3} can be significantly improved, as we show below,
 it demonstrates the main tools and concepts developed in this paper, and its analysis involves only few technical details. Interestingly, already this simple algorithm 
  improves the  state-of-the-art results for a wide range of approximation ratios. 
Sections \ref{sec:VC3star} and \ref{sec:eVC3star} present variants of Algorithm \ref{alg:VC3}, which perform even better. Each
section introduces some new ideas. The results of the algorithms presented in this section are 
depicted in Figure~\ref{fig:warmup}.

Clearly, Algorithm \ref{alg:VC3} has a polynomial running time. Also, it always returns a vertex cover of the input graph $G$.
  The algorithm depends on a configuration parameter $\gamma \in (0,1)$ which determines the probability that the set is $S=\{v\}$ or $S=\{u_1,u_2,u_3\}$ in Step~\ref{VC3:S}. We analyze the algorithm for an arbitrary $\gamma$ and show how to select an optimal value for it later. 
  Let $\mathcal{G}_k$ be the set of graphs with a vertex cover of size $k$ or less. Also, 
let $P_{\gamma}(b,k)$ be the minimal probability that Algorithm \ref{alg:VC3} returns a solution of size at most $b$, given a graph $G\in \mathcal{G}_k$. 
That is, $P_{\gamma}(b,k) = \min_{G\in \mathcal{G}_k}  \Pr \left[ ~ \left| \text{VC3}_{\gamma}(G) \right| \leq b ~ \right] $.
Using the arguments given in Section~\ref{sec:intro}, it is easy
to show by induction that $P(b,k)\geq p_{\gamma}(b,k)$, where $p_{\gamma}(b,k)$ is 
defined by the following recurrence relation.

\begin{equation}
\label{eq:VC3_rec}
\begin{aligned}
p_{\gamma}(b,k) &=&\min&
\begin{cases}
\gamma \cdot p_{\gamma} (b-1, k-1) + (1-\gamma )\cdot  p_{\gamma}(b-3,k) &  \\
\gamma\cdot  p_{\gamma}(b-1,k) + (1-\gamma)\cdot  p_{\gamma}(b-3,k-3) & 
\end{cases} \\
 p_\gamma(b,k) &= 0& &  &\forall b<0\\
 p_\gamma(b,k) &= 1& & &\forall b\geq 0, k \leq 0
\end{aligned}
\end{equation}
That is, $p_{\gamma}$ is the composite recurrence of 
$\left\{(\bb^j, \bk^j, \gamma^j)|~ j= 1,2 \right\}$ with $\bb^1 = \bb^2 = (1,3)$, $\bgam^1 = \bgam^2 = (\gamma, 1-\gamma)$, $\bk^1 = (1,0)$ and $\bk^2= (0,3)$. Note that in this case $N=2$ and $r_1 = r_2 = 2$ (recall that a composite recurrence is defined in Section~\ref{sec:rec_intro}).

Hence, by repeating the execution of Algorithm \ref{alg:VC3} for 
$p_\gamma(b,k)^{-1}$ times, 
we have a constant probability to find a cover of size $b$ 
or less, for any $G\in \mathcal{G}_k$. This is achieved by using Algorithm \ref{alg:alpha_approx}, taking 
Algorithm~\ref{alg:VC3} as $\mathcal{A}$ and $p=p_{\gamma}$. We call the
resulting algorithm \textsc{$\alpha$-VC3}. 

\begin{algorithm}
	\caption{\sc VC3$_\gamma$} 
	\label{alg:VC3}
	\hspace*{\algorithmicindent} \noindent \textbf{Input:} An undirected graph $G$
	
	\begin{algorithmic}[1]	
		\If {$G$ has a vertex $v$ with degree $3$ or more \label{VC3:if}} 
		\State  Let $u_1, u_2, u_3$ be $3$ of $v$'s neighbors.
		\State
		With probability $\gamma$ set $S= \{v\}$ and $S=\{u_1, u_2, u_3\}$ with probability $1-\gamma$. 		\label{VC3:S}
		\State 
		Use a recursive call to evaluate $R=\text{VC3}_\gamma(G\setminus S)$, and return $R\cup S$. 
		\Else { the maximal degree in $G$ is not greater than $2$}
		\State Find an optimal cover $S$ of $G$ in polynomial time and return it.
		\EndIf	
	\end{algorithmic}
\end{algorithm}

\begin{algorithm}
	\caption{\sc $\alpha$-Approx} 
	\label{alg:alpha_approx}
\noindent \textbf{Input:} An undirected graph $G$, a parameter $k$, an algorithm $\mathcal{A}$ and a recurrence relation
 	 $p$. 
	
	\begin{algorithmic}[1]	
		\State Evaluate $r=p(\floor{\alpha k} , k)$ using dynamic programming.
		\State Execute  $\mathcal{A}(G)$ for $\ceil{r^{-1}}$ times. Return the minimal cover found. 
	\end{algorithmic}
\end{algorithm}

We note that if $G\in \mathcal{G}_k$ then \textsc{$\alpha$-VC3} returns a cover of size at most $\alpha k$ with constant probability.
Clearly, the running time of the algorithm is $\tO( (p_{\gamma}(\alpha k, k))^{-1} )$. We resort to Theorem \ref{thm:rec} to obtain a better understanding of the running time. 

It can be easily verified that the critical ratio (Definition~\ref{def:critical_ratio}) of each of the terms in \eqref{eq:VC3_rec} is equal to $1$. Thus,
for any $\alpha>1$ and  $\gamma\in (0,1)$, we can calculate the $\alpha$-branching numbers $M^{\alpha, \gamma}_1, M^{\alpha, \gamma}_2$
of $(\bb^1, \bk^1, \bgam^1),(\bb^2, \bk^2, \bgam^2)$, respectively, by numerically  solving the optimization problem \eqref{eq:alpha_num}.
Let $M^{\alpha, \gamma}= \max \{M^{\alpha, \gamma}_1, M^{\alpha, \gamma}_2\}$. 
Therefore, by Theorem \ref{thm:rec} we have $\lim_{k\rightarrow \infty} \frac{\log p_{\gamma}(\alpha k , k )}{k} = -M^{\alpha, \gamma}$.
Thus, for any $\eps>0$ and large enough $k$, it holds that 
$\frac{\log p_{\gamma}(\alpha k , k )}{k} > -M^{\alpha, \gamma} -\eps $,
and equivalently  $ (p_{\gamma}(\alpha k, k) ) ^{-1} < \exp(M^{\alpha, \gamma}+\eps)^k$.
We conclude that the running time 
of  \textsc{$\alpha$-VC3} is  $\tO( (p_{\gamma}(\alpha k, k))^{-1} ) = \tO(\exp(M^{\alpha, \gamma}+\eps)^k)$ for any $\eps>0$.

For any $\alpha>1$, we can numerically find  the value of $\gamma$ 
for which $M^{\alpha, \gamma}$ is minimal. Let $\gamma_\alpha$ be this value. Then, for any 
$\alpha>1$ algorithm  \textsc{$\alpha$-VC3} is a parameterized
random $\alpha$-approximation for Vertex Cover
with running time  $\tO(\exp(M^{\alpha, \gamma_\alpha}+\eps)^k)$ (for any $\eps>0$). 
For example, for $\alpha =1.5$ we get that \textsc{$\alpha$-VC3} has a running time of 
$\tO(1.04364^k)$. In Figure~\ref{fig:warmup} the value of $\exp(M^{\alpha, \gamma_\alpha})$ is presented as a function of $\alpha$. 
An overview of the methods used for the numerical optimizations is given in Section \ref{sec:numerical}.
 
\subsection{A Refined Analysis of Randomized  Branching}
\label{sec:VC3star}

Standard branching algorithms derive several simpler sub-instances from a given instance with a guarantee that
an optimal solution to one (specific yet unknown) of the sub-instances leads to an 
optimal solution. Therefore, the analysis is focused on this specific sub-instance and ignores the effect of other sub-instances on the 
optimum. This is not the case when using randomized branching 
for {\em approximation}, where the reduction in the minimal 
cover size by an incorrect branching can lead to an improved
running time, as we demonstrate below.

Consider the following observation.
 If $v$ is a vertex of degree exactly $3$ and the algorithm (e.g., Algorithm \ref{alg:VC3}) selects its three neighbors $\{u_1,u_2,u_3\}$ to the cover, then even if none of $\{u_1,u_2,u_3\}$ belongs to an optimal cover, the size of the optimal cover decreases by one (as $v$ is part of an optimal cover, but is no more required). This observation can be extended to any fixed degree of $v$. 

Algorithm \ref{alg:VC3star} takes advantage of this property by using  a different probability for selecting $v$ or its neighbors depending on its degree, as well as selecting all the neighbors of $v$ in case the degree of $v$ is smaller than $\Delta$, for some fixed $\Delta \in \mathbb{N}$. 

{\tiny 
\begin{algorithm}
	\caption{\sc VC3*$_{\gamma_3, \gamma_4, \ldots, \gamma_\Delta}$} 
	\label{alg:VC3star}
	\hspace*{\algorithmicindent} \noindent \textbf{Input:} An undirected graph $G$
	
	\begin{algorithmic}[1]	
		\If {$G$ has a vertex $v$ with degree $3$ or more}
		\State Let $d= \min \{ deg(v), \Delta\}$.
		\State %
		If $d<\Delta$ let $U=N(v)$, otherwise let $U$ be a subset of $N(v)$ of size exactly $\Delta$.
		\State
		With probability $\gamma_d$ set $S= \{v\}$ and $S=U$ with probability $1-\gamma_d$. 
		\State 
		Use a recursive call to evaluate $R=\text{VC3*}_{\gamma_3,\gamma_4, \ldots, \gamma_\Delta} (G\setminus S)$, and return $R\cup S$. 
		\Else { the maximal degree in $G$ is $2$}
		\State Find an optimal cover $S$ of $G$ in polynomial time and return $S$.
		\EndIf	
	\end{algorithmic}
\end{algorithm}}

Clearly, Algorithm \ref{alg:VC3star} is polynomial and always returns a cover of $G$.
Similar to Algorithm \ref{alg:VC3}, it can be shown that the probability Algorithm \ref{alg:VC3star} returns a solution of size $b$, given a 
graph $G\in \mathcal{G}_k$, is at least $p(b,k)$, where $p$ is given by
\begin{equation}
\label{eq:vcstar_rec}
p(b,k) = \min \begin{cases}
	\gamma_d\cdot p(b-1, k-1) ~+~ (1-\gamma_d)\cdot  p(b-d, k-1) &  3\leq  d < \Delta \\ 
	\gamma_d \cdot p(b-1, k) ~+~ (1-\gamma_d)\cdot  p(b-d, k-d) &  3\leq  d <\Delta\\ 
	\gamma_\Delta \cdot  p(b-1, k-1) ~+~ (1-\gamma_\Delta) \cdot p(b-\Delta, k) &  \\ 
	\gamma_\Delta \cdot  p(b-1, k) ~+~ (1-\gamma_\Delta) \cdot p(b-\Delta, k-\Delta) &  \\ 
\end{cases}
\end{equation}
with $p(b,k)=0$ for $b<0$ and $p(b,k)= 1 $ for $b\geq 0$ and $k\leq0$. Observe $p$ depends on $\gamma_3, \gamma_4, \ldots, \gamma_{\Delta}$. 
Clearly, $p$ is a composite recurrence relation of the $N=2(\Delta -2)$ terms (triplets) 
\begin{equation}
	\label{eq:terms_vcstar}
\begin{aligned}
	\terms= &\{ ~((1,d), (1,1), (\gamma_d, 1-\gamma_d))~|~ 3\leq d < \Delta~\} ~~\cup\\ &  
	 \{~((1,d), (0,d), (\gamma_d, 1-\gamma_d))~|~ 3\leq d < \Delta ~\}~~ \cup\\
	&\{~ ((1, \Delta), (1, 0), (\gamma_\Delta, 1-\gamma_\Delta)),~
	((1, \Delta), (0, \Delta), (\gamma_\Delta, 1-\gamma_\Delta)) ~\} \mbox{.}
\end{aligned}
\end{equation}
  
And as before, we can derive an approximation algorithm 
by using Algorithm \ref{alg:alpha_approx} with Algorithm \ref{alg:VC3star} as $\mathcal{A}$ and $p$ as defined in \eqref{eq:vcstar_rec}. Let 
 \textsc{$\alpha$-VC3*} be this algorithm. Clearly, \textsc{$\alpha$-VC3*} is a random parameterized $\alpha$-approximations
algorithm for Vertex Cover.

Arbitrarily, we select $\Delta=100$. Observe that the critical ratio of all the terms in \eqref{eq:vcstar_rec} is $1$. 
As before, for every $1<\alpha<2$ and $1\leq d <\Delta$ we can find the value $\gamma_{\alpha,d}$ such that the maximal 
$\alpha$-branching number of  $( (1,d), (1,1), (\gamma_{\alpha,d}, 1-\gamma_{\alpha,d}))$ and
 $( (1,d), (0,d), (\gamma_{\alpha,d}, 1-\gamma_{\alpha,d}))$ is minimal. Let $M_{\alpha, d}$ be this value. Also, 
 we can find the value $\gamma_{\alpha,\Delta}$ such that the maximal 
 $\alpha$-branching number of  $( (1,\Delta), (1,0), (\gamma_{\alpha,\Delta}, 1-\gamma_{\alpha,\Delta}))$ and
 $( (1,\Delta), (0,\Delta), (\gamma_{\alpha,\Delta}, 1-\gamma_{\alpha,\Delta}))$ is minimal and let $M_{\alpha, \Delta}$ be this value. Let $M_\alpha$ be the maximal branching
 number of these triplets for a given value of $\alpha$ and $3\leq d\leq \Delta$ ($M_\alpha = \max_{3\leq d \leq \Delta} M_{\alpha, d}$). Then by Theorem \ref{thm:rec}, for any $\eps>0$ and large enough
 $k$, it holds that $p(\alpha  k  , k )\geq \exp \left(  -M_\alpha -\eps \right)$, and therefore the running time of  \textsc{$\alpha$-VC3*} is $\tO\left(\exp \left(M_\alpha+\eps \right)^k  \right)$. For $\alpha = 1.5$ the running time is $\tO( 1.0172^k)$.  
 Figure \ref{fig:warmup} shows $\exp( M_\alpha)$ as a function of $\alpha$.

\begin{figure}
	
	\centering
	\skipfig{
		\input{figure5.tex}
	}
	\caption{Results of Section \ref{sec:warmup}. A dot at $(\alpha,c)$ means that the respective algorithm provides $\alpha$-approximation for Vertex Cover with running time $\tO(c^k)$ or $\tO\left( (c+\eps)^k\right)$ for every $\eps>0$. }
	\label{fig:warmup}
\end{figure}

\subsection{Further Insights from using $\alpha$-Branching Numbers}
\label{sec:eVC3star}

In the context of classic branching algorithms, the running time of an algorithm is dominated by the highest branching number of the branching rules used by the algorithm (see, e.g., \cite{Nied06,CFKLMPPS15}). 
This observation is commonly used in the design of (exact) branching algorithms.
Theorem \ref{thm:rec} asserts that essentially the same  holds for 
 parameterized approximation using randomized branching.
 In the following we show how to use it to improve the running time of \textsc{VC3*}.

Consider  algorithm \textsc{$\alpha$-VC3*} of Section \ref{sec:VC3star}, whose time complexity is the inverse of the function in~\eqref{eq:vcstar_rec}. 
As an  example, for $\alpha = 1.5$ we can sort the values 
$M_{\alpha,d}$ to understand which value of $d$ dominates 
the running time. We show the nine highest values in the table below (the values are rounded up).
\begin{center}
	\begin{tabular}{ c|| c |c |c | c | c | c | c | c|  c|} 
		 $d$ & 5 & 6 &  4 &  7 &  8 &  9 & 10 & 11&  3  \\
		\hline
		 $\exp\left(M_{1.5, d}\right)$ & 1.0172& 1.0166 & 1.0165 & 1.0157 & 1.0147 &
		 1.0137 & 1.0129 & 1.0121 & 1.0119 
	\end{tabular}
\end{center}

This suggests that avoiding branching over degree $5$ vertices leads to an $\tO(1.0166^k)$ algorithm. 
In fact, tools to do so have already been used in previous works, such as \cite{NR99}. 
The basic idea is that as long as there is a vertex  $v$ in 
the graph of degree different than $5$ the algorithm branches on 
it. If all vertices in the graph are of degree $5$ the algorithm has to perform a branching on a degree $5$ vertex; however, such event cannot happen more than once along a branching path. Therefore, the algorithm can use non-randomized branching in this case while maintaining a polynomial running time.

\begin{algorithm}
	\caption{\sc EnhancedVC3*} 
	\label{alg:EnVC3}
	 \noindent \textbf{Input:} An undirected graph $G=(V,E)$ \\ 
	 \noindent \textbf{Configuration Parameters:} The algorithm depends on several parameters that should be configured. These include 
	$\Delta \in \mathbb{N}$, $\delta\in \mathbb{N}$, $2\leq \delta <\Delta$, and  $\gamma_2, \ldots, \gamma_{\delta-1} , \gamma_{\delta+1} , \ldots, \gamma_\Delta \in (0,1)$. 
	\begin{algorithmic}[1]	
		\State If the empty set is a cover of $G$ return $\emptyset$.
		\If  {$G$ is not connected}
		\State Let $C$ be a %
		component of $G$. 
		Return $\text{\textsc{EnhancedVC3*}}(C)\cup \text{\textsc{EnhancedVC*}}(G- C)$.  \label{eVC3:split}
		\EndIf
		\State \label{eVC:deg2}If $G$ has a vertex $v$ of degree $1$, let $u$ be its neighbor. Return $\text{\sc EnhancedVC3*}(G\setminus \{u\})\cup\{u\}$. %
		\If {$G$ has a vertex $v$ of degree $d\neq \delta$}
		\State Let $U=N(v)$ if $d<\Delta$ and $U\subseteq N(v)$ with $|U|=\Delta$ otherwise.
		\State Let $S=\{v\}$ with probability $\gamma_d$ and $S=U$ otherwise.
		\State Return $\text{\sc EnhancedVC3*}(G\setminus S)\cup S$
		\label{eVC3:randbranch} 
		\EndIf
		\State If $G$ is a regular graph (of degree $\delta$), select an arbitrary edge $(v_1,v_2) \in E$. Evaluate $S_1= \text{\textsc{EnhancedVC3*}}(G\setminus \{v_1\}) \cup \{v_1 \}$  and $S_2= \text{\textsc{EnhancedVC3*}}(G\setminus \{v_2 \}) \cup \{v_2 \}$. Return the smaller set between $S_1$ and $S_2$.  \label{eVC:regular}
		
	\end{algorithmic}
\end{algorithm}

Consider Algorithm \ref{alg:EnVC3}. 
 It can be shown that its running time is polynomial (similar to the proof of Lemma \ref{lem:bvc_poly} in Section \ref{sec:VC}).
 The 
probability that the algorithm returns a solution of size $b$, given that $G\in \mathcal{G}_k$, is at least   
\begin{equation}
\label{eq:EVC3_rec}
p(b,k) = \min \begin{cases}
\gamma_d\cdot p(b-1, k-1) ~+~ (1-\gamma_d)\cdot  p(b-d, k-1) &  2\leq  d < \Delta, d\neq \delta \\ 
\gamma_d \cdot p(b-1, k) ~+~ (1-\gamma_d)\cdot  p(b-d, k-d) &  2\leq  d <\Delta, d\neq \delta, \\ 
\gamma_\Delta \cdot  p(b-1, k-1) ~+~ (1-\gamma_\Delta) \cdot p(b-\Delta, k) &  \\ 
\gamma_\Delta \cdot  p(b-1, k) ~+~ (1-\gamma_\Delta) \cdot p(b-\Delta, k-\Delta) &   \\ 
\end{cases}
\end{equation}
As before, we use the lower bound derived from the recurrence relation to obtain a random parameterized $\alpha$-approximation algorithm  with running time $\tO\left( \frac{1}{p(\alpha k , k)}\right)$ 
by using Algorithm \ref{alg:alpha_approx} with Algorithm~\ref{alg:EnVC3} as $\mathcal{A}$  and the recurrence relation $p$ as given in~\eqref{eq:EVC3_rec}.
Let \textsc{$\alpha$-EnhancedVC3*} be this algorithm.

As in previous cases, the critical ratio of all the terms in \eqref{eq:EVC3_rec} is equal to $1$. 
For any $1<\alpha<2$ and $2\leq d \leq \Delta$ we can find 
the value $M_{\alpha,d}$ as in Section \ref{sec:VC3star}. 
If $\delta' = \argmax_{2\leq d \leq N } M_{\alpha,d} \neq  \Delta$ we 
can set $\delta =\delta'$; therefore, the run time of 
\textsc{$\alpha$-EnhancedVC3*} is $\tO(\exp(M_{\alpha}+\eps)^k)$ when $M_{\alpha}$ is the {\em second} largest
number of $M_{\alpha,2},\ldots, M_{\alpha, \Delta-1}$ (or $M_{\alpha, \Delta}$ if $\delta'=\Delta$). The value of $\exp(M_{\alpha})$ as a function of $\alpha$ is shown in Figure~\ref{fig:warmup}.
For $\alpha=1.5$ the run time of the algorithm is $\tO(1.01657^k)$.  This is the best running time for the specific approximation ratio presented in this paper.  The following table compares the running times of {\sc $\alpha$-EnhanvedVC3*} and {\sc $\alpha$-VC3*} for several values of $\alpha$.

\begin{center}
	\begin{tabular}{ c|| c |c |c | c | c | c | c | c|  c|} 
		$\alpha$ & $1.2$ &  $1.3$ &  $1.4$ &  $1.5$ &  $1.6$ & $1.7$  \\
		\hline
		{\sc $\alpha$-VC3*} &
		$1.12548^k$ &
		$1.06804^k$ &
		$1.03501^k$ &
		$1.01713^k$ &
		$1.00754^k$ &
		$1.00280^k$ \\
		\hline 
		{\sc $\alpha$-EnhancedVC3*} &
		
		$1.12386^k$ &
		$1.06420^k$ & 
		$1.03320^k$ &
		$1.01657^k$ &
		$1.00751^k$ &
		$1.00277^k$  
	\end{tabular}
\end{center}

\section{Application to $3$-Hitting Set}
\label{sec:HS}

\begin{figure}
	\centering
	\begin{subfigure}{.5\textwidth}
		\centering

		\skipfig{
				\begin{tikzpicture}
\node (v1) at (0,2) {};
\node (v2) at (1.5,3) {};
\node (v3) at (4,2.5) {};
\node (v4) at (0,0) {};
\node (v5) at (3,0.5) {};

\begin{scope}[fill opacity=0.8]
\filldraw[fill=green!10] ($(v1)+(-1,0)$) 
to[out=90,in=180] ($(v2) + (0,0.5)$) 
to[out=0,in=90] ($(v3) + (1,0)$)
to[out=270,in=0] ($(v2) + (1,-0.8)$)
to[out=180,in=270] ($(v1)+(-1,0)$);
\filldraw[fill=red!10]($(v5) +(+1,0.2)$)
to[out=270,in=0] ($(v5) + (+0.0,-0.5)$) 
to[out=180,in=270] ($(v5) + (-0.5,-0)$) 
to[out=90,in=180] ($(v3) + (0,+0.5)$) 
to[out=0,in=90] ($(v3) + (+0.5,+0.2)$) 
to[out=270,in=90] ($(v5) + (1, 0.2)$) ;
\filldraw[fill=blue!10] ($(v1)+(0,+0.5)$) 
to[out=45,in=180] ($(v2) + (0,0.2)$) 
to[out=0,in=90] ($(v4) + (0.5,0)$) 
to[out=270,in=0] ($(v4) + (0,-0.5)$) 
to[out=180,in=225]($(v1) + (0,+0.5)$);
\end{scope}

\foreach \v in {1,2,...,5} {
	\fill (v\v) circle (0.1);
}

\fill (v1) circle (0.1) node [right] {$v_1$};
\fill (v2) circle (0.1) node [below left] {$v_2$};
\fill (v3) circle (0.1) node [left] {$v_3$};
\fill (v4) circle (0.1) node [below] {$v_4$};
\fill (v5) circle (0.1) node [below right] {$v_5$};

\end{tikzpicture}
		}
	\caption{}
	\label{fig:hg}
	\end{subfigure}%
	\begin{subfigure}{.5\textwidth}
		\centering
			\begin{tikzpicture}
\node (v1) at (0,2) {};
\node (v2) at (1.5,3) {};
\node (v3) at (4,2.5) {};
\node (v4) at (0,0) {};

\begin{scope}[fill opacity=0.8]
\filldraw[fill=green!10] ($(v2)+(-1,0)$) 
to[out=90,in=180] ($(v2) + (0,0.5)$) 
to[out=0,in=90] ($(v3) + (1,0)$)
to[out=270,in=0] ($(v2) + (1,-0.8)$)
to[out=180,in=270] ($(v2)+(-1,0)$);
\filldraw[fill=blue!10] ($(v2)+(0,+0.3)$) 
to[out=0,in=90] ($(v4) + (0.5,0)$) 
to[out=270,in=0] ($(v4) + (0,-0.5)$) 
to[out=180,in=180]($(v2) + (0,+0.3)$);
\end{scope}

\foreach \v in {2,3,4} {
	\fill (v\v) circle (0.1);
}

\fill (v2) circle (0.1) node [below left] {$v_2$};
\fill (v3) circle (0.1) node [left] {$v_3$};
\fill (v4) circle (0.1) node [below] {$v_4$};

\end{tikzpicture}
		\caption{}
		\label{fig:ng}
	\end{subfigure}
	\caption{An example of a neighbors graph. A hypergraph $H$ is illustrated in \ref{fig:hg}. The neighbors graph of $v_1$, $\NeGra(v_1)$, is given in \ref{fig:ng}. }
	\label{fig:ng_example}
\end{figure}
In this section we present a parameterized approximation algorithm for $3$-Hitting Set. 
The algorithm 
draws some ideas from 
VC3* (see Section \ref{sec:VC3star}), which relies on two basic
observations. The first is that for any vertex $v$ of a graph $G$ and a vertex cover $S$, either  $v\in S$ or $N(v)\subseteq S$. The second observation
 is that, even if $v$  is in a minimum vertex cover, removing $N(v)$ from the graph decreases the size of a minimum cover at least by one. 

Consider the following analog of the above statement for $3$-Hitting Set. Given a $3$-hypergraph $H=(V,E)$, 
for any $v\in V$ define the {\em neighbors graph} of $v$ as the hypergraph $\NeGra(v)= (V_v, E_v)$ with $V_v=\left\{u\in V\setminus \{v\} \mid|~ \exists e\in E: u,v \in e\right\}$ and $E_v =\{ e\setminus \{v\}|~ e \in E, v\in e \}$ (see an example in Figure \ref{fig:ng_example}). Clearly, for every $e\in \NeGra(v)$ it holds that $|e|\leq 2$ (the neighbors graph is essentially a standard undirected graph with the addition of single node edges). Similar to the case of Vertex Cover, for any $v\in V$ and a hitting set $S$ of $H$, either $v\in S$ or there is a  minimal hitting set $T$ of $\NeGra(v)$ such that $T \subseteq S$.\footnote{A set $T$ is a minimal hitting set of a hypergraph $H$ if $T$ is a hitting set and no strict subset $T'\subsetneq T$ is also a hitting set of $H$.} Also, if $v$ belongs to a minimum hitting set of $H$ then removing a minimal hitting set of $\NeGra(v)$  from $H$  decreases the minimum hitting set size at least by $1$. 

An {\em isomorphism} between two hypergraphs $H_1=(V_1, E_1)$ and $H_2=(V_2,E_2)$ is a bijection $\varphi:V_1\rightarrow V_2$ such that $E_2= \{ \varphi(e)~|~e\in E_1\}$; we use the notation $\varphi(S) = \{\varphi(s)\,|\,s\in S\}$.
That is, $\varphi$ maps an edge in $E_1$ to an edge in $E_2$, and $\varphi^{-1}$, the inverse of $\varphi$, maps an edge in $E_2$ to an edge in $E_1$. 
 We say $H_1$ and $H_2$ are {\em isomorphic} if there is an isomorphism between $H_1$ and $H_2$. 

Given a vertex $v\in V$ we define $\deg_H(v)=\left|\{e\in E\,|\, v\in E\}\right|$ to be the number of hyperedges of $H$ which contain $v$. If $H$ is known by context then we use $\deg(v)=\deg_H(v)$. 
Let $v\in V$ such that $\{v\}\notin E$, then
the neighbors graph of $v$ admits a specific structure. It has up to $2\cdot\deg(v)$ vertices, exactly $\deg(v)$ edges (there may be edges with a single vertex) and no isolated vertices. Therefore, the number of possible graphs $\NeGra(v)$ for vertices of bounded degree is finite up to isomorphism.

For some fixed $\Delta \in \mathbb{N}$, we 
construct a set $\cG_\Delta$ of hypergraphs, such that $\NeGra(v)$ is isomorphic to a hypergraph in $\cG_\Delta$ for any $v$ with $\deg(v)\leq \Delta$. 
Let $\cG'_\Delta$ be the set of  hypergraphs $(V,E)$ with no isolated vertices, such that $V\subseteq \{1, 2, \ldots, 2 \Delta \}$, $|E| \leq \Delta$, and $|e| \leq 2$ $\forall e\in E$.
Let $\cG_\Delta \subseteq \cG'_\Delta$ be a minimal set of hypergraphs such that for any $G' \in \cG'_\Delta$ there is $G\in \cG_\Delta$ that is isomorphic to $G'$. Thus,
$\cG_\Delta$ can be derived from $\cG'_\Delta$ by removing isomorphic hypergraphs.  It is easy to see that the set $\cG_\Delta$ is finite.
Also, for every $G\in \cG_{\Delta}$ let $C^{G}_{1}, \ldots, C^{G}_{m^G}$ be all the minimal hitting sets of $G$.
Clearly, the set $\{ C^G_{i} |~ G\in \cG_{\Delta}, 1\leq i \leq m^G \}$ has a finite cardinality.

 We need one more technical definition before introducing our algorithm. Given a $3$-hypergraph $H=(V,E)$, a vertex $v\in V$ and $F\subseteq E$ such that $v\in e$ for any $e\in F$, define the {\em induced graph}  of $v$ and $F$ as the hypergraph $\Ind(v,F) = (V_{v,F},E_{v,F})$ with 
 $V_{v,F} = \left\{ u |~ \exists e\in F: u\in e \setminus\{ v \} \right\}$ and 
$E_{v,F} = \{e\setminus \{v\}|~e \in F\}$. By definition, it also holds that the cardinality of edges in $\Ind(v,F)$ is at most $2$ and $\Ind(v,F)$ has no isolated vertices (a vertex $u$ may only be contained in the hyperedge $\{u \}$). It follows that 
$\NeGra(v)= \Ind(v, \{e\in E|~ v\in e\})$. 
Our algorithm uses
induced graphs to handle vertices of degree larger than $\Delta$. Similar to the neighbors graph, the induced graph $\Ind(v,F)$ satisfies the following. Let $S$ be a hitting set of the hypergraph $H$, then
either $v\in S$ or there is a hitting  set $T$ of $\Ind(v,F)$ such that $T \subseteq S$.

{\tiny 
	\begin{algorithm}
		\caption{\sc 3HS} 
		\label{alg:3HS}
		\noindent \textbf{Input:} A $3$-hypergraph $H=(V,E)$\\
		\noindent \textbf{Configuration Parameters:} $\bgam^G \in \nonneg^{m^G+1}$ with $\sum_{i=1}^{m^G+1} \bgam^G_i = 1$ for any $G\in \cG_{\Delta}$. \\
		\noindent \textbf{Notation:} Define $H\setminus U=(V',E')$ with $V'=V\setminus U$ and $E' = \{e\in E|~ e\cap U = \emptyset\}$
		
		\begin{algorithmic}[1]
			\State \label{3hs:empyset}If the empty set is a hitting set of $H$ return $\emptyset$. 
			\State \label{3hs:empyset}If there is $\{v\}\in E$ then return 
			{\sc 3HS}$(H\setminus \{v\})\cup\{v\}$.
			\State \label{3hs:pickv}Pick an arbitrary vertex $v$. If $\deg(v)\leq \Delta$ set $N=\NeGra(v)$. Otherwise, set $N=\Ind(v,F)$ with an arbitrary set $F\subseteq E$ of $\Delta$ edges such that $\forall e\in F: v\in e$.
			\State \label{3HS:iso}Find a hypergraph $G\in \cG_\Delta$ such that $N$ and $G$ are isomorphic. Let $\varphi$ be a vertex isomorphism function from $G$ to $N$.
			\State  
			\label{3HS:rand}Select $S=\{v\}$ with 
			probability $\bgam^G_{m^G+1}$ and 
			$S=\varphi(C^G_i)$ with probability $\bgam^G_i$ for $1\leq i \leq m^G$. 
			Return {\sc 3HS}$(H\setminus S)\cup S$.
		\end{algorithmic}
\end{algorithm}}

The above observations are used to derive Algorithm \ref{alg:3HS}. It is easy to see that the algorithm  always returns a hitting set of the input hypergraph $H$. Also, the size of $H$ strictly decreases between recursive calls, and the processing time of each recursive call is polynomial. Therefore, the algorithm has polynomial running time (note that since $\Delta$ is a fixed constant, finding a graph $G$ isomorphic to $N$ takes  constant time). It is also easy to verify the algorithm indeed always finds a hypergraph
 $G\in \cG_{\Delta}$ isomorphic to $N$ in Line~\ref{3HS:iso}. 

Consider the following recurrence relation:
\begin{equation}
\label{eq:3HS}
\begin{aligned}
&p(b,k) = &\min
\begin{cases}
\bgam^G_{m^G+1} \cdot p\left(b-1, k\right) + \\
~~~~~+ \sum_{i=1}^{m^G} \bgam^G_i \cdot p\left(b-|C^G_i|, k - |C^G_i \cap C^G_j| \right) & \forall G\in \cG_\Delta, 1\leq j \leq m^G\\
\bgam^G_{m^G+1} \cdot  p\left( b-1, k-1\right) +\\   ~~~~~+\sum_{i=1}^{m^G} \bgam^G_i \cdot p\left(b- |C^G_i| , k -\one_{\|G\|<\Delta}\right) 
 &\forall G\in \cG_\Delta, 1\leq j \leq m^G \\
 p(b-1, k-1) &  
\end{cases} 
\end{aligned}
\end{equation}
 Also, $p(b,k) = 0$ for $b<0$, and $p(b,k) =1$ for $b\geq 0$ and $k\leq 0$. 
 Let  $\|G\|$ be the number of edges in $G$. We set $\one_{\|G\|<\Delta} =1 $ if $\|G\|<\Delta$ and $\one_{\|G\|<\Delta}=0$ otherwise.
 Let $P(b,H)$ be the probability that Algorithm \ref{alg:3HS} returns a hitting set of size $b$ or less, given the $3$-hypergraph $H$. With a slight abuse of notation, let $P(b,k)$ the minimal (infimum) value of $P(b,H)$ for a $3$-hypergraph $H$ which has a hitting set of size $k$ or less. The next lemma follows easily
 from the above discussion. We give a formal proof for completeness.
 
 \begin{lemma}
 \label{lem:3HS_P_to_p}
 For every $b\in \mathbb{Z}$ and $k\in \mathbb{N}$, $P(b,k) \geq p(b,k)$.
 \end{lemma}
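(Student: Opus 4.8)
The plan is to prove the inequality $P(b,k) \geq p(b,k)$ by induction on a suitable well-founded ordering of the pairs $(b,k)$, mirroring the recursive structure of Algorithm~\ref{alg:3HS}. First I would fix a $3$-hypergraph $H$ with a hitting set of size at most $k$ and track one recursive call of \textsc{3HS}. The base cases are handled directly: if $k=0$ the empty set is already a hitting set (line~\ref{3hs:empyset} returns $\emptyset$), so $P(b,H)=1=p(b,0)$ for $b\geq 0$; and if $b<0$ there is nothing to prove since $p(b,k)=0$. Also, if $H$ has a singleton edge $\{v\}$, the algorithm recurses on $H\setminus\{v\}$, which still has a hitting set of size at most $k-1 \le k$ (any hitting set of $H$ must contain $v$), and by induction the returned set has size at most $b$ with probability at least $p(b-1,k-1)$, which is dominated by the third branch of \eqref{eq:3HS}; an easy monotonicity argument then closes this case.

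The main step is the branching case in lines~\ref{3hs:pickv}--\ref{3HS:rand}. Here I would use the two structural observations recalled just before the algorithm: for the vertex $v$ picked in line~\ref{3hs:pickv} and the graph $N$ (either $\NeGra(v)$ or $\Ind(v,F)$), any hitting set $S^*$ of $H$ satisfies either $v\in S^*$, or $S^*$ contains a hitting set of $N$, which in turn contains a \emph{minimal} hitting set of $N$ — hence, via the isomorphism $\varphi$ from $G\in\cG_\Delta$ to $N$, a set of the form $\varphi(C^G_j)$ for some $1\le j\le m^G$. I would fix a minimum hitting set $S^*$ of $H$ (of size at most $k$) and split on which of these two cases holds for it. If $v\in S^*$: conditioned on the algorithm choosing $S=\{v\}$ (probability $\bgam^G_{m^G+1}$), the residual hypergraph $H\setminus\{v\}$ has a hitting set $S^*\setminus\{v\}$ of size at most $k-1$; conditioned on the algorithm choosing $S=\varphi(C^G_i)$, I use that removing any hitting set of $N$ from $H$ drops the minimum hitting set size by at least $\one_{\|G\|<\Delta}$ (this is exactly the ``even if $v$ is in a minimum hitting set'' observation, which applies when $N=\NeGra(v)$, i.e.\ $\|G\|<\Delta$), so $H\setminus\varphi(C^G_i)$ has a hitting set of size at most $k-\one_{\|G\|<\Delta}$. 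Applying the inductive hypothesis to each branch and summing with the weights $\bgam^G$ gives $P(b,H)$ at least the value of the \emph{second} family of expressions in \eqref{eq:3HS} (with $j$ the index such that $\varphi(C^G_j)\subseteq S^*$, which exists since a minimal hitting set of $N$ sits inside $S^*$ when $v\notin S^*$ — but in this subcase $v\in S^*$, so I only need $j=j$ to name a valid term and the bound still holds for that $j$).

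If instead $v\notin S^*$, then $S^*$ contains a minimal hitting set of $N$, so there is $j$ with $\varphi(C^G_j)\subseteq S^*$. Now I condition on each branch of line~\ref{3HS:rand}: choosing $S=\{v\}$ yields a residual hypergraph with a hitting set $S^*\cup\{v\}$ — but a cleaner bound is that $H\setminus\{v\}$ has a hitting set of size at most $k$ (namely $S^*$ minus anything it had on $v$'s removed edges; more simply $S^*$ restricted appropriately), matching the first argument of the first family $p(b-1,k)$; and choosing $S=\varphi(C^G_i)$ gives a residual hypergraph $H\setminus\varphi(C^G_i)$ in which $S^*\setminus\varphi(C^G_i) \supseteq S^*\setminus C^G_j$-image has size at most $k-|C^G_i\cap C^G_j|$, because the overlap of the chosen set with the ``correct'' set $\varphi(C^G_j)\subseteq S^*$ is removed from a hitting set. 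Summing over $i$ with weights $\bgam^G$ produces precisely the first family of expressions in \eqref{eq:3HS}. In both of the two cases ($v\in S^*$ or not) we obtain $P(b,H)$ bounded below by one of the arguments of the outer $\min$ in \eqref{eq:3HS}, hence by $p(b,k)$; taking the infimum over all valid $H$ gives $P(b,k)\ge p(b,k)$.

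I expect the main obstacle to be bookkeeping the size bounds on residual hypergraphs precisely — in particular verifying the $k-|C^G_i\cap C^G_j|$ drop in the first family and correctly identifying which term of \eqref{eq:3HS} applies in each subcase, since the algorithm does not know $S^*$ and we must argue for \emph{every} possible outcome of the random choice, matching the structure of the recurrence term by term. A secondary subtlety is the well-founded induction itself: the correct measure is a lexicographic or summed quantity that strictly decreases in every recursive call (the size $|V|+|E|$ of $H$ strictly decreases, as noted in the paragraph after the algorithm), so I would phrase the induction on $|V(H)|+|E(H)|$ rather than on $(b,k)$, while carrying $k$ as the guaranteed hitting-set-size bound; the monotonicity of $p$ in $k$ (immediate from \eqref{eq:3HS} and the base cases) then lets me pass from ``hitting set of size at most $k'\le k$'' to ``$p(\cdot,k)$'' wherever needed.
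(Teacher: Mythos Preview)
Your proposal is correct and follows essentially the same approach as the paper: fix a hitting set $T$ of size at most $k$, split on whether $v\in T$, and in each subcase bound the residual hitting-set size branch-by-branch to match one of the terms in \eqref{eq:3HS}. The paper carries out exactly this case analysis.

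Two minor remarks. First, the paper inducts on $b$ alone (every branch removes at least one vertex from the budget), which is slightly cleaner than your proposed induction on $|V(H)|+|E(H)|$; either works. Second, you do not actually need monotonicity of $p$ in $k$: if you phrase the induction hypothesis as ``for every smaller instance $H'$ and every $k'$ such that $H'$ has a hitting set of size at most $k'$, $P(b',H')\ge p(b',k')$,'' then in each branch you may directly instantiate $k'$ to the value appearing in the corresponding term of \eqref{eq:3HS} (e.g.\ $k-|C^G_i\cap C^G_j|$), since a hitting set of size at most some smaller number is a fortiori of size at most $k'$. This is exactly what the paper does.
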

\begin{proof}	
We prove the claim by induction on $b$.
For $b<0$ we have $P(b,k)=0=p(b,k)$, therefore the claim holds. 
For $b\in \mathbb{N}$, assume the claim holds for any smaller value of $b$. Let $k\in \mathbb{N}$, and $H$ a $3$-hypergraph with a hitting set $T$, $|T|\leq k$. 
If the algorithm returns $\emptyset$ (Line~\ref{3hs:empyset} of the algorithm)  then $P(b,H) =1\geq p(b,k)$.   Also, if there is an edge $\{v\} \in E$ then $v\in T$ (otherwise it is not an hitting set), and therefore $T\setminus \{v\}$ is a hitting set of $H\setminus\{v\}$. Thus, 
$$P(b,H)\geq P(b-1, H\setminus\{v\}) \geq 
P(b-1,k-1)\geq p(b-1, k-1) \geq p(b,k).$$

Otherwise, let $v$ be the vertex selected in Line \ref{3hs:pickv} of the algorithm, let $N$ be the selected hypergraph, $G\in \cG_{\Delta}$ the hypergraph isomorphic to $N$, $\varphi$ the vertex isomorphism from $G$ to $N$, and $S$ the randomly selected set in Line~\ref{3HS:rand}.

If $v\in T$, note that the set $T\setminus \{v\}$ is a hitting set of  $H\setminus \{v\}$; thus, $H\setminus \{v\}$ has a hitting set of size
 $ k-1$ (or less). Also, if it further holds that  $\|G\|<\Delta$ then $N = \NeGra(v)$. In this case, we have 
 that $T\setminus \{v\}$ is a hitting set of $H\setminus \varphi(C^G_i)$ for all $1\leq i \leq m^G$.
 Let $e$ be an edge in $H\setminus \varphi(C^G_i)$. If $v\in e$ then $e\setminus \{v\}$ is an edge in $N$. As $\varphi(C^G_i)$ is a hitting set of $N$  we have $e\cap \varphi(C^G_i)\neq \emptyset$; thus, $e$ cannot be an edge in $H\setminus \varphi(C^G_i)$. If $v \notin e$ then since $e\cap T \neq \emptyset$, we also have $e\cap (T\setminus\{v\}) \neq \emptyset$. 
 It follows that the probability Algorithm~\ref{alg:3HS} returns a hitting set of size $b$ or less given $H$ is at least
 \begin{equation*}
 	\begin{aligned}
 	P(b,H) \geq 
 	 &\bgam^G_{m^G+1} \cdot  P(b-1, H\setminus\{v\}) + \sum_{i=1}^{m^G } \bgam^G_i  \cdot P\left(b-|C^G_i|, 
 	 H\setminus \varphi(C^G_i) \right)\\
 \geq&\bgam^G_{m^G+1}\cdot  P(b-1, k-1) + \sum_{i=1}^{m^G } \bgam^G_i \cdot  P\left(b-|C^G_i|, k-\one_{\|G\|<\Delta}\right)\\
 \geq  &\bgam^G_{m^G+1} \cdot  p(b-1, k-1) + \sum_{i=1}^{m^G } \bgam^G_i \cdot  p\left(b-|C^G_i|, k-\one_{\|G\|<\Delta}\right)
\geq p(b,k).
\end{aligned}
 \end{equation*}

It remains to handle the case where $v\notin T$. Let
$F$ be the set of edges selected in Line \ref{3hs:pickv} of the algorithm if $\deg(v)>\Delta$, and 
 $F= \{e\in E| v\in e \}$ if $\deg(v)\leq \Delta$ . Then $N=\Ind(v,F)=(V_{v,F}, E_{v,F})$.  For any $e\in E_{v,F}$ it holds that $e\cup \{v\} \in F$; therefore, $e\cap T = (e\cup \{v\})\cap T\neq \emptyset$. Thus, $T$ contains a set $T_v\subseteq T$ such that $T_v$ is a hitting set of $N$. W.l.o.g., we may assume that $T_v$ is a minimal hitting set. Then $\varphi^{-1}(T_v)$ is a minimal vertex cover of $G$. Hence, there is $1\leq j \leq m^G$ such that $\varphi^{-1}(T_v)= C^G_j$, and equivalently $T_v = \varphi(C^G_j)$. 

The hypergraph $H\setminus S$ has a hitting set of size $|T \setminus (T\cap S)| \leq k - |T \cap S|$. 
For $S=\{v\}$ we have $|T\cap S|= |\emptyset|=0$, and 
for $S=\varphi(C^G_i)$,
$$|T\cap S| \geq |T_v \cap S| = | \varphi(C^G_j) \cap \varphi(C^G_i)| = |C^G_j \cap C^G_i|.$$
Therefore,
\begin{equation*}
\begin{aligned}
P(b,H)\geq 
&\bgam^G_{m^G+1} \cdot P(b-1, H\setminus\{v\}) + \sum_{i=1}^{m^G } \bgam^G_i\cdot  P\left(b-|C^G_i|, 
H\setminus \varphi(C^G_i) \right)\\
\geq&\bgam^G_{m^G+1} \cdot P\left(b-1, k\right) 
+ \sum_{i=1}^{m^G} \bgam^G_i \cdot P\left(b-|C^G_i|, k - |C^G_i \cap C^G_j| \right) \\
\geq& 
\bgam^G_{m^G+1} \cdot p\left(b-1, k\right) 
+ \sum_{i=1}^{m^G} \bgam^G_i \cdot p\left(b-|C^G_i|, k - |C^G_i \cap C^G_j| \right) \geq p(b,k)
\end{aligned}
\end{equation*}

Hence, $P(b,H) \geq p(b,k)$ for any $3$-hypergraph  $H$ with a hitting set of size $k$ or less. We conclude that $P(b,k)\geq p(b,k)$. 

\end{proof}

Following the above analysis, an $\alpha$-approximation algorithm for $3$-Hitting Set can be derived by the same approach used for Vertex Cover. This leads to Algorithm~\ref{alg:alpha_3HS}. 

\begin{algorithm}
	\caption{\sc $\alpha$-HS} 
	\label{alg:alpha_3HS}
	\hspace*{\algorithmicindent} \noindent \textbf{Input:} A $3$-hypergraph $H$, a parameter $k$
	
	\begin{algorithmic}[1]	
		\State Evaluate $r=p(\floor{\alpha k} , k)$ using dynamic programming ($p$ is defined in \eqref{eq:3HS}).
		\Loop { $\ceil{r^{-1}}$ times}
		\State Execute $\text{3HS}(H)$.
		\EndLoop 
		\State Return the minimal hitting set found.
	\end{algorithmic}
\end{algorithm} 

It follows from Lemma~\ref{lem:3HS_P_to_p} that Algorithm~\ref{alg:alpha_3HS} yields an $\alpha$-approximation for $3$-Hitting Set with running time of $\frac{1}{p(\alpha k , k )}$. For any value of $\alpha$, it is possible to optimize the value of $\bgam^G$ for each $G\in \cG_\Delta$ and evaluate the asymptotic behavior of $p(\alpha k , k)$ as $k$ goes to infinity using Theorem \ref{thm:rec}.

However,  the size of $\cG_\Delta$ grows rapidly as $\Delta$ increases, rendering the above computation less and less practical. With a little technical sophistication we were able to  evaluate the running time of the algorithm with $\Delta=7$ for various approximation ratios. Figure \ref{fig:HS} shows the running times of the algorithm with $\Delta=7$ as 
function of $\alpha$. A list of running times for several approximation ratios is given in the table below. For $\alpha = 2$ the running time is $\tO(1.0659^k)$, yielding a significant improvement over
the previous best result of $\tO(1.29^k)$ due to \cite{BF12}.  
\begin{center}
	\begin{tabular}{ c|| c |c |c | c | c | c | c | c|  c| c} 
		$\alpha$ %
		& $1.2$ & $1.4$ &  $1.6$ &  $1.8$ &  $2.0$ &  $2.2$ & $2.4$ & $2.6$ & $2.8$ \\
		\hline
		$\alpha-HS$ &
		$1.59^k$ &
		$1.29^k$ &
		$1.18^k$&
		$1.11^k$&
		$1.0659^k$ & 
		$1.039^k$& 
		$1.021^k$ &
		
		$1.0085^k$ &
		$1.0026^k$ 	
	\end{tabular}
\end{center}

\section {Advanced  Randomized Branching  for Vertex Cover
}
\label{sec:VC}

In this section we give a parameterized approximation algorithm for Vertex Cover building on the exact $\tO(1.33^k)$ algorithm 
presented in~\cite{Nied06}.
That is, we analyze below a variant of the algorithm 
in which branching is replaced by selection of
one of the branches {\em randomly}. The analysis
shows that randomized branching in conjunction with faster parameterized
algorithms can lead to faster parameterized approximation algorithms. 
We use below ideas presented in Section \ref{sec:warmup} and give the technical details for their implementation in a more advanced settings.

{\sc BetterVC}, depicted in Algorithm~\ref{alg:betterVC}, involves  several branching rules. For the restricted cases of a regular connected graphs of degree $2$, $3$ or $4$ the algorithm resorts to standard deterministic branching.\footnote{A graph is {\em $d$-regular} if the degree of all vertices is $d$. The graph is {\em regular} if there is $d\in\mathbb{N}$ for which it is $d$-regular. } As we show in the analysis, despite this use of deterministic branching the algorithm remains polynomial.  This is a consequence of the fact that a connected regular graph of degree $d$  cannot have a connected regular graph of degree $d$ as a strict vertex induced subgraph.  The algorithm preserves a simple invariant: all of its recursive calls only replace the input graph $G$ with a  vertex induced subgraph of $G$. 

The algorithm first handles simple cases. If the input graph has no edges, it returns the empty set as a cover. If the graph is not connected, then the algorithm makes two recursive calls: one with a single connected component and another with the remainder of the graph. If there is a vertex of degree 
$1$, its neighbor is added to the solution. 

Following these simple cases, the algorithm checks if there is a vertex $v$ of degree $5$ or more. If there is such a vertex, the algorithm randomly picks either $v$ or $N(v)$ for the solution. Similar to Algorithm~\ref{alg:VC3star}, the probability by which the algorithm selects $v$ or $N(v)$ depends on the degree of $v$. Furthermore, if the degree of $v$ is higher than a given fixed threshold $\Delta$, the algorithm only  selects~$\Delta$ vertices from $N(v)$ for the solution. 

If none of the above can be applied and the graph is regular, it is in particular a connected $d$-regular graph where $d\in \{2,3,4\}$. In such cases the algorithm applies {\em deterministic} branching. It picks an arbitrary edge $(v_1,v_2)\in E$ and initiates two recursive calls. In one call the vertex $v_1$ is removed from the graph and forced into the solution, and in the other call $v_2$ is removed from the graph and added to the solution.  The algorithm eventually returns the smaller solution output by these recursive calls. As a minimum vertex cover must contain either $v_1$ or $v_2$, one of the recursive calls adds a vertex from a minimum vertex cover to the solution. 

If the graph does not satisfy any of the above conditions 
then the graph is connected, not regular, and with vertices of degrees $2$, $3$ and $4$. In particular, this implies that the graph either has vertex of degree $2$, or has a vertex of degree $3$ with a neighbor of degree $4$. The algorithm then searches for  one of several reduction rules and randomized branching rules which may be applicable to the graph. 
It first attempts to apply reduction rules and branching for  vertices of degree $2$.
If there is no such vertex,  then the algorithm checks the applicability of  randomized branching rules for  degree $3$ vertices whose  neighborhoods satisfy some additional properties.
If neither is applicable the algorithm finds a vertex of degree $3$ which has a neighbor of degree $4$ and applies randomized branching. 
The algorithm uses different probabilities for every case in which randomized branching is applied.

\begin{algorithm}
	\renewcommand\algorithmicthen{}
	\caption{\sc BetterVC}
	\label{alg:betterVC}
	\textbf{Input:} An undirected graph $G = (V, E)$
	
	\textbf{Parameters:} The configuration parameters are:
	\begin{itemize}[nosep]
		\item $\Delta \in \mathbb{N}$
		\item $\gamma_5, \gamma_6, \ldots, \gamma_\Delta \in (0, 1)$
		\item $\lambda_{1, r} \in (0, 1)$ for every $3 \leq r \leq 7$
		\item $\lambda_{2, r} \in (0, 1)$ for $3 \leq r \leq 4$
		\item $\lambda_3 \in (0, 1)$
		\item $\delta_{r, 1}, \delta_{r, 2}, \delta_{r, 3} \in [0, 1]$ with $\delta_{r, 1} + \delta_{r, 2} + \delta_{r, 3} = 1$ for $r \in \{5, 6, 7\}$
	\end{itemize}
	
	\textbf{Notation:} 
	\begin{itemize}[nosep]
		\item  We use the term {\bf branch over} $U_1, \ldots, U_r$ with probabilities $p_1, \ldots, p_r$ to denote the operation of returning  $\text{\textsc{BetterVC}}(G \setminus U_i) \cup U_i$ with probability $p_i$. 
		\item The term {\bf select} $U$ denotes the operation of returning $\text{\textsc{BetterVC}}(G \setminus U) \cup U$.
	\end{itemize}
	
	\begin{algorithmic}[1]
		\State \textbf{Trivial Case:} if the empty set is a vertex cover of $G$, return $\emptyset$.
		
		\State \textbf{Disconnected Graph:} if $G$ is not connected, let $G'$ be a connected component of $G$ and $G'' = G - G'$. Return $\text{\textsc{BetterVC}}(G') \cup \text{\textsc{BetterVC}}(G'')$. \label{bVC:split}
		
		\State \textbf{Degree $1$ Vertex:} if $G$ has a vertex $v$ of degree $1$, let $u$ be its neighbor. Select $u$ to the cover. \label{bVC:deg1}
		
		\State \textbf{High-Degree Vertex:} if $G$ has a vertex $v$ of degree $d \geq 5$:
		\begin{itemize}
			\item Let $U = N(v)$ if $d < \Delta$, otherwise let $U \subseteq N(v)$ with $|U| = \Delta$.
			\item Branch over $\{v\}$ and  $U$ with probabilities $\gamma_d$ and $ 1 - \gamma_d$ (or $\gamma_\Delta$ and  $1 - \gamma_\Delta$ if $d \geq \Delta$). \label{bVC:deg5p}
		\end{itemize}
		
		\State \textbf{Regular Graph:} if $G$ is a regular graph, select an arbitrary edge $(v_1, v_2) \in E$. 
		\begin{itemize}
			\item Evaluate $S_1 = \text{\textsc{BetterVC}}(G \setminus \{v_1\}) \cup \{v_1\}$ and $S_2 = \text{\textsc{BetterVC}}(G \setminus \{v_2\}) \cup \{v_2\}$.
			\item Return the smaller set between $S_1$ and $S_2$. \label{bVC:regular}
		\end{itemize}
		
		  \If {$G$ has a vertex $v$ of degree $2$, $N(v) = \{x, y\}$:}
		\State \textbf{Degree $2$ Triangle:} if $(x, y) \in E$ {\bf then} select $\{x, y\}$ to the cover. \label{bVC:deg2case1}
		\State \textbf{Degree $2$ Diamond :} if $\deg(x) = \deg(y) = 2$ and $N(x) = N(y) = \{z, v\}$ {\bf then} 
		\begin{itemize}[nosep,left=2em] \item select $\{z, v\}$ to the cover. \label{bVC:deg2case2}
			\end{itemize}
		\State \textbf{Degree $2$ Branching:} if none of the above holds, {\bf then } \begin{itemize}[nosep, left=2em]\item let $r = |N(x) \cup N(y)|$ and branch over $N(v)$ and  $N(x) \cup N(y)$ with probabilities $\lambda_{1, r}$ and  $1 - \lambda_{1, r}$. \label{bVC:deg2case3}
			\end{itemize}
		\EndIf

		\If{$G$ has a vertex $v$ of degree $3$ such that  $N(v) = \{x, y, z\}$ and $(x,y)\in E$:} 
		\State {\bf Degree $3$ Triangle:} branch over $N(v)$ and  $N(z)$ with probabilities $\lambda_{2, r}$ and $ 1 - \lambda_{2, r}$ where $r = |N(z)|$. \label{bVC:deg3case1}
		\EndIf
		
			\If{$G$ has a vertex $v$ of degree $3$ such that  $N(v) = \{x, y, z\}$ and  there is $w \notin N(v) \cup \{v\}$ with $x, y \in N(w)$:} 
		\State {\bf Degree $3$ Diamond:} branch over $N(v)$ and  $\{v, w\}$ with probabilities $\lambda_3$  and  $1 - \lambda_3$. \label{bVC:deg3case2}
		\EndIf

		\State \textbf{Degree $4$ Branching:} find a vertex $v$ of degree $3$ with  $N(v) = \{x, y, z\}$ and $\deg(x) = 4$:
		\begin{itemize}
			\item Let $r = |N(y) \cup N(z)|$ and branch over $N(v)$, $N(x)$ and $\{x\} \cup N(y) \cup N(z)$ with probabilities $\delta_{r, 1}$,  $\delta_{r, 2}$ and  $\delta_{r, 3}$. \label{bVC:deg3case3}
		\end{itemize}
	\end{algorithmic}
\end{algorithm}

It is easy to see that Algorithm~\ref{alg:betterVC} always returns a cover of the input graph~$G$. Furthermore,
\begin{lemma}
	\label{lem:bvc_poly}
Algorithm \ref{alg:betterVC} has a polynomial running time. 
\end{lemma}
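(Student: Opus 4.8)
The plan is to bound the size of the recursion tree $\mathcal{T}$ of \textsc{BetterVC} by a fixed polynomial in $n=|V(G)|$; since each recursive invocation performs only polynomially many elementary operations (testing connectivity, locating a vertex of a prescribed degree, testing regularity, comparing two covers, etc.), this yields the lemma. First I would record two routine facts. \emph{(a)} Every recursive call is made on a graph with strictly fewer vertices: Step~\ref{bVC:deg1} removes one vertex, Steps~\ref{bVC:deg5p}, \ref{bVC:regular} and~\ref{bVC:deg2case1}--\ref{bVC:deg3case3} each remove at least one, and Step~\ref{bVC:split} recurses on two nonempty proper subgraphs — so every root-to-leaf path of $\mathcal{T}$ has length at most $n$. \emph{(b)} Every step issues exactly one recursive call, \emph{except} Step~\ref{bVC:split} (two calls, on the vertex-disjoint graphs $G'$ and $G-G'$) and Step~\ref{bVC:regular} (two calls, each on a graph with $n-1$ vertices). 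From \emph{(b)} I would derive, by a short induction on $|V(H)|$, that any subtree containing \emph{no} Step~\ref{bVC:regular} node, rooted at a graph $H$, has at most $8\max(1,|V(H)|)$ leaves: the only branching node inside such a subtree is a Step~\ref{bVC:split} node, where the two parts are nonempty with vertex counts summing to $|V(H)|$, so $8|V(G')|+8|V(G-G')|=8|V(H)|$.

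The crux, and the only real obstacle, is Step~\ref{bVC:regular}, whose two recursive calls each keep $n-1$ vertices, so left unchecked it could push $|\mathcal{T}|$ up to $2^{\Omega(n)}$. The key claim is that \emph{along any root-to-leaf path of $\mathcal{T}$, Step~\ref{bVC:regular} is executed at most three times}. Step~\ref{bVC:regular} fires only when $G$ is connected and has no vertex of degree $1$ and no vertex of degree at least $5$ (those are handled by Steps~\ref{bVC:deg1} and~\ref{bVC:deg5p}), so $G$ is $\delta$-regular with $\delta\in\{2,3,4\}$; if in addition $|V(G)|\le 5$ then $G\in\{K_3,K_4,K_5\}$ and the subtree below this call has constant size, so assume $|V(G)|\ge\delta+2$. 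When the call removes an endpoint $v_1$ of the chosen edge, every connected component of $G\setminus\{v_1\}$ still contains a neighbour of $v_1$ (since $G$ is connected), and each such neighbour has degree $\delta-1$ there; so, for both branches of the call, immediately afterwards the following invariant holds:
\[
I_\delta:\qquad \text{every connected component of the current graph has a vertex of degree $<\delta$.}
\]

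It remains to show that $I_\delta$ is preserved by every later step and that it forbids any subsequent Step~\ref{bVC:regular} of regularity degree $\ge\delta$. Deleting vertices never increases a degree, so once $I_\delta$ holds the maximum degree stays $\le\delta$ and any vertex that loses an edge drops to degree $<\delta$. Take any step that removes a vertex set $S$ from a graph $H$ satisfying $I_\delta$ and recurses (directly, or component-wise via Step~\ref{bVC:split}): a component of $H$ disjoint from $S$ is unchanged and keeps its degree-$<\delta$ witness, while a component of $H$ meeting $S$ breaks into pieces each of which, being a component of a connected graph minus $S$, contains a vertex adjacent in $H$ to a removed vertex, now of degree $<\delta$; thus $I_\delta$ survives (Step~\ref{bVC:split} preserves it trivially, as it is a per-component statement). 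But a $\delta'$-regular graph with $\delta'\ge\delta$ is connected (Step~\ref{bVC:split} having already run) with all degrees $\ge\delta$, contradicting $I_\delta$; hence the regularity degrees of successive Step~\ref{bVC:regular} calls along a path strictly decrease within $\{2,3,4\}$, so there are at most three of them. Finally I would peel $\mathcal{T}$ along its at most three ``layers'' of Step~\ref{bVC:regular} nodes and apply the ``$8\max(1,|V(H)|)$ leaves'' bound to each Step~\ref{bVC:regular}-free segment; this gives $O(n^{4})$ leaves, hence $|\mathcal{T}|=O(n^{5})$ by fact~\emph{(a)}, which proves the lemma.

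The step I expect to require the most care is verifying that $I_\delta$ is preserved uniformly across all the degree-$2$ and degree-$3$ sub-cases (Steps~\ref{bVC:deg2case1}--\ref{bVC:deg3case3}); but since each of those has the shape ``delete some vertices and recurse'', the single component-splitting argument above handles them all at once. This ``a bad regular step can occur only boundedly many times along a path'' device is exactly the one used for the $\tO(1.33^k)$ algorithm of~\cite{Nied06}, and in the simpler single-degree version for \textsc{EnhancedVC3*} in Section~\ref{sec:eVC3star}.
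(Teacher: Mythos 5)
Your proof is correct and rests on the same core observation as the paper's — that a branching over a connected $\delta$-regular graph (Step~\ref{bVC:regular}) cannot recur at the same or higher regularity once it has fired, and since $\delta\in\{2,3,4\}$ this caps the number of Step-\ref{bVC:regular} firings on any root-to-leaf path at three — but the bookkeeping is genuinely different. The paper packages the observation into a potential function $\Phi(G)=\Phi_2(G)+\Phi_3(G)+\Phi_4(G)$, where $\Phi_i(G)\in\{0,1\}$ records whether $G$ has a non-empty $i$-regular vertex-induced subgraph; it checks that deletions never increase $\Phi$ (an induced subgraph of a subgraph is an induced subgraph of the original) and that a Step-\ref{bVC:regular} call on a connected $i$-regular graph forces $\Phi_i$ to $0$ on both children, whence a single induction on $|V|$ yields the tight bound of $2(|V|-1)\cdot 2^{\Phi(G)}\leq 16(|V|-1)$ recursive calls. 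Your route instead tracks the concrete invariant $I_\delta$ and shows it is established by the first Step-\ref{bVC:regular} call at regularity $\delta$ and preserved by all later deletions, then peels the recursion tree into Step-\ref{bVC:regular}-free layers whose leaf counts multiply, arriving at $O(n^4)$ leaves and $O(n^5)$ nodes. Both arguments are sound; the paper's buys a near-optimal linear bound in one clean induction, while yours trades tightness for more explicit, case-by-case reasoning that avoids the induced-subgraph abstraction. One point to make explicit in your write-up: your preservation step for $I_\delta$ tacitly uses that the maximum degree is $\leq\delta$ once $I_\delta$ is established (so that a vertex losing an incident edge really does fall strictly below $\delta$); this holds because the graph was $\delta$-regular at the moment Step~\ref{bVC:regular} fired and degrees can only decrease under deletion, but it belongs inside the invariant rather than as a side remark.
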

The proof of Lemma \ref{lem:bvc_poly}  is given at the end of this section, along with the proof of the next lemma.
\begin{lemma}
	\label{lem:bvc_prob}
Let $G\in \mathcal{G}_k$ ($\mathcal{G}_k$ is the set of graphs with vertex cover of size $k$ or less), then the probability that Algorithm \ref{alg:betterVC} returns a 
cover of size $b$ or less is greater or equal to $p(b,k)$, where 
\small
\begin{equation}
\label{eq:bvc_rec}
\begin{aligned}
&p(b,k) = \min %
&\begin{cases}
p(b-1, k-1) \\
p(b-2, k-2) & k\geq 2 \\
\gamma_d\cdot p(b-1, k-1) ~+~ (1-\gamma_d)\cdot  p(b-d, k-1) &  5\leq  d <\Delta \\ 
\gamma_d \cdot p(b-1, k) ~+~ (1-\gamma_d)\cdot  p(b-d, k-d) &  5\leq  d \leq \Delta \\ 
\gamma_\Delta \cdot  p(b-1, k-1) ~+~ (1-\gamma_\Delta) \cdot p(b-\Delta, k) &  \\ 
\lambda_{1,r} \cdot p(b-2, k-2) + (1-\lambda_{1,r}) \cdot p( b-r, k-2) & 3\leq r \leq 7\\
\lambda_{1,r} \cdot p(b-2, k-1) + (1-\lambda_{1,r}) \cdot p( b-r, k-r) & 3\leq r \leq 7,\\
\lambda_{2,r} \cdot p (b-3, k-3) + (1-\lambda_{2,r}) \cdot p \left(
b-r, k- 1
\right) & 3 \leq r \leq 4 \\
\lambda_{2,r} \cdot p (b-3, k-1) + (1-\lambda_{2,r}) \cdot p \left(
b-r, k- r
\right) & 3\leq r \leq 4\\
\lambda_{3} \cdot p (b-3, k-3) + (1-\lambda_{3}) \cdot p(b-2, k)  & \\
\lambda_{3} \cdot p (b-3, k-1) + (1-\lambda_{3}) \cdot p (
b-2, k-2) & \\
\delta_{r,1} \cdot
 p (b-3, k-3)  + \delta_{r,2} \cdot p(b-4, k-1 ) + \delta_{r,3}  \cdot p(b-r-1, k-3) &  5\leq r \leq 7 \\
\delta_{r,1}\cdot p (b-3, k-1)  + \delta_{r,2} \cdot p(b-4, k-4 ) + \delta_{r,3} \cdot p(b-r-1, k-r) &  5\leq r \leq 7\\
\delta_{r,1}\cdot p (b-3, k-2)  + \delta_{r,2} \cdot p(b-4, k-4 ) + \delta_{r,3} \cdot p\left(b-r-1, k-1-\ceil{\frac{r}{2}}\right) & 5 \leq r \leq 7\\
\delta_{r,1}\cdot p (b-3, k-2)  + \delta_{r,2} \cdot p(b-4, k-2 ) + \delta_{r,3} \cdot p(b-r-1, k-r-1)  & 5\leq r \leq 7, 
\end{cases}
\end{aligned}
\end{equation}
\normalsize

and $p(b,k)=0$ for $b<0$, and $p(b,k)=1$ for $b\geq 0$ and $k\leq 0$. 
\end{lemma}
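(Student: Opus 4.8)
The plan is to prove the (slightly stronger and cleaner) statement that for every graph $G$ and every integer $k \geq \tau(G)$, where $\tau(G)$ is the minimum vertex cover size of $G$, the probability $P(b,G)$ that \textsc{BetterVC}$(G)$ returns a cover of size at most $b$ satisfies $P(b,G) \geq p(b,k)$ for all $b$; the lemma is the special case $k \geq \tau(G)$, i.e. $G \in \mathcal{G}_k$. The proof goes by induction on $|V(G)|$ (unlike the proof of Lemma~\ref{lem:3HS_P_to_p}, which inducts on $b$; here the split step in Line~\ref{bVC:split} does not decrease $b$, so we must induct on graph size --- every recursive call, including the two calls in Lines~\ref{bVC:split} and~\ref{bVC:regular}, is on a graph with strictly fewer vertices). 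The base $V(G)=\emptyset$ is handled by Line~1: then $\emptyset$ is a cover, so $P(b,G)=1 \ge p(b,k)$ for $b\ge 0$, and for $b<0$ both sides are $0$; the same reasoning covers any firing of Line~1 in the inductive step. Before the case analysis I would record two auxiliary facts about $p$, each a short induction directly from \eqref{eq:rec_relation}: (i) $p(b,k)$ is non-increasing in $k$ (so replacing a bound $\tau\le k-1$ by the weaker $\tau\le k$ is always harmless), and (ii) a convolution/superadditivity property: if $X,Y$ are independent with $\Pr[X\le t]\ge p(t,k_1)$ and $\Pr[Y\le t]\ge p(t,k_2)$ for all $t$, then $\Pr[X+Y\le t]\ge p(t,k_1+k_2)$ for all $t$. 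Fact~(ii), applied with the independent outputs of the two recursive calls in Line~\ref{bVC:split} and using $\tau(G)=\tau(G')+\tau(G'')$, disposes of the disconnected case.

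The core of the argument is a case analysis over the lines of Algorithm~\ref{alg:betterVC}, showing that the corresponding line produces exactly one (or a few) of the terms in the $\min$ of \eqref{eq:bvc_rec}; since $p(b,k)$ is the minimum of its terms, matching any one active term suffices. The deterministic lines are easy: Line~\ref{bVC:deg1} forces a neighbor of a degree-$1$ vertex into an optimal cover, giving $\tau(G\setminus u)=\tau(G)-1$ and the term $p(b-1,k-1)$; Lines~\ref{bVC:deg2case1} and~\ref{bVC:deg2case2} force two vertices into an optimal cover and give $p(b-2,k-2)$; Line~\ref{bVC:regular} covers the chosen edge by one of its endpoints, so the smaller of $S_1,S_2$ dominates $p(b-1,k-1)$. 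For each randomized branching line I would fix an arbitrary minimum vertex cover $C$ of $G$ and split into ``states'' according to how $C$ meets the local neighborhood of the branched vertex: e.g. for a degree-$d\ge 5$ vertex $v$ (Line~\ref{bVC:deg5p}), either $v\in C$ or $v\notin C$ (forcing $N(v)\subseteq C$). In each state one branch is ``correct'' --- its set lies in $C$, so its removal drops $\tau$ by the full number of removed vertices --- while for the ``incorrect'' branches the drop in $\tau$ is bounded below by a structural observation (removing all of $N(v)$ isolates $v$, hence drops $\tau$ by at least $1$; removing only a $\Delta$-subset of $N(v)$ guarantees nothing, i.e. a drop of $0$). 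Combining the branch probabilities with the induction hypothesis applied to the strictly smaller post-branch graphs, and aligning the $k$-arguments with fact~(i), one recovers precisely one of the terms of \eqref{eq:bvc_rec} that is active for the current $k$; and the side condition of that term (e.g. $k\ge d$, $k\ge r$) holds automatically in the state at hand because it follows from $|C|$ being at least the relevant local quantity (e.g. $|C|\ge |N(v)|=d$ whenever $v\notin C$). Exhaustiveness of the states then yields $P(b,G)\ge p(b,k)$.

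The main obstacle is the structural case analysis for the degree-$2$ and degree-$3$ branching rules, Lines~\ref{bVC:deg2case3} and~\ref{bVC:deg3case1}--\ref{bVC:deg3case3}: this is where one must reproduce and refine the folding/exchange arguments underlying the exact $\tO(1.33^k)$ algorithm of~\cite{Nied06}, now additionally tracking, for every branch and every state, how far the minimum cover shrinks. The delicate case is the three-way branch of Line~\ref{bVC:deg3case3} (over $N(v)$, $N(x)$, and $\{x\}\cup N(y)\cup N(z)$, for a degree-$3$ vertex $v$ with a degree-$4$ neighbor $x$), which has to be broken into the four exhaustive sub-cases that generate the four corresponding lines of \eqref{eq:bvc_rec}; verifying that each sub-case delivers the claimed pair of $k$-decreases on the two ``wrong'' branches --- in particular the $\ceil{r/2}$ term, which comes from a matching/domination argument inside $N(y)\cup N(z)$ --- is the part requiring real care. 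I expect the degree-$\ge 5$, degree-$1$, regular-graph, and disconnected cases to be routine once facts~(i)--(ii) are in place, with essentially all of the work concentrated in these low-degree rules.
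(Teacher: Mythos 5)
Your high-level strategy---matching each line of Algorithm~\ref{alg:betterVC} to a term of \eqref{eq:bvc_rec}, with the randomized lines analyzed via ``states'' according to how a fixed minimum cover meets the branched neighborhood---is the same as the paper's, and you correctly flag Lines~\ref{bVC:deg2case3} and \ref{bVC:deg3case1}--\ref{bVC:deg3case3} as where the bulk of the work lies. The gap is your ``fact~(ii),'' which you rely on for the disconnected case (Line~\ref{bVC:split}) and which does \emph{not} follow by a short induction from \eqref{eq:rec_relation}; it is in fact false for general composite recurrences. Take $N=2$ with $(\bb^1,\bk^1,\bgam^1)=((3),(1),(1))$ and $(\bb^2,\bk^2,\bgam^2)=((1,1),(1,0),(\tfrac12,\tfrac12))$, so that $p(b,k)=\min\bigl\{p(b-3,k-1),\,\tfrac12 p(b-1,k-1)+\tfrac12 p(b-1,k)\bigr\}$ for $k\ge 1$. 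Rule $1$ overshoots the budget when $b<3$, which gives $p(t,1)=0$ for $t\le 2$, $p(3,1)=\tfrac12$, $p(4,1)=\tfrac34$, $p(5,1)=\tfrac78$; likewise $p(t,2)=0$ for $t\le 5$ and $p(6,2)=\min\bigl\{p(3,1),\ \tfrac12 p(5,1)+\tfrac12 p(5,2)\bigr\}=\min\{\tfrac12,\tfrac{7}{16}\}=\tfrac{7}{16}$. But if $X,Y$ are i.i.d.\ with CDF exactly $p(\cdot,1)$, then $X,Y\ge 3$ almost surely and $\Pr[X+Y\le 6]=\Pr[X=Y=3]=\tfrac14<\tfrac{7}{16}=p(6,2)$. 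The mechanism is that the minimizing $j$ in \eqref{eq:rec_relation} depends on $b$: splitting the budget between two pieces lets the adversary pick budget-overshooting rules in each piece that it could not profitably pick when facing the whole of $b$ at once. Since \eqref{eq:bvc_rec} has no stand-alone deterministic rule of ratio $>1$, fact~(ii) may still hold for it, but establishing that would require an argument tailored to its structure, not a generic one-line induction.

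The paper avoids the issue by strengthening the claim: for any collection $G_1,\ldots,G_t$ whose minimum covers total at most $k$, $\Pr\bigl[\sum_i|\textsc{BetterVC}(G_i)|\le b\bigr]\ge p(b,k)$, with the budget $b$ and parameter $k$ shared across the collection; the induction is over the lexicographic triple $(b,M,\ell)$, where $M$ is the largest $|V(G_i)|$ and $\ell$ is the number of graphs attaining $M$. A disconnected $G_1$ is replaced by $(G_1',G_1'',G_2,\ldots,G_t)$ with $b,k$ unchanged and $(M,\ell)$ strictly smaller lexicographically; the regular-graph line recurses on a graph with one fewer vertex; and every other line strictly decreases $b$. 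No convolution fact is needed. Note also that your $|V(G)|$ measure cannot be transplanted into the collection claim (the split preserves the total vertex count), so moving to the collection formulation forces you to also adopt the paper's triple measure.
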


 The proof of Lemma \ref{lem:bvc_prob}
is %
a case by
case analysis similar to the one done in \cite{Nied06}. The main difference between the analysis presented here and the analysis in \cite{Nied06} is that here we also count the reduction in the minimal cover size in 
a non-optimal branching step.

Let $\alpha$-{\sf BetterVC} be the algorithm which executes Algorithm~\ref{alg:alpha_approx} with Algorithm \ref{alg:betterVC} as $\mathcal{A}$, and 
with $p$ as the recurrence in Lemma \ref{lem:bvc_prob}.
It follows from Lemma \ref{lem:bvc_prob} that $\alpha$-{\sf BetterVC} is a random parameterized $\alpha$-approximation algorithm for Vertex Cover,
with running time $\tO\left( \frac{1}{p(\alpha k, k)}\right)$. 
As before, we arbitrarily select $\Delta=100$. 
For every $1<\alpha<2$ and a set of configuration parameters, by Theorem \ref{thm:rec} we can numerically evaluate (see Section \ref{sec:numerical} for the details) a 
value $M_{\alpha}$
 such that $p(\alpha k , k )  > \exp(-M_\alpha-\eps)$ for any $\eps>0$ and large enough $k$.
 Similarly, for every $1<\alpha<2$ we can optimize the configuration parameters so this value 
 is minimized.
 Therefore, the running 
time of Algorithm $\alpha$-{\sc BetterVC} is
 $\tO(\exp(M_\alpha+\eps)^k)$ for any $\eps>0$. 
 Figure \ref{fig:bettervc} shows $\exp(M_\alpha)$ as a function of $\alpha$. 

\begin{figure}
\centering
\begin{tikzpicture}[scale = 1.0]
\begin{axis}[
xmin = 1, xmax=2, ymin =0.99 , ymax=1.5 , xlabel= 	{approximation ratio}, 
ylabel={exponent base}, samples=50]
\addplot[blue, ultra thick] (2-x, 1.2738^x ); 

\addplot[black, ultra thick] coordinates {
( 1.0  ,  1.2738 )
( 1.01  ,  1.2697966946339028 )
( 1.02  ,  1.2658059708770486 )
( 1.03  ,  1.2618277891878902 )
( 1.04  ,  1.2578621101491503 )
( 1.05  ,  1.2539088944674337 )
( 1.06  ,  1.249968102972836 )
( 1.07  ,  1.2460396966185563 )
( 1.08  ,  1.2421236364805097 )
( 1.09  ,  1.2382198837569434 )
( 1.1  ,  1.2343283997680499 )
( 1.11  ,  1.2304491459555849 )
( 1.12  ,  1.2265820838824855 )
( 1.13  ,  1.222727175232489 )
( 1.1400000000000001  ,  1.218884381809753 )
( 1.15  ,  1.215053665538477 )
( 1.16  ,  1.211234988462526 )
( 1.17  ,  1.2074283127450531 )
( 1.18  ,  1.2036336006681256 )
( 1.19  ,  1.199850814632351 )
( 1.2  ,  1.196079917156504 )
( 1.21  ,  1.1923208708771553 )
( 1.22  ,  1.188573638548303 )
( 1.23  ,  1.1848381830410002 )
( 1.24  ,  1.1811144673429903 )
( 1.25  ,  1.1774024545583384 )
( 1.26  ,  1.1737021079070669 )
( 1.27  ,  1.1700133907247903 )
( 1.28  ,  1.1663362664623513 )
( 1.29  ,  1.1626706986854614 )
( 1.3  ,  1.1590166510743358 )
( 1.31  ,  1.1553740874233371 )
( 1.32  ,  1.1517429716406147 )
( 1.33  ,  1.148123267747748 )
( 1.34  ,  1.1445149398793886 )
( 1.35  ,  1.1409179522829078 )
( 1.3599999999999999  ,  1.137332269318038 )
( 1.37  ,  1.1337578554565237 )
( 1.38  ,  1.130194675281768 )
( 1.3900000000000001  ,  1.1266426934884801 )
( 1.4  ,  1.1231018748823276 )
( 1.4100000000000001  ,  1.1195721843795874 )
( 1.42  ,  1.1160535870067976 )
( 1.43  ,  1.1125460479004097 )
( 1.44  ,  1.1090495323064467 )
( 1.45  ,  1.105564005580155 )
( 1.46  ,  1.1020894331856639 )
( 1.47  ,  1.0986257806956408 )
( 1.48  ,  1.0951730137909526 )
( 1.49  ,  1.091731098260324 )
( 1.5  ,  1.0883 )
};

\addplot[mark=*, mark options={scale=0.3}, only marks, black, forget plot] coordinates {
	(1.5,1.0883) 
	(1.666666, 1.04)
	(1.75, 1.024)
	(1.8, 1.017)
	(1.8333 , 1.01208191)
	( 1.875 ,  1.00734187 )
	( 1.88888888889 ,  1.0060641 )
	( 1.9 ,  1.00503861 )
	( 1.90909090909 , 1.00425981 )
	( 1.91666666667 , 1.00365343 )
	( 1.92307692308 , 1.00317141 )
	( 1.92857142857 , 1.00278148 )
	( 1.93333333333 , 1.00246127 )
	( 1.9375 ,  1.00221275)
	( 1.94117647059 , 1.00198584 )
	( 1.94444444444 ,  1.0017931)
	( 1.94736842105 , 1.0016279 )
	( 1.95 ,   1.00148516 )
};  %

\addplot[purple, ultra thick] coordinates {
( 1.000000001 , 1.4655712079951821 )
( 1.01 , 1.4142992556541911 )
( 1.02 , 1.3796905224309892 )
( 1.03 , 1.3513093010768376 )
( 1.04 , 1.3268738804015106 )
( 1.05 , 1.3052923325717478 )
( 1.06 , 1.2859207093324299 )
( 1.07 , 1.2683346312304615 )
( 1.08 , 1.2522340831375591 )
( 1.09 , 1.2373963835851272 )
( 1.1 , 1.2236502096756936 )
( 1.11 , 1.2108600605657813 )
( 1.12 , 1.1989163896586514 )
( 1.13 , 1.1877290359162285 )
( 1.1400000000000001 , 1.177222682475088 )
( 1.15 , 1.1673336177645628 )
( 1.16 , 1.1580073642831836 )
( 1.17 , 1.149196904071708 )
( 1.18 , 1.1408613245057746 )
( 1.19 , 1.1329647677869674 )
( 1.2 , 1.1238583247052474 )
( 1.21 , 1.1142350046542577 )
( 1.22 , 1.1051737061918148 )
( 1.23 , 1.0966406533062092 )
( 1.24 , 1.0890273363189389 )
( 1.25 , 1.0844032277463296 )
( 1.26 , 1.079987928111209 )
( 1.27 , 1.0757697035887084 )
( 1.28 , 1.0717378631574397 )
( 1.29 , 1.0678826368917937 )
( 1.3 , 1.0641950714823845 )
( 1.31 , 1.0606669408458678 )
( 1.32 , 1.0572906686129075 )
( 1.33 , 1.0540592609802601 )
( 1.34 , 1.0509662480292394 )
( 1.35 , 1.0480056324101277 )
( 1.3599999999999999 , 1.0447786392100875 )
( 1.37 , 1.0415217390754372 )
( 1.38 , 1.0384236422881525 )
( 1.3900000000000001 , 1.0354785607592514 )
( 1.4 , 1.0331956250110093 )
( 1.4100000000000001 , 1.0312594344906947 )
( 1.42 , 1.0294032711038863 )
( 1.43 , 1.027624274373305 )
( 1.44 , 1.0259197567144593 )
( 1.45 , 1.0242871892896974 )
( 1.46 , 1.022724190237944 )
( 1.47 , 1.0211061148691052 )
( 1.48 , 1.0194724241333957 )
( 1.49 , 1.0179187691522171 )
( 1.5 , 1.0165674569904897 )
( 1.51 , 1.0154641391259127 )
( 1.52 , 1.0144087091873695 )
( 1.53 , 1.0133997923158904 )
( 1.54 , 1.012436083058512 )
( 1.55 , 1.0114547596039911 )
( 1.56 , 1.0104819722592377 )
( 1.57 , 1.009576591546044 )
( 1.58 , 1.0088566788288071 )
( 1.5899999999999999 , 1.0081704955747772 )
( 1.6 , 1.007517193989135 )
( 1.6099999999999999 , 1.0068589214444632 )
( 1.62 , 1.0062168673941576 )
( 1.63 , 1.0056706102535804 )
( 1.6400000000000001 , 1.0051821501633222 )
( 1.65 , 1.004716811708091 )
( 1.6600000000000001 , 1.0042450021290232 )
( 1.67 , 1.003826644950794 )
( 1.6800000000000002 , 1.0034626045761648 )
( 1.69 , 1.0031096093998462 )
( 1.7000000000000002 , 1.0027685566578624 )
( 1.71 , 1.002483044291485 )
( 1.72 , 1.002207736466775 )
( 1.73 , 1.001951261875834 )
( 1.74 , 1.001728912143514 )
( 1.75 , 1.0015114858320462 )
( 1.76 , 1.0013276079407811 )
( 1.77 , 1.001150938002623 )
( 1.78 , 1.0009985138018003 )
( 1.79 , 1.0008579215400477 )
( 1.8 , 1.0007314241337721 )
( 1.81 , 1.0006210015967711 )
( 1.82 , 1.0005224850122387 )
( 1.83 , 1.0004353015474952 )
( 1.8399999999999999 , 1.000358773954474 )
( 1.85 , 1.0002921261368678 )
( 1.8599999999999999 , 1.0002352464447102 )
( 1.87 , 1.00018616407132 )
( 1.88 , 1.0001448402740132 )
( 1.8900000000000001 , 1.0001104438422828 )
( 1.9 , 1.0000820672258894 )
( 1.9100000000000001 , 1.0000592265304418 )
( 1.92 , 1.0000411663331896 )
( 1.9300000000000002 , 1.0000272851986407 )
( 1.94 , 1.0000170095656638 )
( 1.9500000000000002 , 1.0000097419997498 )
( 1.96 , 1.0000049368308817 )
( 1.97 , 1.000002061518123 )
( 1.98 , 1.0000005075929264 )
( 1.99 , 1.0000000000237768 )
( 1.9999999 , 1.000000000000005 )
};

\addplot[brown, ultra thick] coordinates {
( 1.000000001 , 1.324717943439684 )
( 1.01 , 1.2930971872855843 )
( 1.02 , 1.2709553562950033 )
( 1.03 , 1.2524114847946912 )
( 1.04 , 1.236184666676916 )
( 1.05 , 1.221659939266156 )
( 1.06 , 1.208472688759161 )
( 1.07 , 1.196381051359742 )
( 1.08 , 1.1852128031155162 )
( 1.09 , 1.1748391359811732 )
( 1.1 , 1.1651601693734248 )
( 1.11 , 1.1560962764757634 )
( 1.12 , 1.1475825675011355 )
( 1.13 , 1.1395652100002955 )
( 1.1400000000000001 , 1.1319988798097702 )
( 1.15 , 1.1248449379905316 )
( 1.16 , 1.118070092453454 )
( 1.17 , 1.1116453932805612 )
( 1.18 , 1.1056611400607466 )
( 1.19 , 1.1005318066137195 )
( 1.2 , 1.0956590093845138 )
( 1.21 , 1.0910239370064931 )
( 1.22 , 1.0866099008693249 )
( 1.23 , 1.082402020072795 )
( 1.24 , 1.0783869639245827 )
( 1.25 , 1.0745527396574905 )
( 1.26 , 1.0708885156125345 )
( 1.27 , 1.0673844735995928 )
( 1.28 , 1.0640316839126145 )
( 1.29 , 1.0608219997059647 )
( 1.3 , 1.0577479666683236 )
( 1.31 , 1.0548027455227311 )
( 1.32 , 1.0519800455276038 )
( 1.33 , 1.0492740664240956 )
( 1.34 , 1.0466794482170065 )
( 1.35 , 1.0441912273818021 )
( 1.3599999999999999 , 1.0418047983557137 )
( 1.37 , 1.0395158794903296 )
( 1.38 , 1.037320483486005 )
( 1.3900000000000001 , 1.0352148905347995 )
( 1.4 , 1.0331956250110093 )
( 1.4100000000000001 , 1.0312594344906947 )
( 1.42 , 1.0294032711038863 )
( 1.43 , 1.027624274373305 )
( 1.44 , 1.0259197567144593 )
( 1.45 , 1.0242871892896974 )
( 1.46 , 1.022724190237944 )
( 1.47 , 1.021228513342151 )
( 1.48 , 1.0197980379695943 )
( 1.49 , 1.0184307600921705 )
( 1.5 , 1.0171247840549282 )
( 1.51 , 1.0158783147937023 )
( 1.52 , 1.0146896510437236 )
( 1.53 , 1.013557179208429 )
( 1.54 , 1.012479367231861 )
( 1.55 , 1.011516340844781 )
( 1.56 , 1.010639386051077 )
( 1.57 , 1.0098040957328096 )
( 1.58 , 1.0090094003753804 )
( 1.5899999999999999 , 1.0082542805250032 )
( 1.6 , 1.0075377635891791 )
( 1.6099999999999999 , 1.006895964943023 )
( 1.62 , 1.0063060352372275 )
( 1.63 , 1.0057466660550909 )
( 1.6400000000000001 , 1.0052171503605938 )
( 1.65 , 1.0047190832683586 )
( 1.6600000000000001 , 1.0042808350358887 )
( 1.67 , 1.0038668550771435 )
( 1.6800000000000002 , 1.0034766153742065 )
( 1.69 , 1.0031191217343285 )
( 1.7000000000000002 , 1.0027957597677062 )
( 1.71 , 1.0024920990839112 )
( 1.72 , 1.0022148330287521 )
( 1.73 , 1.0019635708773775 )
( 1.74 , 1.0017289193650065 )
( 1.75 , 1.0015213162213954 )
( 1.76 , 1.0013281867976958 )
( 1.77 , 1.001156699295766 )
( 1.78 , 1.0009996039199218 )
( 1.79 , 1.000859765656144 )
( 1.8 , 1.000734329926821 )
( 1.81 , 1.0006222605709354 )
( 1.82 , 1.0005229789328225 )
( 1.83 , 1.000435616673092 )
( 1.8399999999999999 , 1.0003592006158748 )
( 1.85 , 1.0002927499540164 )
( 1.8599999999999999 , 1.0002353263160844 )
( 1.87 , 1.000186415978615 )
( 1.88 , 1.0001450330109274 )
( 1.8900000000000001 , 1.0001105029208743 )
( 1.9 , 1.0000821417456502 )
( 1.9100000000000001 , 1.0000592420239571 )
( 1.92 , 1.0000411668839475 )
( 1.9300000000000002 , 1.0000272937223253 )
( 1.94 , 1.000018379361036 )
( 1.9500000000000002 , 1.0000126966435747 )
( 1.96 , 1.000008083665902 )
( 1.97 , 1.0000045236998467 )
( 1.98 , 1.000002000313399 )
( 1.99 , 1.0000004975606898 )
( 1.9999999 , 1.0000000000001712 )
};
\addlegendentry[no markers, blue]{FKRS \cite{FKRS18}}
\addlegendentry[no markers, black]{BF \cite{BF13}}
\addlegendentry[no markers, purple]{\textsc{EnhancedVC3}*}
\addlegendentry[no markers, brown]{\textsc{BetterVC}}

\end{axis}
\end{tikzpicture}
\caption{The performance of \textsc{BetterVC}. A dot at $(\alpha,c)$ means that the respective algorithm yields $\alpha$-approximation with running time $\tO(c^k)$ or $\tO\left( (c+\eps)^k\right)$ for any $\eps>0$. }
\label{fig:bettervc}
\end{figure}

Note that the algorithm in \cite{NR99} can be used along with our framework of 
randomized branching. However, due to its technical complexity, we
preferred to use the algorithm in \cite{Nied06}, which can be viewed as a simplified 
version of the same algorithm. In the discussion we describe
the obstacles we encountered while attempting to obtain  randomized branching variants of faster algorithms.  

\subsection{Proofs}
\label{sec:vc_proofs}

\begin{proof}[Proof of Lemma \ref{lem:bvc_poly}]
To prove the algorithm has  polynomial running time,  it suffices to show that the number of recursive calls is polynomial. 
We note that the only non-trivial part of the proof is the handling of regular graphs in Line \ref{bVC:regular}. We use a simple  potential function to handle this case. 
For $i=2,3,4$, define $\Phi_i(G) =1$ if $G$ has a  non-empty $i$-regular vertex induced subgraph and $\Phi_i(G)=0$ otherwise. Also, define $\Phi(G) = \Phi_2(G)+\Phi_3(G)+\Phi_4(G)$.

Let $R(G)$ be the maximum number of recursive calls initiated in the execution of $\text{\textsc{BetterVC}}(G)$.
We now prove by induction (on $|V|$) that the number of recursive calls 
initiated by the algorithm is at most $R(G)\leq \max\left\{2(|V|-1)\cdot 2^{\Phi(G)},\,0\right\}$.
The idea behind the potential function $\max\left\{2(|V|-1)\cdot 2^{\Phi(G)},\,0\right\}$ is to  bound the incurred cost of the branching on regular graphs  in Step~\ref{bVC:regular}. When such branching occurs, the value of $\Phi(G)$ in the generated sub-instances must decrease by one, and the multiplicative factor of  $2^{\Phi(G)}$ in the potential function  can be ``charged''  to the recursive calls. The $2\cdot (|V|-1)$ factor in the potential function simply measures the size of the graph, and captures the idea that the graph size decreases between recursive calls.

If $|V|\leq 1$ the algorithm does not initiate recursive calls, and the claim holds. Each time a \textbf{Branch} or \textbf{Select} is used
the size of $|V|$ decreases by at least one, $\Phi(G)$ does not increase, and only one recursive call is initiated, therefore the claim 
holds in these cases.

If $G$ is not connected (Line \ref{bVC:split}) and is split into $G' = (V', E')$ and $G'' = (V'', E'')$ we 
note that $\Phi(G)\geq \Phi(G'), \Phi(G'')$ and $|V'|,|V''|\geq 1$; therefore,
\[R(G)=2 + R(G')+R(G'')\leq 2+2(|V'|-1)\cdot  2^{\Phi(G')} + 2(|V''| -1 ) \cdot 2 ^{\Phi(G'')} 
\leq 2(|V| -1 ) \cdot 2^{\Phi(G)}.\]

Finally, we need to handle the case in which $G$ is a $d$-regular graph (Line \ref{bVC:regular}). By the code structure, $d\in \{2,3,4\}$
and $G$ is connected.
In this case, two recursive calls are initiated, with $G_1=(V_1, E_1)$ and $G_2=(V_2, E_2)$ 
which are strict subgraphs of $G$.  
Since  $G$ is a connected $d$-regular graph, no vertex induced subgraph of $G$ is also $d$-regular, thus $\Phi_d(G_1)=\Phi_d(G_2)=0$
while $\Phi_d(G)=1$. Thus, $\Phi(G_1), \Phi(G_2)\leq \Phi(G)-1$. Since $|V_1|,|V_2|\geq 1 $ it follows that 
\begin{equation*}
\begin{aligned}
R(G)  &= 2+ R(G_1 )+R(G_2) \\
&= 2 + 2(|V_1|-1)\cdot  2^{\Phi(G_1)} + 2(|V_2| -1 ) \cdot 2 ^{\Phi(G_2)} \\ &\leq 2+
2(|V|-2)\cdot  2^{\Phi(G)-1} + 2(|V| -2 ) \cdot 2 ^{\Phi(G)-1} 
\leq  2(|V|-1)\cdot  2^{\Phi(G)}. 
\end{aligned}
\end{equation*}
\end{proof}

\begin{proof}[Proof of Lemma \ref{lem:bvc_prob}]
	To prove the lemma we show by induction a slightly stronger claim. 
	Given a collection of graphs $G_1, \ldots, G_t$,
	let $P(b,(G_1, \ldots, G_t))$ denote the probability that \\ %
	$\sum_{i=1}^t |\textsc{BetterVC}(G_i)| \leq b$. Now, we claim that 
	if the total size of minimal vertex covers of the graphs is 
	at most $k$ (formally, there are $S_1, \ldots, S_t$ where $S_i$ is a vertex cover of $G_i$ and $\sum_{i=1}^t |S_i| \leq k$)  then
	$P(b,(G_1, \ldots, G_t)) \geq p(b,k)$. 
	We prove the claim by induction over the lexicographical order of $(b,M,\ell)$, where $M$ 
	is the maximal number of vertices of a graph in $G_1, \ldots, G_t$, 
	and $\ell$ is the number of graphs of maximal size. 
	
	\noindent {\bf Base Case 1:} If $b<0$ then clearly $P(b, (G_1, \ldots, G_t)) = 0 = p(b,k)$.
	
	\noindent {\bf Base Case 2:} For any $b\in \mathbb{N}$, if $M\leq 1$, then
	clearly $P(b, (G_1, \ldots, G_t)) = 1 \geq p(b,k)$. 
	
	\noindent {\bf Induction Step:} Let $b\in \mathbb{N}$ and  $G_1, \ldots, G_t$ with $\ell$ graphs of maximal size $M$ and assume the claim holds for every $(b',M',\ell')$ lexicographically smaller than $(b,M,\ell)$. W.l.o.g assume that $G_1$ has $M$ vertices.
	We consider the execution of $\textsc{BetterVC}(G_1)$ and divide the analysis into cases depending on its execution path.  We use two simple properties along the proof. If
	$\textsc{BetterVC}(G_1)$ uses {\bf branch} over $U_1, \ldots, U_r$ with probabilities $\mu_1, \ldots, \mu_r$ then 
	\[P(b, (G_1,\ldots, G_t)) = \sum_{j=1}^r \mu_j P(b-|U_j|, (G_1\setminus U_j, G_2,\ldots, G_t)).\] 
	And if the algorithm selects $U$ into the cover then 
	\[P(b, (G_1, \ldots, G_t)) = P(b-|U| , (G_1\setminus U, G_2, \ldots, G_t)).\] 
	\noindent {\bf Case 1 (Trivial Case): } The empty set is a cover of $G_1$. Therefore 
	$|\textsc{BetterVC}(G_1)| = 0 $ and thus
	\[
	\Pr \left[ \sum_{i=1}^t |\textsc{BetterVC}(G_i)| \leq b \right] 
	=\Pr \left[ \sum_{i=2}^t |\textsc{BetterVC}(G_i)| \leq b \right]
	= P(b, (G_2, \ldots, G_t)) \geq p(b,k) ,
	\]
	where the last inequality follows from the induction hypothesis, as either the maximal graph size in $G_2, \ldots, G_t$ is smaller than $M$, or the number of graphs of maximal size is less than $\ell$. 
	
	\noindent {\bf Case 2 (Disconnected Graph):} $G_1$ is not connected, then let $G_1'$ and $G''_1$ be the two graphs considered in Line \ref{bVC:split}. Then,
	\begin{equation*}
	\begin{aligned}
	&\Pr \left[ \sum_{i=1}^t |\textsc{BetterVC}(G_i)| \leq b \right] \\
	=&\Pr \left[ |\textsc{BetterVC}(G'_1)| + |\textsc{BetterVC}(G_1'')|+ \sum_{i=2}^t |\textsc{BetterVC}(G_i)| \leq b \right]\\
	=& P(b, (G_1',G_1'',G_2, \ldots, G_t)) \geq p(b,k).
	\end{aligned}
	\end{equation*}
	Note that since the number of vertices in both $G_1'$ and $G_1''$ is 
	strictly smaller than $M$, the induction claim holds for $b$ and 
	$(G_1',G_1'',G_2, \ldots, G_t)$ from which the last inequality follows.
	
	\noindent {\bf Case 3 (Degree 1 Vertex):} The selection in Line \ref{bVC:deg1} is executed. Then, $G_1$ has a vertex $v$  of degree $1$, and $N(v) = \{u\}$, and $u$ is selected into the cover. Clearly, if $G_1$ has a vertex cover of size $k_1$ then $G_1\setminus \{u\}$ has a vertex cover of size $k_1-1$. Therefore, 
	\begin{equation*}
	P(b,(G_1,\ldots, G_t))= P(b-1, (G_1\setminus \{u\},G_2, \ldots, G_t)) \geq p(b-1,k-1)
	\geq p(b,k).
	\end{equation*}
	The first inequality holds by the induction hypothesis, and the second inequality follows from~\eqref{eq:bvc_rec}. 
	
	\noindent {\bf Case 4 (High Degree Vertex):}
	The algorithm uses the branching in Line \ref{bVC:deg5p}.
	Let $S_1$ be a minimal cover of $G_1$. 
	Denote $d^*=\min\{d, \Delta\}$.
	If  $v \in S_1$,
	then $S_1 \setminus \{v\}$ is a vertex cover of $G_1 \setminus \{v\}$. Also, if $d<\Delta$, then $S_1 \setminus \{v\}$ is also a vertex cover of $G_1 \setminus U$. Therefore, 
	\begin{equation*}
	\begin{aligned}
	&P(b, (G_1, \ldots, G_t))\\
	=& \gamma_{d^*}  \cdot P(b-1, (G_1 \setminus \{u\}, G_2 ,\ldots, G_t) )
	+ (1-\gamma_{d^*})  \cdot P(b-{d^*}, (G_1 \setminus U, G_2 ,\ldots, G_t) )\\
	\geq& \gamma_{d^*} \cdot p (b-1, k-1 ) + (1-\gamma_{d^*}) \cdot p \left(
	b-{d^*}, k- \begin{cases} 1 & \text{if~} d^* < \Delta \\ 0 &  \text{if~} d^*= \Delta\end{cases}
	\right) \geq p(b,k).
	\end{aligned}
	\end{equation*}
	The first inequality follows from the induction hypothesis, the second is due to \eqref{eq:bvc_rec}.
	
		Otherwise, if $v\notin S_1$, then $U\subseteq S_1$. Clearly, $S_1 \setminus U$ is a vertex cover of $G_1\setminus U$. Thus, we get
	\begin{equation*}
	\begin{aligned}
	&P(b, (G_1, \ldots, G_t))\\
	=& \gamma_{d^*}  \cdot P(b-1, (G_1 \setminus \{u\}, G_2 ,\ldots, G_t) )
	+ (1-\gamma_{d^*})  \cdot P(b-d^*, (G_1 \setminus U, G_2 ,\ldots, G_t) )\\
	\geq& \gamma_{d^*} \cdot p (b-1, k) + (1-\gamma_{d^*}) \cdot p \left(
	b-{d^*}, k- d^*
	\right) \geq p(b,k).
	\end{aligned}
	\end{equation*}
	As before, the first inequality is by the induction hypothesis, and the second is due to \eqref{eq:bvc_rec}.
	
	As the claim holds whether $v\in S_1$ or $v\notin S_1$ we get that the induction hypothesis holds for this case.

	\noindent {\bf Case 5 (Regular Graphs):}
	Line \ref{bVC:regular} takes place. Let $S_1$ be a minimal vertex cover 
	of $G_1$. As $S_1$ is a cover we have $v_1\in S_1$ or $v_2\in S_1$. W.l.o.g we may assume $v_1\in S_1$. Clearly, $S_1\setminus \{v_1\}$ 
	is a cover of $G_1 \setminus \{v_1\}$. Now, 
	
	\begin{equation*}
	\begin{aligned}
	&\Pr \left[ \sum_{i=1}^t |\textsc{BetterVC}(G_i)| \leq b \right] \\
	=& \Pr \left[ 1+ \min_{j=1,2 }|\textsc{BetterVC}(G_1 \setminus \{v_j\})| 
	+
	\sum_{i=2}^t |\textsc{BetterVC}(G_i)| \leq b 
	\right]  \\
	\geq& \Pr \left[ 1+|\textsc{BetterVC}(G_1 \setminus \{v_1\})|+  \sum_{i=2}^t |\textsc{BetterVC}(G_i)| \leq b \right] \\
	=& P(b-1 ,  (G_1\setminus \{v_1\}, G_2, \ldots, G_t)) \geq p(b-1,k-1 )  \geq p(b,k).
	\end{aligned}
	\end{equation*}
	The first inequality is since the event set  in the third term 
	is a subset of the event set of the second term. The second 
	inequality follows from the induction claim, and the last inequality 
	is due to \eqref{eq:bvc_rec}.

	\noindent {\bf Case 6 (Degree 2 Triangle):} The algorithm executes Line \ref{bVC:deg2case1}.
	Let $S_1$ be a minimal vertex cover of $G_1$. Note that $|S_1\cap \{x,y,v\}|\geq 2$
	and $S_1 \setminus \{x,y,v\}$ is a vertex cover of $G_1 \setminus \{x,y\}$. 
	Therefore,
	\begin{equation*}
	\begin{aligned}
	P(b, (G_1, \ldots, G_t))
	= P(b-2 ,  (G_1\setminus \{x,y\}, G_2, \ldots, G_t)) \geq p(b-2,k-2 ) \geq p(b,k).
	\end{aligned}
	\end{equation*}
	
	The first inequality follows from the induction hypothesis, and the second from \eqref{eq:bvc_rec}.

	\noindent {\bf Case 7 (Degree 2 Diamond):} Line \ref{bVC:deg2case2} is executed. 
	Let $S_1$ be a minimal vertex cover of $G_1$. Clearly, 
	$|S_1 \cap \{v,x,y,z\}| \geq 2$ and $S_1 \setminus \{v,x,y,z\}$ is 
	a vertex cover of $G_1 \setminus \{z,v\}$. Therefore, as in the previous case,
	\begin{equation*}
	\begin{aligned}
	P(b, (G_1, \ldots, G_t))
	= P(b-2 ,  (G_1\setminus \{z,v\}, G_2, \ldots, G_t)) \geq p(b-2,k-2 ) \geq p(b,k) .
	\end{aligned}
	\end{equation*}
	
	\noindent {\bf Case 8 (Degree 2 Branching):} Line \ref{bVC:deg2case3} is executed. Denote $G_1= (V_1,E_1)$. 
	Since the conditions are not met for {\bf Degree 2 Triangle} (Line \ref{bVC:deg2case1}) and {\bf Degree 2 Diamond} (Line \ref{bVC:deg2case2}), it holds that
	 $(x,y)\notin E_1$ and $|N(x)\cup N(y)| \geq 3$. As the graph does not have vertices of degree $5$ or more, we also have  $\deg(x),\deg(y) \leq 4$.
	We can  conclude that $3 \leq r \leq 7$ (recall that $r= |N(x)\cup N(y)|$).
	
	If there is a minimal vertex cover $S_1$ of $G_1$ such that $v\notin S_1$, then
	$x,y \in S_1$. Clearly, $S_1\setminus \{x,y\}$ is a vertex cover of $G_1\setminus \{x,y\}=G_1\setminus N(v)$.
	Also, it is easy to see that $S_1 \setminus \{x,y\}$ is also a vertex cover of $G_1\setminus (N(x)\cup N(y))$ (we 
	remove vertices which do not belong to the graph). Therefore,
	\begin{equation*}
	\begin{aligned}
	&P(b, (G_1, \ldots, G_t))\\
	=& \lambda_{1,r}  \cdot P(b-2, (G_1 \setminus N(v), G_2 ,\ldots, G_t) )
	+ (1-\lambda_{1,r})  \cdot P(b-r, (G_1 \setminus (N(x)\cup N(y)), G_2 ,\ldots, G_t) )\\
	\geq& \lambda_{1,r} \cdot p (b-2, k-2) + (1-\lambda_{1,r}) \cdot p \left(
	b-r, k- 2
	\right) \geq p(b,k).
	\end{aligned}
	\end{equation*}
	The first inequality follows from the induction hypothesis. The second  is due to \eqref{eq:bvc_rec}.
	
	Otherwise, every minimal vertex cover of $G_1$ includes $v$. Let $S_1$ be a minimal vertex cover of $G_1$. Clearly, $v\in S_1$. We note that $x,y \notin S_1$, since otherwise
	$S_1\setminus \{v\} \cup \{x,y\}$ is a vertex cover of $G_1$ of the same size as $S_1$, in contradiction to our case. Therefore, $N(x) \cup N(y)\subseteq S_1$. 
	Obviously, $S_1 \setminus (N(x) \cup N(y))$ is a vertex cover of $G_1 \setminus (N(x) \cup N(y))$. We also note that $S_1 \setminus \{v\}$  is a cover of $G_1 \setminus N(v)$.
	Therefore, 
	\begin{equation*}
	\begin{aligned}
	&P(b, (G_1, \ldots, G_t))\\
	=& \lambda_{1,r}  \cdot P(b-2, (G_1 \setminus N(v), G_2 ,\ldots, G_t) )
	+ (1-\lambda_{1,r})  \cdot P(b-r, (G_1 \setminus (N(x)\cup N(y)), G_2 ,\ldots, G_t) )\\
	\geq& \lambda_{1,r} \cdot p (b-2, k-1) + (1-\lambda_{1,r}) \cdot p \left(
	b-r, k- r
	\right) \geq p(b,k).
	\end{aligned}
	\end{equation*}
	The first inequality follows from the induction hypothesis. The second  is due to \eqref{eq:bvc_rec}.
	
	\noindent {\bf Case 9 (Degree $3$ Triangle):} Line \ref{bVC:deg3case1} is executed. Since this line of code has been reached, it holds that  $G_1$ has only vertices of  degree $3$ and $4$. Therefore $r=|N(z)| \in \{3,4\}$.
	
	If there is a minimal vertex cover $S_1$ of $G_1$ such that $v\notin S_1$, then $N(v)\subseteq S_1$, and $S_1\setminus N(v)$ is a vertex cover of $G\setminus N(v)$. Also, it is easy to see that $S_1\setminus \{z\}$ is a vertex cover of $G\setminus N(z)$ (after removing vertices which no longer belong to the graph).  Therefore,
	\begin{equation*}
	\begin{aligned}
	&P(b, (G_1, \ldots, G_t))\\
	=& \lambda_{2,r}  \cdot P(b-3, (G_1 \setminus N(v), G_2 ,\ldots, G_t) )
	+ (1-\lambda_{2,r})  \cdot P(b-r, (G_1 \setminus N(z), G_2 ,\ldots, G_t) )\\
	\geq& \lambda_{2,r} \cdot p (b-3, k-3) + (1-\lambda_{2,r}) \cdot p \left(
	b-r, k- 1
	\right) \geq p(b,k).
	\end{aligned}
	\end{equation*}
	The first inequality follows from the induction hypothesis. The second  is due to \eqref{eq:bvc_rec}.
	
	Otherwise, every minimal vertex cover $S_1$ of $G_1$ has $v$ in it. 
	Let $S_1$ be a minimal vertex cover of $G_1$. Clearly, $v\in S_1$. 
	If $|S_1\cap \{x,y,z\}| \geq 2$ then  $S_1\cup \{x,y,z\} \setminus \{v\}$ is a vertex cover of $G_1$ of the same size, contradicting our assumption. Therefore, $|S_1\cap \{x,y,z\}| \leq 1$. Since $x\in S_1$ or $y\in S_1$ (since $(x,y)$ is an edge of $G_1$) we have $z\notin S_1$, and $N(z) \subseteq S_1$. Also, note that $S_1\setminus \{v\}$ is a cover of $G\setminus N(v)$ and $|S_1\setminus \{v\}| \leq |S_1| -1$. 
	Therefore,  
	\begin{equation*}
	\begin{aligned}
	&P(b, (G_1, \ldots, G_t))\\
	=& \lambda_{2,r}  \cdot P(b-3, (G_1 \setminus N(v), G_2 ,\ldots, G_t) )
	+ (1-\lambda_{2,r})  \cdot P(b-r, (G_1 \setminus N(z), G_2 ,\ldots, G_t) )\\
	\geq& \lambda_{2,r} \cdot p (b-3, k-1) + (1-\lambda_{2,r}) \cdot p \left(
	b-r, k- r
	\right) \geq p(b,k).
	\end{aligned}
	\end{equation*}
	The first inequality follows from the induction hypothesis. The second one is due to \eqref{eq:bvc_rec}.

	\noindent {\bf Case 10 (Degree 3 Diamond):} Line \ref{bVC:deg3case2} is executed. 
	
	If there is a minimal vertex cover $S_1$ such that $v\notin S_1$, then,
	\begin{equation*}
	\begin{aligned}
	&P(b, (G_1, \ldots, G_t))\\
	=& \lambda_{3}  \cdot P(b-3, (G_1 \setminus N(v), G_2 ,\ldots, G_t) )
	+ (1-\lambda_{3})  \cdot P(b-2, (G_1 \setminus \{v,w\}, G_2 ,\ldots, G_t) )\\
	\geq& \lambda_{3} \cdot p (b-3, k-3) + (1-\lambda_{3}) \cdot p (
	b-2, k) \geq p(b,k).
	\end{aligned}
	\end{equation*}
	The first inequality follows from the induction hypothesis. The second  is due to \eqref{eq:bvc_rec}.
	
	Otherwise, every minimal vertex cover $S_1$ has $v$ in it. Let $S_1$ be a
	minimal vertex cover of $G_1$. If $|S_1 \cap \{x,y,z\}|\geq 2$ we get 
	get a contradiction to the assumption by removing $v$ from $S_1$ 
	and adding a vertex from $x,y,z$ into it. Therefore $|S_1 \cap \{x,y,z\}|\leq 1$
	and surely $w \in S_1$ (if $w \notin S_1$ then $x,y\in S_1$). 
	We also note that $S_1\setminus \{v\}$ is a vertex 
	cover of $G\setminus N(v)$.
	Therefore,
	\begin{equation*}
	\begin{aligned}
	&P(b, (G_1, \ldots, G_t))\\
	=& \lambda_{3}  \cdot P(b-3, (G_1 \setminus N(v), G_2 ,\ldots, G_t) )
	+ (1-\lambda_{3})  \cdot P(b-2, (G_1 \setminus \{v,w\}, G_2 ,\ldots, G_t) )\\
	\geq& \lambda_{3} \cdot p (b-3, k-1) + (1-\lambda_{3}) \cdot p (
	b-2, k-2) \geq p(b,k).
	\end{aligned}
	\end{equation*}
	The first inequality follows from the induction hypothesis. The second  is due to \eqref{eq:bvc_rec}.

	\noindent {\bf Case 11 (Degree 4 Branching):}  Line \ref{bVC:deg3case3} is executed. Since there are no edges between $x,y,z$ and the vertices
	have no common neighbor beside $v$, we have $r\in \{5,6,7\}$. 
	We  further distinguish between the following  sub-cases.
	
	\begin{enumerate}
		\item \label{bVC:cases11_xnotin}
		If there is a minimal vertex cover $S_1$ of $G_1$ such that $v\notin S_1$, then $N(v) \in S_1$. Clearly, $S_1 \setminus N(v)$ is a vertex cover of $G_1 \setminus N(v)$. Also, $S_1 \setminus \{x\}$ is a vertex cover of $G_1 \setminus N(x)$, 
		and $S_1\setminus N(v)$ is a vertex cover of $G\setminus (\{x\}\cup N(y)\cup N(z))$. Therefore,
		\begin{equation*}
		\begin{aligned}
		&P(b, (G_1, \ldots, G_t))\\
		=& \delta_{r,1}  \cdot P(b-3, (G_1 \setminus N(v), G_2 ,\ldots, G_t) ) + \\
		&\delta_{r,2}  \cdot P(b-4, (G_1 \setminus N(x), G_2 ,\ldots, G_t) ) + \\
		&\delta_{r,3}  \cdot P(b-r-1, (G_1 \setminus (\{x\} \cup N(y)\cup N(z)), G_2 ,\ldots, G_t) )
		\\ 
		\geq &\delta_{r,1}\cdot p (b-3, k-3)  + \delta_{r,2} \cdot p(b-4, k-1 ) + \delta_{r,3} \cdot p(b-r-1, k-3) \geq p(b,k).
		\end{aligned}
		\end{equation*}
		Thus, we may assume  that $v$ is in every minimal cover.
		\item
	There is a minimal cover $S_1$ of $G_1$ such that $x,y,z\notin S_1$. Then $N(x),N(y), N(z) \subseteq S_1$. 
		Now, $S_1\setminus N(x)$ is a vertex cover of $G_1 \setminus N(x)$, $S_1\setminus \{v\}$ is a vertex cover of $G_1 \setminus N(v)$ and $S_1\setminus (N(y)\cup N(z))$ is a vertex cover 
		of $G_1 \setminus (\{x\} \cup N(y) \cup N(z)) $. Therefore,
		\begin{equation*}
		\begin{aligned}
		&P(b, (G_1, \ldots, G_t))\\
		=& \delta_{r,1}  \cdot P(b-3, (G_1 \setminus N(v), G_2 ,\ldots, G_t) ) + \\
		&\delta_{r,2}  \cdot P(b-4, (G_1 \setminus N(x), G_2 ,\ldots, G_t) ) + \\
		&\delta_{r,3}  \cdot P(b-r-1, (G_1 \setminus (\{x\} \cup N(y)\cup N(z)), G_2 ,\ldots, G_t) )
		\\ 
		\geq &\delta_{r,1}\cdot p (b-3, k-1)  + \delta_{r,2} \cdot p(b-4, k-4 ) + \delta_{r,3} \cdot p(b-r-1, k-r) \geq p(b,k).
		\end{aligned}
		\end{equation*}
		
		\item 
		There is a minimal cover $S_1$ of $G_1$ such that $x\notin S_1$, but one of $y,z $ is in $S_1$, w.l.o.g $y\in S_1$. 
		Therefore $N(x), N(z) \subseteq S_1$, and we have that
		$S_1\setminus N(x)$ is a vertex cover of $G_1 \setminus N(x)$, 
		$S_1\setminus \{v,y\}$ is a vertex cover of $G_1\setminus N(v)$ 
		and $S_1 \setminus \left(N(z)\cup \{y\}\right)$ is a vertex cover
		of $G_1 \setminus (\{x\} \cup N(y) \cup N(z))$. Note that $N(z)\geq \ceil{\frac{r}{2}}$.  Therefore,
		\begin{equation*}
		\begin{aligned}
		&P(b, (G_1, \ldots, G_t))\\
		=& \delta_{r,1}  \cdot P(b-3, (G_1 \setminus N(v), G_2 ,\ldots, G_t) ) + \\
		&\delta_{r,2}  \cdot P(b-4, (G_1 \setminus N(x), G_2 ,\ldots, G_t) ) + \\
		&\delta_{r,3}  \cdot P(b-r-1, (G_1 \setminus (\{x\} \cup N(y)\cup N(z)), G_2 ,\ldots, G_t) )
		\\ 
		\geq &\delta_{r,1}\cdot p (b-3, k-2)  + \delta_{r,2} \cdot p(b-4, k-4 ) + \delta_{r,3} \cdot p\left(b-r-1, k-1-\ceil{\frac{r}{2}}\right) \geq p(b,k).
		\end{aligned}
		\end{equation*}
		\item There is a minimal cover $S_1$ such that $x\notin S_1$
		and $y,z \in S_1$. Then $S_1 \cup \{x\}\setminus \{v\}$ is a minimal cover without $v$, and therefore the claim holds due to sub-case \ref{bVC:cases11_xnotin}. 
		\item
		There is a minimal vertex cover $S_1$ such that $x,v\in S_1$.
		If $y\in S_1$ or $z\in S_1$, w.l.o.g $y\in S_1$, then
		$S_1 \cup \{z\} \setminus \{v\}$ is a minimal vertex 
		cover of $G_1$ which does not include $v$. As this situation is 
		already handled in sub-case \ref{bVC:cases11_xnotin}, we 
		may assume $y,z \notin S_1$ and therefore $N(y), N(z) \subseteq S_1$. Now, note that $S_1\setminus \{x,v\}$ is a vertex cover of both $G_1\setminus N(v)$ and $G_1 \setminus N(x)$. Therefore,
		\begin{equation*}
		\begin{aligned}
		&P(b, (G_1, \ldots, G_t))\\
		=& \delta_{r,1}  \cdot P(b-3, (G_1 \setminus N(v), G_2 ,\ldots, G_t) ) + \\
		&\delta_{r,2}  \cdot P(b-4, (G_1 \setminus N(x), G_2 ,\ldots, G_t) ) + \\
		&\delta_{r,3}  \cdot P(b-r-1, (G_1 \setminus (\{x\} \cup N(y)\cup N(z)), G_2 ,\ldots, G_t) )
		\\ 
		\geq &\delta_{r,1}\cdot p (b-3, k-2)  + \delta_{r,2} \cdot p(b-4, k-2 ) + \delta_{r,3} \cdot p(b-r-1, k-r-1) \geq p(b,k).
		\end{aligned}
		\end{equation*}
	\end{enumerate}
\end{proof}

\section{Numerical Methods}
\label{sec:numerical}
\newcommand{\ruleopt}{\textnormal{\texttt{r\_opt}}}
While our main contributions are theoretical, optimizing the parameter values and evaluating the running times of our algorithms required some numerical analysis.
In this section we overview the methods and tools used for obtaining the numerical results. 
We include a Python implementation of these methods as part of the supplementary material. 

We use  Algorithm~\ref{alg:VC3star} as our running example. 
In each recursive call the algorithm finds a vertex $v$ of degree at least $3$. 
The algorithm then
either selects a vertex $v$ to the solution with probability $\gamma_{d}$, or up to $\Delta$ of the neighbors of $v$ with probability $1-\gamma_{d}$, where $d = \min\{\deg(v), \Delta\}$ and $\gamma_3,\ldots, \gamma_{\Delta}$ as well as $\Delta$ are configuration parameters of the algorithm.  As explained in Section~\ref{sec:VC}, given a graph which has a vertex cover of size $k$, the algorithm finds vertex cover of size at most $\alpha \cdot k$ with probability at least $p(\floor{\alpha \cdot k},k)$, where $p$ is the composite recurrence  defined in \eqref{eq:vcstar_rec}.
Technically, $p$ is the composite recurrence of $\terms$ as defined in \eqref{eq:terms_vcstar}. Observe that the set $\terms$ depends on the configuration parameters $\gamma_3,\ldots,\gamma_{\Delta}$. 
 This leads to a parameterized $\alpha$-approximation for vertex cover  in time $\approx \frac{1}{p(\floor{\alpha,k},k)}$. 

Theorem~\ref{thm:rec} shows that $\frac{1}{p(\floor{\alpha k} k)}\approx \left( \exp(M)\right)^k$, where $M$ is the maximal $\alpha$-branching  number of a term in $\terms$. For a fixed $\alpha$ (say $\alpha =1.5$), this leads to two related numerical problems. First, given values for  $\gamma_3,\ldots, \gamma_{\Delta}$, say $\gamma_d= \frac{1}{2}$ for all $3\leq d \leq \Delta$, what is the $\alpha$-branching number of each of the terms in $\terms$. And second, how do we find values for $\gamma_3,\ldots, \gamma_{\Delta}$  for which the maximum $\alpha$-branching number of a term in $\terms$ is minimized. That is, how do we find $\gamma_3,\ldots, \gamma_{\Delta}$ for which the overall running time is minimized. We describe numerical tools which solve the two problems simultaneously. 

While the task of finding the $\alpha$-branching number of a given term is well defined (Definition~\ref{def:alpha_branching}), the second task requires us to provide an abstract viewpoint regarding the structure of our algorithms, and the configuration parameters we aim to optimize. In this abstract viewpoint, 
each of our algorithms consists of $R\in \mathbb{N}_+$ branching rules. For example, Algorithm~\ref{alg:VC3star} involves $\Delta-2$ branching rules: $\Delta-3$ rules for vertices of degree $3\leq d<\Delta$, in which either $v$ or all of its neighbors are selected for the solution, and a single rule for vertices of degree $\Delta$ or more, in which either $v$ or $\Delta$ of its neighbors are selected for the solution.

In our abstract viewpoint, the  $\ell$-th rule, $1\leq \ell \leq R$, has $r_{\ell}$ branching {\em options} and $h_{\ell}$ branching {\em states}. The branching {\em options} reflect the potential actions the algorithm may take, and determines the length of the terms associated with the rule. For example, in Algorithm~\ref{alg:VC3star} each rule has $r_{\ell}=2$ branching options: either  select $v$ or select all (some) of its neighbors for the solution.  
The  branching {\em states} reflect the various cases used in the analysis of the algorithm, and define the number of terms added to the composite recurrence due to the rule. 
In Algorithm~\ref{alg:VC3star}, the number of states for each rule is $h_{\ell}=2$: either $v$ is in some  minimal vertex cover, or not. It is often the case that  the number of branching options and states of each rule is  the same, but this is not always the case. For example, the Degree $4$ Branching rule of Algorithm~\ref{alg:betterVC} (Line~\ref{bVC:deg3case3}) involves $3$ branching options (select $N(v)$, $N(x)$ or $\{x\}\cup N(y)\cup N(z)$ to the solution) while there are $4$ different branching states associated with the same rule in the analysis. 

For each rule $1\leq \ell \leq R$, the algorithm uses a distribution $\bgam^\ell\in \mathbb{R}^{r_{\ell}}$ to randomly select one of the  branching options.  In the case of Algorithm~\ref{alg:VC3star} the distributions are $(\gamma_d, 1-\gamma_d)$ for $3\leq d\leq \Delta$. 
The vector $\bb^{\ell}\in \mathbb{N}_+^{r_{\ell}}$ is the budget decrease incurred by selecting each option. 
The $i$-th entry of $\bb^{\ell}$ is essentially the number of vertices added to the solution in the $i$-th options. In Algorithm~\ref{alg:VC3star} the budget vectors are $(1,d)$ for $3\leq d\leq \Delta$ as either a set of size $1$ ($\{v\}$) or of size $d$ (the neighbors of $v$) is added to the solution.

Each rule  $1\leq \ell \leq R$ is also associated with $h_{\ell}$ vectors $\bk^{\ell,1}, \ldots, \bk^{\ell, h_{\ell}} \in \mathbb{N}^{r_{\ell}}$, where the value $\bk^{\ell, j}_i$ is the decrease in the parameter (coverage) when selecting the $i$-th option of rule $\ell$  while in state $j$. For example, the rule for degree $3$ vertices of Algorithm~\ref{alg:VC3star} has two states: there is a minimum vertex cover with $v$, and there is no minimum vertex cover with $v$. In the former case, the parameter decreases by $1$ if $v$ is selected, and also by $1$ if the neighbors of $v$ are selected. Therefore, the vector associated with this particular state and branching rule is $(1,1)$.

 Using the above notation, we can lower bound the success probability of the algorithm by writing the composite recurrence $p_{\bgam^1, \ldots,\bgam^R}(b,k)$, given by
\begin{equation}
\label{eq:rules_rec}
p_{\bgam^1, \ldots, \bgam^R}(b,k) =
\min_{1\leq \ell \leq R} \min_{~~1\leq j \leq h_\ell ~~} \sum_{i=1}^{r_{\ell}} \bgam^{\ell}_i \cdot p_{\bgam^1, \ldots, \bgam^R} (b-\bb^{\ell}_i , k -\bk^{\ell,j}_i ),
\end{equation}
with the same initial conditions as in \eqref{eq:rec_relation}. 
For example, we can write in (\ref{eq:rules_rec}) $R=\Delta -2$, $h_{\ell} = r_{\ell} = 2$, $\bb^{\ell} = (1, \ell +2)$ for all $1\leq \ell \leq \Delta$, $\bk^{\ell,1} = (1,1)$  and $\bk^{\ell,2} =(0,\ell+2)$ for all $1\leq \ell <\Delta-2$; also,  
$\bk^{\Delta-2,1} = (1,0)$  and $\bk^{\Delta-2,2} =(0,\Delta)$. Then, we have that \eqref{eq:rules_rec} is the same as  \eqref{eq:vcstar_rec}.

For each of our algorithms and a given approximation ratio $\alpha$, to obtain an optimal running time we seek distributions $\bgam^1, \ldots, \bgam^{R}$ that maximize $\lim_{k\rightarrow \infty} \frac{1}{k}\log  
p_{\bgam^1, \ldots, \bgam^R}(\floor{\alpha k} ,k)$. 
By Theorem~\ref{thm:rec},
\begin{align}
\label{eq:global_optimization}
&\max \left\{ 
	\lim_{k\rightarrow \infty} \frac{1}{k}\log  
	p_{\bgam^1, \ldots, \bgam^R}(\floor{\alpha k} ,k)
	\middle|
	\begin{array}{c}
	\forall 1\leq \ell \leq R: \bgam^{\ell}\in \mathbb{R}^{r_{\ell}}\\
	\bgam^1, \ldots, \bgam^R \text{ are distributions}
	\end{array}
\right\}\\
\nonumber
=&
\max \left\{ 
-\max_{1\leq \ell \leq R} \max_{~1\leq j \leq h_{\ell}~} M^{\ell,j}
\middle|
\begin{array}{c}
\forall 1\leq \ell \leq R: \bgam^{\ell}\in \mathbb{R}^{r_{\ell}}\\
\bgam^1, \ldots, \bgam^R \text{ are distributions}\\
M^{\ell,j} \text{ is the $\alpha$-branching number of $(\bb^{\ell}, \bk^{\ell,j}, \bgam^{\ell})$
}
\end{array}
\right\}\\
=&
-
\max_{1\leq \ell \leq R}
\min
 \left\{
 \max_{~1\leq j \leq h_{\ell}~} \frac{ \D{\bq^{j}}{\bgam}} {\bk^{\ell,j} \cdot \bq^j}
\middle|
\begin{array}{c}
\bgam, \bq^1, \ldots, \bq^{h_{\ell}}\in \mathbb{R}^{r_{\ell}} \text{ and are all distributions}\\
\forall 1\leq j \leq h_{\ell}:
\alpha \bq^j \cdot \bk^{\ell,j} \geq \bq^j \cdot \bb^{\ell}
\end{array}
 \right\}
 \label{eq:rule_optimization}
\end{align}

Define the {\em rule opimization problem} as follows. The input is $\alpha\in \mathbb{R}_+$,  $r,h \in \mathbb{N}$, $\bb\in \mathbb{N}_+^{r}$ and $h$ vectors $\bk^1, \ldots, \bk^{h}\in \mathbb{N}^{r}$. The objective is to find distributions 
$\bgam, \bq^1, \ldots, \bq^h \in \nonneg^{r}$ such that, for any $1\leq j \leq h$, it holds that $\alpha \bq^j \cdot  \bk^j \geq \bq^j\cdot \bb$, and 
$\max_{1\leq j \leq h} \frac{\D{\bq^j}{\bgam}}{\bq^j \cdot \bk^j}$ is minimized. That is, the rule optimization problem is 
\begin{equation}
	\label{eq:rule_opt}
\ruleopt(\alpha, r,h,\bb,\bk^1,\ldots, \bk^h) =
\min \left\{
\max_{~1\leq j \leq h~} \frac{ \D{\bq^{j}}{\bgam}} {\bk^{j} \cdot \bq^j}
\middle|
\begin{array}{c}
	\bgam, \bq^1, \ldots, \bq^{h}\in \mathbb{R}^{r} \text{ and are all distributions}\\
	\forall 1\leq j \leq h:
	\alpha \bq^j \cdot \bk^{j} \geq \bq^j \cdot \bb
\end{array}
\right\}
\end{equation}
We can rewrite the problem in \eqref{eq:rule_optimization} as $R$ separate rule optimization problems
\begin{align*}
	&\max \left\{ 
	\lim_{k\rightarrow \infty} \frac{1}{k}\log  
	p_{\bgam^1, \ldots, \bgam^R}(\floor{\alpha k} ,k)
	\middle|
	\begin{array}{c}
		\forall 1\leq \ell \leq R: \bgam^{\ell}\in \mathbb{R}^{r_{\ell}}\\
		\bgam^1, \ldots, \bgam^R \text{ are distributions}
	\end{array}
	\right\}\\
	=& -\max_{1\leq \ell \leq R} \ruleopt(\alpha, r_{\ell}, h_{\ell}, \bb^{\ell}, \bk^{\ell,1},\ldots, \bk^{\ell, h_{\ell} }).
\end{align*}
That is, to evaluate the running of the algorithm it suffices to solve the rule optimization problem separately for each of the rules used by the algorithm. 

In the following we show how the rule optimization problem can be numerically solved. 
We first
show that the rule optimization problem is quasiconvex and discuss the methods used to solve the problems as such.
We then consider a common special case  which has a nearly closed form solution.

\subsection{Quasiconvex Programming} 
\label{sec:quasi}

A set $D\subseteq \mathbb{R}^d$ is {\em convex} if  the line connecting any two points in $D$ is also in $D$. That is, for every $\bar{x}, \bar{y}\in D$ and $\xi \in (0,1)$ it holds that $\xi \cdot \bar{x} + (1-\xi )\cdot \bar{y}\in D$ . 
A function $f:C \rightarrow \mathbb{R}$ is {\em quasiconvex} if $C$ is convex and, for any $\beta \in \mathbb{R}$, the {\em level-set} $\{x\in C~ | f(x)\leq \beta\}$ is convex. A {\em quasiconvex program} is the problem of finding the minimum of a quasiconvex  function $f$ over a convex set $D$ (that is, $\min_{\bar{x}\in D} f(\bar{x})$). 
We remind the reader that a function $f:C \rightarrow \mathbb{R}$ is {\em convex} if $C$ is a convex set, and for every $\bar{x},\bar{y}\in C$ and $\xi\in [0,1]$, it holds that 
$$
f\left( \xi \cdot \bar{x} +(1-\xi )\cdot \bar{y}\right) \leq \xi \cdot f(\bar{x}) +(1-\xi)\cdot f(\bar{y}).
$$
It is easy to show that every convex function is quasiconvex. In contrast, there  are quasiconvex functions, such as $x^3 $, which are not convex. 
 Quasiconvex programming was first defined by Amenta et. al. \cite{ABE99}, and was already used in the context of  multivariate recurrences in \cite{Epp06}.

We use two well-known constructions of quasiconvex functions. The next lemma is a special case of Theorem 1 in~\cite{AB19}.
\begin{lemma}
	\label{lem:quasi_div}
	Let $f:C\rightarrow \mathbb{R}$ be a convex function  where $C\subseteq \mathbb{R}^d$,  and let  $\bar{c}\in \mathbb{R}^d$ be an arbitrary vector. Then 
	$\frac{f(\bar{x})}{\bar{c}\cdot \bar{x}}$ is a quasiconvex function. 
\end{lemma}
We also use the following construction.
\begin{lemma}
	\label{lem:quasi_max}
	Let $f_1,\ldots, f_k:C\rightarrow \mathbb{R}$ be $k$ quasiconvex functions. Then $f(\bar{x})=\max\{f_1(\bar{x}),\ldots,f_k(\bar{x})\}$ is quasiconvex.
\end{lemma}
\begin{proof}
	Let $ \beta\in \mathbb{R}$. The level set of  $f$ corresponding to $\beta$ is 
	$$\{ \bar{x} \in C ~|~f(\bar{x}) \leq \beta \}
	 = \bigcap_{1\leq i \leq k} \{\bar{x}~|~f_i(\bar{x}) \leq \beta\}.
	$$
	Each of the sets $\{\bar{x}~|~f_i(\bar{x}) \leq \beta\}$ is convex as $f_i$ is quasiconvex, therefore their intersection is convex as well. That is, $f$ is quasiconvex.
\end{proof}

We now show that the rule optimization problem is a a  quasicovex program. Fix an instance $\alpha, r, h, \bb, \bk^1,\ldots ,\bk^h$ of the problem. It is well known that Kullback-Leibler divergence is convex (Theorem 2.7.2 cf. \cite{Co06}); therefore, by Lemma~\ref{lem:quasi_div},
the functions  $f_j(\bgam, \bq^1 ,\ldots, \bq^h)=\frac{\D{\bq^j}{\gamma}}{\bk^j \cdot \bq^j}$, $\forall 1\leq j \leq h$, 
are  quasiconvex. Thus, by Lemma~\ref{lem:quasi_max}, the function  $f(\bgam, \bq^1 ,\ldots, \bq^h)= \max_{1\leq j \leq h} f_j(\bgam, \bq^1 ,\ldots, \bq^h)$ is also quasiconvex. Furthermore, the constraints over $\bgam, \bq^1 ,\ldots, \bq^h$ defining the rule optimization problem are all linear; thus, the feasible region is convex.

We used the disciplined quasiconvex programming module  of \textsc{cvxpy} \cite{AB19}, an open source python optimization package, to solve the rule  optimization problems which did not fall into the category of simple rules (see Section~\ref{sec:simple_rules}). Specifically, the results for $3$-Hitting Set (Section \ref{sec:HS}) were evaluated using this method. We encountered numerical accuracy issues when using \textsc{cvxpy}. In such cases, the returned solution was modified to make it a feasible solution. While such changes may harm the optimality of the solution, they can only increase the running times of our algorithms. 

\subsection{Simple Branching Rules}
\label{sec:simple_rules}

Many of the branching rules presented in this paper involve two branching options and two branching states; that is, $h=r=2$.  These include, e.g., all the branching rules of Algorithm~\ref{alg:VC3star}. We refer to such rules as {\em simple}. For such rules, the rule optimization problem has a nearly closed-form solution. 

Let $\bb\in \mathbb{N}_{+}^2$ and $\bk^1,\bk^2 \in \mathbb{N}_{\geq 0}^2$, where $\bk^1$ and $\bk^2$ are not all zeros.
We consider the rule optimization problem of $r=h=2$, $\bb$, the vectors $\bk^1$ and $\bk^2$ and an arbitrary $\alpha> \max_{j\in \{1,2\}} \min_{i\in \{1,2\}:~\bk^j_i \neq 0} \frac{\bb_i}{\bk^j_i}$ (this ensures that we do not consider approximation ratios below the critical ratio, see Definition~\ref{def:critical_ratio}). By~\eqref{eq:rule_opt}, we have
\begin{equation}
	\label{eq:simple_ropt}
	\begin{aligned}
	\ruleopt&(\alpha, 2,2,\bb,\bk^1,\bk^2) =
	\min \left\{
	\max_{~1\leq j \leq 2~} \frac{ \D{\bq^{j}}{\bgam}} {\bk^{j} \cdot \bq^j}
	\middle|
	\begin{array}{c}
		\bgam, \bq^1, \bq^{2}\in \mathbb{R}^{2} \text{ and are all distributions}\\
		\forall 1\leq j \leq 2:
		\alpha \bq^j \cdot \bk^{j} \geq \bq^j \cdot \bb
	\end{array}
	\right\}\\
	=& \min_{~\bgam\in \mathbb{R}^2 \textnormal{~is a distribution}
~	}
	\max_{j\in \{1,2\}} \min\left\{ \frac{ \D{\bq}{\bgam}} {\bk^{j} \cdot \bq}
	\,		\middle|\, 
		\begin{array}{c}
		\bq \in \mathbb{R}^{2} \text{~is a distributions}\\
		\alpha \bq \cdot \bk^{j} \geq \bq \cdot \bb
	\end{array}
	\right\}\\
	=& \min_{~\gamma \in [0,1]
		~	}
	\max_{j\in \{1,2\}} \min\left\{ \frac{ \D{(q,1-q)}{(\gamma,1-\gamma)}} {\bk^{j} \cdot (q,1-q)}
	\,		\middle|\, 
	\begin{array}{c}
		q\in [0,1]\\
		\alpha \cdot (q,1-q) \cdot \bk^{j} \geq (q,1-q) \cdot \bb
	\end{array}
	\right\}.
	\end{aligned}
\end{equation}
For every $j\in \{1,2\}$, define 
\begin{equation}
	\label{eq:Cj_def} C_j = \{ q\in [0,1]\,|\,  \alpha \cdot (q,1-q) \cdot \bk^j \geq (q,1-q) \cdot \bb\}
	\end{equation}
to be the feasible region of the optimization problem in the last term of \eqref{eq:simple_ropt}. We use the shorthand $\D{a}{b}  = \D{(a,1-a)}{(b,1-b)}$ for two scalars $a,b\in [0,1]$ and define
\begin{equation}
	\label{eq:fj_def}
f_j(\gamma) = \min\left\{ \frac{\D{q}{\gamma}}{\bk^j \cdot (q,1-q)}\,\middle|\, q\in C_j \right\}.
\end{equation}
Using the above notation in \eqref{eq:simple_ropt}, we obtain
\begin{equation}
	\label{eq:simple_ropt_by_f}
\ruleopt(\alpha, 2,2,\bb,\bk^1,\bk^2) = \min_{\gamma\in [0,1]} \max \{  f_1(\gamma),f_2(\gamma)\}. 
\end{equation}
Next we show that $f_j$ is monotone and has a closed form.  This means that either \eqref{eq:simple_ropt_by_f} is trivial, or can be solved using a simple binary search. We first show that the set $C_j$ has a simple structure.
\begin{lemma}
For every $j\in [0,1]$ there is $c_j\in [0,1]$ such that $C_j = [c_j,1]$ or $C_j=[0,c_j]$. 
\end{lemma}
\begin{proof}
The set $C_j \subseteq \mathbb{R}$ is defined by three linear inequalities (two of those are $q\geq 0$ and $q\leq 1$, see \eqref{eq:Cj_def}), and is therefore convex.  A convex set in $\mathbb{R}$  is an interval, therefore $C_j=[a,b]$ for some $0\leq a$ and $b\leq 1$. 

Let $i^*\in \{1,2\}$ such that $i^* =  \argmin_{i\in \{1,2\}:~\bk^j_i \neq 0} \frac{\bb_i}{\bk^j_i}$. By our requirements, $\alpha >\frac{\bb_{i^*}}{\bk^j_{i^*}}$; therefore, $\alpha \cdot \bk^j_{i^*} > \bb_{i^*}$. Consider the following two cases:
\begin{itemize}
	\item $i^*=1$.  Then if we select $q=1$, we have that 
	$$\alpha \cdot (q,1-q)\cdot \bk^j = \alpha \cdot (1,0)\cdot \bk^j = \alpha \cdot \bk^j_{i^*} >  \bb_{i^*} =   (1,0)\cdot \bb =   (q,1-q) \cdot \bb,$$
	thus $1\in C_j$. 
	\item $i^*=2$. Then if we select $q=0$ it holds that 
	that 
	$$\alpha \cdot (q,1-q)\cdot \bk^j = \alpha \cdot (0,1)\cdot \bk^j = \alpha \cdot \bk^j_{i^*} >  \bb_{i^*} =   (0,1)\cdot \bb =   (q,1-q) \cdot \bb,$$
	thus $0\in C_j$.
	\end{itemize}
	By the above, either $0\in C_j$ or $1\in C_j$, and therefore $a=0$ or $b=1$. 
\end{proof}
Since $C_j\subseteq \mathbb{R}$ is defined by three inequalities , the interval itself, as well as the value of $c_j$, can be easily found using linear programming.  It is also fairly straightforward to  find a closed formula for $c_j$. 
The next lemma presents the closed formula for $f_j$ and uses it to show monotonicity. 
\begin{lemma}
	For  every $j\in \{1,2\}$ and $\gamma\in[0,1]$, it holds that
	$$f_j(\gamma) = \begin{cases} \frac{ \D{c_j}{\gamma}}{ \bk^j\cdot (c_j,1-c_j)}&\gamma\notin C_j\\
	0& \gamma\in C_j \end{cases} .
	$$
	Furthermore, if $C_j=[c_j,1]$ then $f_j$ is weakly  decreasing, and if $C_j=[0,c_j]$ then $f_j$ is weakly increasing. 
\end{lemma}
\begin{proof}
	Fix $j\in \{1,2\}$.
Define $g(q,\gamma) = \frac{\D{q}{\gamma}}{\bk^j\cdot (q,1-q)}$. Then $f_j(\gamma) = \min_{q\in C_j} g(q,\gamma)$ for all $\gamma\in[0,1]$. Since $\D{}{}$ is non-negative, it follows that $g$ is non-negative, and therefore $f$ is non-negative.  Furthermore, $g(\gamma,\gamma )= \frac{\D{\gamma}{\gamma}}{\bk^j\cdot(\gamma,1-\gamma)}=0$ for every $\gamma$. 

Let $\gamma\in[0,1]$. 
If $\gamma\in C_j$ then $0\leq f_j(\gamma) \leq g(\gamma,\gamma)=0$, and therefore $f_j(\gamma) =0$. 
We now need to consider the case where $\gamma \notin C_j$. 
Since $\D{\cdot}{\gamma}$ 
is convex, it follows from Lemma~\ref{lem:quasi_div} that $g(x,\gamma)$ is quasiconvex as a function of $x$ and has a minimum at $x=\gamma$.
 Therefore, the function is decreasing in $[0,\gamma]$ and increasing in $[\gamma,1]$. 
 As $\gamma\notin C_j$, it holds that $C_j\subseteq [0,\gamma]$ or $C_j\subseteq [\gamma,1]$. In both cases this implies that 
$$
f_j(\gamma) = \min_{q\in C_j} g(q,\gamma) = g(c_j,\gamma) = \frac{\D{c_j}{\gamma}}{\bk^j\cdot(c_j,1-c_j)}. 
$$

It remains to show that $f_j$ is monotone. Since $\D{\cdot}{\cdot}$ is convex, it follows that $g(c_j,x)=\frac{\D{c_j}{x}}{\bk^j\cdot (c_j,1-c_j)}$ is a convex function of $x$. As $g(c_j,c_j)=0$ and $g$ is non-negative this implies that $g(c_j,x)$ has a minimum as $x=c_j$. It therefore holds that $g(c_j,x)$ is decreasing in $[0,c_j]$ and increasing in $[c_j,1]$. Consider the two possible cases:
\begin{itemize}
	\item In case $C_j=[c_j,1]$, we have that $f(\gamma) = g(c_j,\gamma)$ in the interval $[0,c_j)$ and is therefore decreasing in the interval $[0,c_j)$. Furthermore, $g(\gamma)=0$  for every $\gamma\geq c_j$. Thus, the function is weakly decreasing in the interval $[0,1]$. 
	\item In case $C_j=[0,c_j]$, for $\gamma>c_j$ it holds that $f(\gamma) = g(c_j,\gamma)$, and therefore the function is increasing in the interval $(c_j,1]$. Additionally, $f(\gamma) = 0$ for $\gamma \in [0,c_j]$. Thus, the function is weakly increasing in the interval $[0,1]$. 
	\end{itemize}
\end{proof} 

In order to solve \eqref{eq:simple_ropt} we need to consider two cases.
\begin{itemize}
	\item 
	Consider the case in which $C_1\cap C_2 =\emptyset$. In this case, one of the following must hold: ($C_1= [0,c_1]$ and $C_2 = [c_2,1]$) or ($C_1= [c_1,1]$ and $C_2=[0,c_2]$). Therefore, either $f_1$ is increasing and $f_2$ is decreasing, or vice versa. In both cases ,
	$\max \{  f_1(\gamma),f_2(\gamma)\} $ is either monotone, and thus its minimum can be easily found, or is decreasing up to some $\gamma^*$ which satisfies $f_1(\gamma^*) -f_2(\gamma^*)=0$, and then increasing. In the latter case, the minimum is at $\gamma^*$, which  can be easily found using binary search, as $f_1(\gamma)- f_2(\gamma)$ is monotone. 
	\item
	In case $C_1\cap C_2 \neq \emptyset $, let $\gamma^* \in C_1\cap C_2$. Then, it holds that 
	$$
	0\leq \ruleopt(\alpha, 2,2,\bb,\bk^1,\bk^2) = \min_{\gamma\in [0,1]}\max \{  f_1(\gamma),f_2(\gamma)\} \leq \max\{ f_1(\gamma^*) ,f_2(\gamma^*) \} = \max\{0,0\} =0.
	$$
	Therefore, in this case, $ \ruleopt(\alpha, 2,2,\bb,\bk^1,\bk^2)=0$ and an optimal value of $\bgam$ is $(\gamma^*,1-\gamma^*)$.  
\end{itemize}

\section{The Analysis of Two-variable Recurrence Relations}
\label{sec:rec}

In this section we give the proof of Theorem~\ref{thm:rec} which formalizes the asymptotic behavior of composite recurrences. Recall that a composite recurrence is defined by a set $\terms= \{(\bb^j,\bk^j,\bgam^j)~|~1\leq j\leq N\}$ of $N$ {\em terms}. The $j$-th term  $(\bb^j,\bk^j,\bgam^j)$ consists of three vectors, each of $r_j$ dimensions.
We require $\bb^j \in \mathbb{N}_{>0}^{r_j}$, $\bk^j\in \mathbb{N}_{\geq 0}^{r_j}$ and $\bgam^j\in \mathbb{R}_{>0}$. We further require that $\bk^j$ is not all zeros, and $\bgam^j$ is a distribution. The composite recurrence of $\terms$ is the function $p:\mathbb{Z}\times \mathbb{Z}\rightarrow [0,1]$ defined by
\begin{equation*}
	\begin{aligned}
		p(b,k) &= \min_{1\leq j \leq N}  \sum_{i=1}^{r_j} \bgam^j_i \cdot p(b-\bb^j_i, k-\bk^j_i) &\\
		p(b,k) &= 0 & \forall b<0, k \in \mathbb{Z} \\
		p(b,k) &= 1 & \forall  b\geq 0, k \leq 0
	\end{aligned}
\end{equation*}

Theorem~\ref{thm:rec} deals with the asymptotic behavior of composite recurrences. It relies on two technical notions: {\em critical ratio} and {\em branching-numbers}. The {\em critical ratio} of the term $(\bb^j,\bk^j,\bgam^j)$ is $\min_{1\leq i\leq r_j: \bk^j_i\neq 0} \frac{\bb^j_i}{\bk^j_i}$. In particular, it ensures that the following definition is sound. 
\alphabranching*
Theorem~\ref{thm:rec} states that $p(\floor{\alpha k},k)$ is dominated by the highest $\alpha$-branching number of a term in~$\terms$.
\main*

The proof of Theorem~\ref{thm:rec} consists of several stages which
together yield the statement of the theorem.

 The analysis relies on an equivalence between  a composite recurrence and the  probability of a rare event in a specifically designed random walk with an adversary. We describe the random walk and prove this equivalence in Section~\ref{sec:random_walk}. 

We used an adaptation of the {\em method of types} (see, e.g., \cite{Co06,C98}) to analyze the random walks associated with the recurrences. We elaborate on types and prove the basic properties associated with them in Section~\ref{sec:types}. 

Once the random walk and required tools are defined we can proceed to the core of the proof of Theorem~\ref{thm:rec}. 
The main idea in the proof is to consider the composite recurrence $p_{\gamma}$ of the terms  $\{(\bb^j,\bk^j,\bgam^j)\,|\,j\in N\}$, and a second composite recurrence $p_{\delta}$ which is defined by the terms  $\{(\bb^j,\bk^j,\bdel^j)\,|\,j\in N\}$. That is, in $p_{\delta}$ we change the probability vectors from $\bgam^j$ to $\bdel^j$. The vectors $\bdel^j$ are carefully selected vectors strictly inside the feasibility region of \eqref{eq:alpha_num}.   A main idea of our proof is to consider the probability of the  {\em same} event in two different random walks: one associated with $p_{\gamma}$ and the terms $\{(\bb^j,\bk^j,\bgam^j)\,|\,j\in N\}$, and another associated with $p_{\delta}$ and the terms $\{(\bb^j,\bk^j,\bdel^j)\,|\,j\in N\}$.

Towards this end, we first show the following lemma.
\begin{restatable}{lemma}{trivial}
	\label{lem:trivial_rec}
	Let $p$ be the composite recurrence of $\{(\bb^j, \bk^j, \delta^j ) |~ 1\leq j \leq N\}$, and  $\alpha>0$. Also, assume that $\bb^j \cdot \bdel^j < \alpha \cdot \bk^j\cdot \bdel^j$ for all $j\in [N]$. Then  $\lim_{k\rightarrow \infty} \frac{1}{k} \cdot \ln p(\floor{\alpha\cdot k},k)= 0$. 
\end{restatable} 
Lemma~\ref{lem:trivial_rec} gives a condition over the terms of the recurrence. If the condition holds then $p(\floor{\alpha \cdot k},k )$ is {\em high}.  The condition on $\bdel^j$ in Lemma~\ref{lem:trivial_rec} is  equivalent to ``$\bdel^j$ is in the interior of the feasiblity region of the optimization problem in \eqref{eq:alpha_num} which defines the branching number". Thus, intuitively the lemma states that, if the  probability vectors of the terms  are strictly inside  the feasibility region of \eqref{eq:alpha_num},  then $p(\floor{\alpha \cdot k} , k)$ is expected to be high. Furthermore, in such cases the $\alpha$-branching number of $(\bb^j,\bk^j,\bdel^j)$ is equal to zero, hence, Lemma~\ref{lem:trivial_rec} is a special case of Theorem~\ref{thm:rec}. 
The proof of Lemma~\ref{lem:trivial_rec}, given in Section~\ref{sec:trivial}, relies on the equivalence between recurrences and random walks, and utilizes the method of types. We precede the proof with an intuitive interpretation of the lemma through the lens of the random walk. 

The next stage in the proof of Theorem~\ref{thm:rec} is a lemma which shows a connection between the asymptotic behavior of two  different recurrences.
\begin{restatable}{lemma}{translation}
	\label{lem:translation}
	Let $p_{\gamma}$ be the composite recurrence of $\{(\bb^j, \bk^j, \bgam^j ) \,|\, 1\leq j \leq N\}$,  $p_{\delta}$ the composite recurrence of $\{(\bb^j, \bk^j, \bdel^j )\,|\, 1\leq j \leq N\}$, and $\alpha> 0$. 
	If $\lim_{k\rightarrow \infty} \frac{1}{k} \ln p_{\delta}(\floor{\alpha \cdot k},k)=0$ then
	$$\liminf_{k\rightarrow \infty} \frac{1}{k} \cdot \ln p_{\gamma}(\floor{\alpha\cdot k},k) \,\geq\, -\max_{j\in[N] } \frac{\D{\bdel^j}{\bgam^j}}{\bdel^j\cdot \bk^j}.$$
\end{restatable}
Observe that $p_{\delta}$ and $p_{\gamma}$, the recurrences in Lemma~\ref{lem:translation}, only differ in the distribution vectors $\bgam^j$ and $\bdel^j$ (for $j\in [N]$) which define their terms. The proof of the lemma, given in Section~\ref{sec:translation}, considers two random walks, one that is associated with $p_{\gamma}$ and another that is associated with $p_{\delta}$. The key idea is to consider an event which has a high probability in the random walk associated with $p_{\delta}$, and evaluate its probability in the random walk associated with $p_{\gamma}$. That is, the proof considers the probability of the same event in two different probability spaces.  The translation of probabilities between spaces follows from the method of types.

Together, Lemmas~\ref{lem:trivial_rec} and~\ref{lem:translation} also give some insight into the formula of   the branching numbers given in \eqref{eq:alpha_num}.
To use Lemma~\ref{lem:translation}, one may wish to choose values for $\delta^j$ which satisfy the conditions of Lemma~\ref{lem:trivial_rec} and yield the best possible bound over $p_{\gamma}$ according to Lemma~\ref{lem:translation}. Up to minor technicalities, this leads to Definition~\ref{def:alpha_branching}. The next lemma  follows this logic.
\begin{restatable}{lemma}{reclower}
	\label{lem:rec_lower}
	Let $p$ be the composite recurrence of $\{(\bb^j, \bk^j, \bgam^j ) |~ 1\leq j \leq N\}$, and  $\alpha>0$ such that $\alpha > \critical(\bb^j,\bk^j,\bgam^j)$ for $1\leq j\leq N$. Denote by $M_j$ the
	$\alpha$-branching number of $(\bb^j, \bk^j, \bgam^j)$, 
	and let $M=\max\{M_j | 1\leq j \leq N\}$.  
	Then,
	\[\liminf_{k\rightarrow \infty} \frac{1}{k}\cdot \log p\left(\floor{\alpha \cdot k} , k\right)  \geq  -M. \]
\end{restatable}
The proof of Lemma~\ref{lem:rec_lower} is given in Section~\ref{sec:proof_lower}. We note that Lemma~\ref{lem:rec_lower} suffices for all the algorithmic applications in this paper. However,  it is unclear if the lower bound given in  Lemma~\ref{lem:rec_lower}  is tight. To show that the bound is tight, thereby completing the proof of Theorem~\ref{thm:rec}, we show the following. 
\begin{restatable}{lemma}{recupper}
	\label{lem:rec_upper}
	Let $p$ be the composite recurrence of $\{(\bb^j, \bk^j, \bgam^j ) |~ 1\leq j \leq N\}$, and  $\alpha>0$ such that $\alpha > \critical(\bb^j,\bk^j,\bgam^j)$ for $1\leq j\leq N$. Denote by $M_j$ the
	$\alpha$-branching number of $(\bb^j, \bk^j, \bgam^j)$, 
	and let $M=\max\{M_j | 1\leq j \leq N\}$.  
	Then,
	\[\limsup_{k\rightarrow \infty} \frac{1}{k}\cdot \log p\left(\floor{\alpha k} , k\right) \leq  -M. \]
\end{restatable}
The proof of Lemma~\ref{lem:rec_upper} is given in Section~\ref{sec:proof_upper}. The proof of the lemma follows from a fairly simple application of the method of types. Theorem~\ref{thm:rec} follows immediately from Lemmas~\ref{lem:rec_lower}  and~\ref{lem:rec_upper}.

\subsection{Random Walk with an Adversary} 
\label{sec:random_walk}
In this section we describe the random walk and show its equivalence to composite recurrences. We start with intuitive description of the random walk followed by some formal definitions. We then show the equivalence between the recurrence and the random walk. To this end, we present an alternative formula for composite recurrences and then show the equivalence to the random walk using this formula.

\subsubsection{An Informal Description of the Random Walk} 
\label{sec:inf_walk}
\begin{figure}
	\centering
	\begin{tikzpicture}[scale=0.75]
		\draw[step=1cm,gray,very thin,dashed] (0,-1) grid (15,8);
		\draw[thick,->] (0,-1) -- (15,-1) ;
		\draw[thick,->] (0,-1) -- (0,8);
		
		\pgfsetlinewidth{0.0pt}

		\foreach \x in {0,1,2,3,4,5,6,7,8,9,10,11,12,13,14,15}
		\draw (\x cm,-1cm-2pt) -- (\x cm,-1cm-2pt)  node[anchor=north] {$\x$};
		
		\foreach \y in { -1,0,1,2,3,4,5,6,7,8}
		\draw (2pt,\y cm) -- (-2pt,\y cm) node[anchor=east] {$\y$};
		
		\filldraw[blue] (0,0) circle (3pt);
		
		\draw[blue,->,thin,dashed] (0,0) -- (2.9,0);
		\filldraw[blue] (3,0) circle (3pt) node[anchor=north ] {$(X_1,Y_1)=(3,0)$};
		
		\draw[blue,->,thin,dashed] (3.1,0.1) -- (6-0.1,3-0.1);
		\filldraw[blue] (6,3) circle (3pt) node[anchor=north west ] {$(X_2,Y_2)=(6,3)$};
		
		\draw[blue,->,thin,dashed] (6.1,3.1) -- (9-0.1,6-0.1);
		\filldraw[blue] (9,6) circle (3pt) node[anchor=north west] {$(X_3,Y_3)=(9,6)$};
		
		\draw[blue,->,thin,dashed] (9.1,6.1) -- (10-0.1,7-0.1);
		\filldraw[blue] (10,7) circle (3pt) node[anchor=south] {$(X_4,Y_4)=(10,7)$};
		
		\draw[blue,->,thin,dashed] (10.1,7) -- (11-0.1,7);
		\filldraw[blue] (11,7) circle (3pt) node[anchor=north west ] {$(X_5,Y_5)=(11,7)$};
		
	\end{tikzpicture}
	\caption{An instance of the first five steps  of a random walk associate with the composite recurrence in \eqref{eq:vc3intro}, i.e., the composite recurrence of $\left\{( \bb^j,\bk^j,\bgam^j)~|~j\in \{1,2\}\right\}$ where $\bgam^1 =\bgam^2 = (\gamma, 1-\gamma)$, $\bb^1=\bb^2= (1,3)$, $\bk^1 = (1,0)$ and $\bk^2=(0,3)$.   The walk corresponds to the adversary selecting $j_1= 1$, $j_2=2$, $j_3 = 2$ ,$j_4= 1$ and $j_5=2$, and random samples $i_1=2$, $i_2=2$, $i_3=2$, $i_4=1$ and $i_5=1$.
	}
	\label{fig:sec6_walk}
\end{figure}
Let $\left\{(\bb^j,\bk^j,\bgam^j)\,\middle|\, j\in[N]\right\}$ be a set of $N$ terms, and assume the term $(\bb^j,\bk^j,\bgam^j)$ is of length $r_j$. 
We consider a random walk which starts at $(X_0,Y_0)=(0,0)$ and at the $n$-th step is positioned at $(X_n,Y_n) \in \mathbb{N}\times \mathbb{N}$.  The walk is a generalization of the walk presented in Section~\ref{sec:int_walk} for a specific and simple composite recurrence.
At the $n$-th step  of the walk an {\em adversary} selects a value $j_n \in [N]$, and then a value $1\leq i_{n}\leq r_j$ is sampled according to $\bgam^{j_n}$. That is , $\Pr(i_n = i) = \bgam^{j_n}_i$ for all $1\leq i\leq r_{j_n}$. The next position of the walk is set to $$(X_n,Y_n) \, = \, (X_{n-1}, Y_{n-1}) +(\bb^{j}_{i}, \bk^{j}_{i}) \,=\, \left(X_{n-1}+\bb^{j}_{i}, Y_{n-1} +\bk^{j}_{i}\right),$$
where $i=i_n$ and $j=j_n$. That is, the position at the $n$-th step moves by $(\bb^{j_n}_{i_n},\bk^{j_n}_{i_n})$. In particular, the adversary selects the {\em term} $(\bb^j,\bk^j,\bgam^j)$ that will be used for the $n$-th step, but the step itself is still random.

We allow the adversary to make  its decision based  on the path made so far by the random walk. We define the {\em trace} of the walk up to step $n$ by $A_1= (j_1,i_1),A_2=(j_2,i_2),\ldots,A_n=(j_n,i_n)$. Observe that  that path of the walk up to the $n$-th step, $(X_0,Y_0),\ldots, (X_n,Y_n)$, is determined by the  trace of the walk up to the $n$-th step. We consider adversaries which set the value of $j_n$ based on the trace of the walk up to step $n-1$.  An illustration of the random walk is given in Figure~\ref{fig:sec6_walk}. Observe that in case $N=1$ the adversary always selects $j=1$, and the random walk is identical to the one defined in Section~\ref{sec:int_walk}.

Let  $p$ be the composite recurrence of $\left\{(\bb^j,\bk^j,\bgam^j)\,\middle|\, j\in[N]\right\}$. Our main claim is that for any $b,k\in \mathbb{Z}\times \mathbb{Z}$ it holds that $p(b,k)$ is the probability that in the above random walk there is a step $n\in \mathbb{N}$  in which $X_n\leq b$ and $Y_n\geq k$, when the adversary selects an optimal strategy that {\em minimizes} the probability of such an event.

\subsubsection{The Random Walk} 
\label{sec:random_walk_formal}
Following the motivation from the previous section, we  formally define the random walk.  
Let $\terms = \left\{(\bb^j,\bk^j,\bgam^j)\,\middle|\, j\in[N]\right\}$ be a set of $N$ terms, and assume the term $(\bb^j,\bk^j,\bgam^j)$ is of length $r_j$.  
In this formal definition, we first define the trace of the walk, and subsequently use the trace to define the walk itself. 
We define the alphabet associated with  the $j$-th term by $\chi_j = \{(j,i)\,|\,i\in [r_j]\}$ and the alphabet of the process by \begin{equation}
	\label{eq:chidef}
	\chi\,=\,\bigcup_{j=1}^{N} \chi_j\,=\, \left\{ (j,i) \,|\, j\in [N], \,i\in [r_j]\right\}.
\end{equation}

In particular, in the terminology of Section~\ref{sec:inf_walk}, the trace up to step $n$ is a vector in $\chi^n$.  We define a sequence  $(A_n)_{n=1}^{\infty}$ of random variables, where $A_n \in \chi$ for every $n\in \mathbb{N}_{>0}$. We associate the step $\left( \bb^j_i,\bk^j_i\right) $ with every $(j,i)\in \chi$.
Define $\kappa(j,i) = \bk^j_i$ and $\beta(j,i)=\bb^j_i$ for every $(j,i)\in \chi$. 
With a slight abuse of notation we define $\kappa( a_1,\ldots, a_n) = \sum_{\ell=1}^{n} \kappa(a_\ell)$ and  $\beta( a_1,\ldots, a_n) = \sum_{\ell=1}^{n} \beta(a_\ell)$ for every $n\in \mathbb{N}$ and $(a_1,\ldots, a_n)\in \chi^n$ .  Furthermore, define $\kappa(\emptyvec) =\beta(\emptyvec)= 0$  where $\emptyvec$ is the vector of dimension $0$. The position of the random walk after $n$ steps is $$\big(\beta(A_1,\ldots, A_n),\, \kappa(A_1\,\ldots ,A_n)\bigg)=\bigg(\beta(A_1,\ldots,A_{n-1}), \kappa(A_1,\ldots,A_{n-1}) \bigg) + \bigg( \beta(A_n),\, \kappa(A_n)\bigg).$$

Recall that $\chi^* = \bigcup_{n=0}^{\infty} \chi^n$ is the set of all vectors of finite dimension with entries in $\chi$.  A {\em strategy} is a function $S:\chi^*\rightarrow [N]$. We define $\mS$ to be the set of all strategies. The strategy reflects the choices made by the adversary. 

We define a random process which depends both on the terms $\terms=\left\{(\bb^j,\bk^j,\bgam^j)\,\middle|\, j\in[N]\right\}$  and a strategy $S\in \mS$.  
We use $(\Omega,\mF,\Pr_{S})$ to denote the probability space associated with the walk defined by strategy $S$.  Without loss of generality we assume the sample space $\Omega$ and event space $\mF$ do not depend on the strategy $S$. 

For every $j\in [N]$ define $(I^j_n)_{n\geq 1}$ as an infinite series of i.i.d where $\Pr_{S}(I^j_n = i )=\bgam^j_i$ for every $i\in [r_j]$. 
We define 
$A_n = (j, I^j_n )\in \chi$ where   $j=S(A_1,\ldots, A_{n-1})$. Observe that  the probability distribution of $I^j_n$  is determined  by $\bgam^j$ for every $j\in [N]$, and that the strategy $S$ determines the value of $j$ used for $A_n$ according to $A_1,\ldots, A_{n-1}$.

We define a random walk $\bigg((X_n,Y_n) \bigg)_{n=0}^{\infty}$, by $X_n = \beta(A_1,\ldots, A_n)$ and $Y_n = \kappa(A_1,\ldots, A_n)$ for every $n\geq 0$. 
We observe this random walk indeed matches the intuitive description where the strategy $S$ reflects the choices made by the adversary.
Before the $n$-th step the walk is positioned at $(X_{n-1},Y_{n-1})$, the adversary selects a value $j=S(A_1,\ldots, A_{n-1})$ which depends on the path taken by the process so far. Then $I^j_n$ is randomly selected and is distributed according to $\bgam^j$. Finally, the next position of the walk is $$(X_n,Y_n) \,=\, (X_{n-1},Y_{n-1}) + \left(\beta(j, I^j_n ),\, \kappa(j, I^j_n )\right) \,= \, (X_{n-1},Y_{n-1}) + \left(\bb^j_{I^j_n}, \bk^j_{I^j_n}\right).$$

For every $b,k\in \mathbb{Z}$  define the event \begin{equation}
	\label{eq:Gdef}
	G^{b,k}  = \{ \exists n\geq 0:~X_n\leq b \textnormal{ and } Y_n\geq k\}.
	\	\end{equation}
That is, $G^{b,k}$ is the event in which the walk crosses  $k$ on its $y$-axis before it crosses $b$ on the $x$-axis.  
We show the following connection between the random walk and $p$, the composite recurrence of $\terms$. 
\begin{lemma}
	\label{lem:comp_to_walk}
	Let $p$ be the composite recurrence of $\terms$, then $p(b,k) = \min_{S\in \mS} \Pr_{S} (G^{b,k})$ for all $b,k \in \mathbb{Z}$.
\end{lemma}
In words, $p(b,k)$ is the probability of the event $G^{b,k}$, when the adversary selects the strategy $S\in \mS$ which minimizes this probability.  We note that the lemma also implies  that $\min_{S\in \mS} \Pr_S(G^{b,k})$ is defined.
The remainder of Section~\ref{sec:random_walk} is dedicated to the proof of Lemma~\ref{lem:comp_to_walk}, which is simple yet slightly involved technically.
Before we prove Lemma~\ref{lem:comp_to_walk} we give an alternative formula for composite recurrences which replaces the $\min$ operation in \eqref{eq:rec_relation} with a strategy $S$. We note the alternative formula is only used as a stepping stone towards the proof of Lemma~\ref{lem:comp_to_walk}.  

\subsubsection{Strategic Composite Recurrences}

As in the previous sections, let  $\left\{(\bb^j,\bk^j,\bgam^j)\,\middle|\, j\in[N]\right\}$ be a set of $N$ terms and assume the term $(\bb^j,\bk^j,\bgam^j)$ is of length $r_j$.  
 Similarly, we use $\mS$ to denote the set of all strategies $S:\chi^*\rightarrow [N]$. 
 We define a new type of recurrences, {\em strategic composite recurrences}, which serve as a bridge between the composite recurrence and the random walk. Like composite recurrences they do not involve any probability space and similarly the random walk they involve a strategy. The idea is that instead of using the $\min$ operation in \eqref{eq:rec_relation} which picks a value of $j$, the  strategic composite recurrence would choose $j$ by a given strategy $S$. More specifically, the value of $j$ would be selected to be $S(\epsilon)$, the adversary selection for $j$  in the first step of the random walk. In the recurrence formula, the strategy~$S$ is replaced with a new strategy which emulates the adversary after a step. 
 
  Given a strategy $S$ and $(a_1,\ldots, a_n) \in \chi^n$ we define a new strategy 
$S_{(a_1,\ldots, a_n)}: \chi^*\rightarrow [N]$ by 
\begin{equation}
	\label{eq:strategy_sub}
	S_{(a_1,\ldots, a_n)}(c_1,\ldots, c_m ) = S( a_1,\ldots, a_n,c_1,\ldots,c_m)
\end{equation}
for every $(c_1,\ldots, c_m)\in \chi^*$.  That is, $S_{(a_1,\ldots, a_n)}$ is the strategy which $S$ uses in step $(n+1)$ and onward,
assuming the trace of the  first $n$ steps of the random walk is  $a_1,\ldots, a_n$.  In case $n=1$ we use $S_{a_1}=S_{(a_1)}$.  Intuitively, $S_{a_1}$ is the strategy the adversary uses as of step $1$ of the random walk, if $A_1=a_1$. 

The {\em strategic composite recurrence} of $\left\{(\bb^j,\bk^j,\bgam^j)\,\middle|\, j\in[N]\right\}$  is the function $\tp:\mS\times \mathbb{Z}\times \mathbb{Z}\rightarrow [0,1]$ defined by
\begin{equation}
	\label{eq:alternative_rec}
	\begin{aligned}
		\tp(S,b,k) &= \sum_{i=1}^{r_j} \bgam^j_i \cdot \tp\left(S_{(j,i)}, b-\bb^j_i, k-\bk^j_i\right) & \textnormal{ where $j=S(\emptyvec)$}\\
		\tp(S,b,k) &= 0 & \forall b<0, k \in \mathbb{Z} \\
		\tp(S,b,k) &= 1 & \forall  b\geq 0, k \leq 0
	\end{aligned}
\end{equation}
The formula of the strategic composite recurrence resembles the formula for composite recurrence \eqref{eq:rec_relation}.  Observe that instead of the minimum operation in \eqref{eq:rec_relation}, in \eqref{eq:alternative_rec} the value of $j$ in the recurrence is determined by $S(\emptyvec)$ (recall that $\emptyvec$ is the vector of dimension $0$).  Furthermore, note that in the recurrence in \eqref{eq:alternative_rec} the strategy $S$ is replaced with $S_{(j,i)}$. 

The following lemma states that composite recurrences are equivalent to strategic composite recurrences, if the chosen strategy is the one which minimizes the value of the function. 
\begin{lemma}
	\label{lem:alternative_equiv}
	Let $p$ and $\tp$  be the composite recurrence and strategic composite recurrence of $\left\{(\bb^j,\bk^j,\bgam^j)\,\middle|\, j\in[N]\right\}$ (resp.). Then $p(b,k) = \min_{S\in \mS} \tp(S,b,k)$ for every $b,k\in \mathbb{Z}$.
\end{lemma}
\begin{proof}
	We prove the lemma by showing inequalities in both directions. 
	\begin{claim}
		\label{claim:equiv_dir1}
		For every $b,k\in \mathbb{Z}$ and strategy $S\in \mS$ it holds that $p(b,k) \leq \tp(S,b,k)$. 
	\end{claim}
	\begin{claimproof}
		We prove the claim using a simple   induction on $b$. \\
		\noindent {\bf base case.} if $b<0$ then $\tp(S,b,k) = 0=p(b,k)$ by definition, and the claim holds. \\
		\noindent {\bf induction step.} let $b\geq 0$ and assume $p(b',k')\leq \tp(S',b',k')$ for every $b'<b$, $k'\in \mathbb{Z}$ and strategy $S'\in \mS$. 
		Let 
		$k\in \mathbb{Z}$ and $S\in \mS$. If $k\leq 0$ then 
		$\tp(S,b,k) =1= p (b,k)$ and the claim holds. It remains to handle the case where $k>0$. Let $j^*=S(\emptyvec)$. Then,
		$$
		\begin{aligned}
			\tp(S,b,k) \, &= \,  \sum_{i=1}^{r_{j^*}} \bgam^{j^*}_i \cdot \tp\left(S_{(j^*,i)}, b-\bb^{j^*}_i, k-\bk^{j^*}_i\right) \\
			&\geq \, \sum_{i=1}^{r_{j^*} }  \bgam^{j^*}_i \cdot p \left(b-\bb^{j^*}_i, k-\bk^{j^*}_i\right)\\
			&\geq \, \min_{1\leq j\leq N}  \sum_{i=1}^{r_{j} }  \bgam^{j}_i \cdot p \left(b-\bb^{j}_i, k-\bk^{j}_i\right)\\
			&=\, p(b,k)
		\end{aligned}
		$$
		The first equality holds by  the definition of strategic composite recurrences \eqref{eq:alternative_rec}. The first inequality follows from the induction hypothesis (recall that $\bb^j_i>0$). The last equality follows from the definition of composite recurrences \eqref{eq:rec_relation}. 
	\end{claimproof}
	
	The next claim essentially shows the inequality in the opposite direction to Claim~\ref{claim:equiv_dir1}.
	\begin{claim}
		\label{claim:equiv_dir2}
		For every $b,k\in \mathbb{Z}$ there exists a strategy $S^*\in \mS$ such that $p(b,k) = p(S^*,b,k)$. 
	\end{claim}
	\begin{claimproof}
		If $b<0$ then $p(b,k) = 0=\tp(S,b,k)$ for every strategy $S\in \mS$. Furthermore, if $b\geq 0$ and $k\leq 0$ then $p(b,k) = 1=\tp(S,b,k)$ for every strategy $S\in \mS$. Thus, it remains to handle the case where $k>0$ and $b\geq 0$.

		We define a strategy which mimics the outcome of the $\min$ operation in \eqref{eq:rec_relation}. 
		Recall we defined $\beta(j,i) = \bb^j_i$ and $\kappa(j,i)= \bk^j_i$ for every $(j,i)\in \chi$ , and $\beta(a_1,\ldots,a_n)  = \sum_{\ell=1}^{n}\beta(a_\ell)$ as well as $\kappa(a_1,\ldots,a_n)  = \sum_{\ell=1}^{n}\kappa(a_\ell)$   for all $a_1,\ldots,a_n\in \chi$. 
		Define a strategy $S^*:\chi^*\rightarrow [N]$ by 
		\begin{equation}
			\label{eq:Sstar_def}
			S^*(a_1,\ldots, a_n) \, = \,\argmin_{1\leq j \leq N} \sum_{i=1}^{r_j} \bgam^j_i \cdot p(b'-\bb^j_i, k'-\bk^j_i) \textnormal{ where } 
			b'=b-\beta(a_1,\ldots,a_n) \textnormal{ and } k'=k-\kappa(a_1,\ldots,a_n)
		\end{equation}
		
		We prove by reverse induction on $n$ that for every $n\in \mathbb{N}_{\geq 0}$ and $(a_1,\ldots, a_n) \in \chi^n$ it holds that 
		$p(b',k') = \tp(S^*_{(a_1,\ldots, a_n)}, b',k' )$ where 	$b'=b-\beta(a_1,\ldots,a_n)$  and  $k'=k-\kappa(a_1,\ldots,a_n)$\\
		\noindent{\bf base case.} Let $n>b$ and $(a_1,\ldots, a_n)\in \chi^n$.  Then $$b' =b-\beta(a_1,\ldots,a_n) = b-\sum_{\ell=1}^{n} \beta(a_{\ell})   \leq  b-\sum_{\ell=1}^{n}1 <0,$$ 
		where the first inequality holds as $\beta((j,i))=\bb^j_i \geq1$ for all $(j,i)\in \chi$.   Therefore, $p(b',k')= 0 =\tp(S^*_{(a_1,\ldots,a_n)} ,b',k')$. \\
		\noindent{\bf induction step.} Let $n\leq b$ and $(a_1,\ldots , a_n)\in \chi^n$. As in the induction hypothesis we use $b'=b-\beta(a_1,\ldots,a_n)$  and  $k'=k-\kappa(a_1,\ldots,a_n)$. If $b'<0$ then $p(b',k')=0=\tp(S_{(a_1,\ldots,a_n)},b',k')$, and similarly if $b'\geq 0$ and $k'\leq 0$ it holds that $p(b',k')=1=\tp(S_{(a_1,\ldots,a_n)},b',k')$. Thus, we only need to show the induction hypothesis holds in case $b'\geq 0$ and $k'>0$. Let $j^* = S^*_{(a_1,\ldots, a_n)}(\emptyvec )=S^*(a_1,\ldots,a_n)$. Then, 
		$$
		\begin{aligned}
			\tp(S^*_{(a_1,\ldots, a_n)}, b',k')\,&=\,  \sum_{i=1}^{r_{j^*}} \bgam^{j^*}_i \cdot \tp\left(\left(S^*_{ (a_1,\ldots, a_n)}\right)_{(j^*,i)}, b'-\bb^{j^*}_i, k'-\bk^{j^*}_i\right) \\
			&=\, \sum_{i=1}^{r_{j^*}} \bgam^{j^*}_i \cdot \tp\left(S^*_{ \left(a_1,\ldots, a_n,(j^*,i)\right)}, b'-\bb^{j^*}_i, k'-\bk^{j^*}_i \right) \\
			&=\, \sum_{i=1}^{r_{j^*}} \bgam^{j^*}_i \cdot p\left( b'-\bb^{j^*}_i, k'-\bk^{j^*}_i \right) \\
			&=\, \min_{1\leq j\leq N}\sum_{i=1}^{r_{j}} \bgam^{j}_i \cdot p\left( b'-\bb^{j}_i, k'-\bk^{j}_i \right)\,=\, p(b',k').
		\end{aligned}
		$$
		The first equality is due to \eqref{eq:alternative_rec}, and the second equality holds as $(S^*_{(a_1,\ldots,a_n)})_{a} = S^*_{(a_1,\ldots, a_n,a)}$. The third equality holds by the induction hypothesis: observe that
		$b'-\bb^{j^*}_i  = b-\beta(a_1,\ldots, a_n, (j^*,i) )$ and $k'-\bk^{j^*_i} = k-\kappa(a_1,\ldots,a_n,(j^*,i))$.  The fourth holds as $j^* = S^*(a_1,\ldots,a_n)$ and by the definition of $S^*$ \eqref{eq:Sstar_def}. The last equality follows from the definition composite recurrences \eqref{eq:rec_relation}.  Thus, we completed the proof of the induction step.
		
		Hence, we have that $p(b',k') = \tp(S^*_{\emptyvec},b',k')$ where $b'=b-\beta(\emptyvec) = b$ and $k'=\kappa(\emptyvec) = k$, which completes the proof of the claim. 
	\end{claimproof}
	
	By Claims~\ref{claim:equiv_dir1} and~\ref{claim:equiv_dir2}, it follows that $p(b,k)= \min_{S\in \mS} \tp(S,b,k)$ for all $b,k\in\mathbb{Z}$, which completes the proof of the lemma. 
\end{proof}

\subsubsection{From Strategic Recurrences to Random Walks}
\label{sec:strat_to_walk}

We still need to prove Lemma~\ref{lem:comp_to_walk}. To do so, we show an auxiliary lemma (Lemma~\ref{lem:alternative_to_walk}) that establishes equivalence between {\em strategic} composite recurrences and the random walk. Together with Lemma~\ref{lem:alternative_equiv}, this auxiliary lemma leads to the proof of Lemma~\ref{lem:comp_to_walk}.  As before, let  $\terms = \left\{(\bb^j,\bk^j,\bgam^j)\,\middle|\, j\in[N]\right\}$ be a set of $N$ terms, and assume the term $(\bb^j,\bk^j,\bgam^j)$ is of length $r_j$. Furthermore, we use the random variables as defined in Section~\ref{sec:random_walk_formal}. 

We use some additional properties of the random walk. 
\begin{definition}
	\label{def:consistent}
	Let $S\in \mS$ be a strategy and $(a_1,\ldots, a_n)\in \chi^n$. Also, let $a_{\ell} = (j_{\ell} ,i_{\ell})$ for every $1\leq \ell\leq n$. We say that $(a_1,\ldots, a_n)$ is {\em consistent} with $S$ if $j_{\ell} = S(a_1,\ldots, a_{\ell-1})$ for every $1\leq \ell\leq n$.
\end{definition} 

Recall that $A_n = (j, I^j_n)$ where $j=S(A_1,\ldots, A_{n-1})$). Thus, $(a_1,\ldots, a_n)$ is consistent with $S$ if the random variables $A_1,\ldots, A_n$ can potentially take the value $(a_1,\ldots, a_n)$, given that the adversary uses $S$ as a strategy.

\begin{lemma}
	\label{lem:inconsistent_prob}
	Let $S\in \mS$ be a strategy and $(a_1,\ldots, a_n)\in \chi^n$ such that $(a_1,\ldots, a_n)$ is {\bf not} consistent with $S$. Then $\Pr_S( (A_1,\ldots,A_n)=(a_1,\ldots ,a_n))=0$. 
\end{lemma}
\begin{proof}
We note that as $(a_1,\ldots, a_n)$ is not consistent with $S$, there is 
	an index  $1\leq \ell^*\leq n$ such that  $j_{\ell^*}\neq S(a_1,\ldots ,a_{\ell^*-1})$. Then,
	$$
	\begin{aligned}
		\Pr_S( (A_1,\ldots, A_n ) =(a_1,\ldots, a_n))\,& =\, \prod_{\ell=1}^n \Pr_S\left(A_{\ell} = a_{\ell}\, \middle| (A_1,\ldots, A_{\ell-1})=(a_1,\ldots,a_{\ell-1})\right)\\
		&\leq\, \Pr_{S}\left(A_{\ell^*} = a_{\ell^*}\, \middle| (A_1,\ldots, A_{\ell^*-1})=(a_1,\ldots,a_{\ell^*-1})\right)\,\\
		&= \frac{ \Pr_{S}\left((A_1,\ldots, A_{\ell^*-1})=(a_1,\ldots,a_{\ell^*-1}) \textnormal{ and } A_{\ell^*} =a_{\ell^*} \right)}{ \Pr_{S} \left((A_1,\ldots, A_{\ell^*-1})=(a_1,\ldots,a_{\ell^*-1})\right)}  \,=\,0
		,  	\end{aligned}
	$$
	where the last equality holds since $A_{\ell^*} = a_{\ell^*}$ and $(A_1,\ldots, A_{\ell^*-1})= (a_1,\ldots,a_{\ell^*-1})$  implies that $S(a_1,\ldots,a_{\ell^*-1}) = S(A_1,\ldots, A_{\ell^*-1}) = j_{\ell^*}$,  which does not  hold by the selection of $\ell^*$. 
\end{proof}

Recall that $S_{a_1}$ is defined in \eqref{eq:strategy_sub}. We use below the next observation. 
\begin{obs}
	\label{obs:consistency}
	For any $n\geq 1$, $(a_1,\ldots, a_n) \in \chi^n$ and strategy $S\in \mS$ is holds that $(a_1,\ldots,a_n)$ is consistent with $S$ if and only if $a_1$ is consistent with $S$ and $(a_2,\ldots, a_n)$ is consistent with $S_{a_1}$. 
\end{obs}	

We also use the following technical lemma.
\begin{lemma}
	\label{lem:one_let_rec}
	For every strategy $S\in \mS$, $n\in \mathbb{N}_{>0}$ and $(a_1,\ldots, a_n) \in \chi^n$ 
	it holds that 
	$$\Pr_{S}\left( (A_1,\ldots A_n)=(a_1,\ldots, a_n ) \right) 
	=\Pr_{S}(A_1 = a_1) \cdot \Pr_{S_{a_1}} \left( (A_1, \ldots,A _{n-1})  = (a_2,\ldots ,a_n) \right)$$
\end{lemma}
Observe the right-hand term involves two different probability distributions: $\Pr_S$ and $\Pr_{S_{a_1}}$, the first uses $S$ as a strategy, and the second uses $S_{a_1}$. 
Intuitively, the probability space $\Pr_{S_{a_1}}$ can be viewed as ``restart'' of the random walk after a single step, assuming $A_1=a_1$. The lemma formally reflects this intuition.

\begin{proof}[Proof of Lemma~\ref{lem:one_let_rec}]
	Consider the following cases.
	\begin{itemize}
		\item
		If $(a_1,\ldots, a_n)$ is not consistent with $S$ then, by Lemma~\ref{lem:inconsistent_prob}, $\Pr_S( (a_1,\ldots ,a_n))=0$. Furthermore, by Observation~\ref{obs:consistency}, $(a_1)$ is not consistent with $S$ or $(a_2,\ldots,a_n)$ is not consistent with $S_{a_1}$. Thus, by Lemma~\ref{lem:inconsistent_prob}, $\Pr_S(A_1=a_1) =0$, or $\Pr_{S_{a_1}}( (A_1,\ldots, A_{n-1}) = (a_2,\ldots ,a_n))=0$. We therefore have that 
		$$\Pr_{S}\left( (A_1,\ldots A_n)=(a_1,\ldots, a_n ) \right) 
		\,=\,0\,=\,\Pr_{S}(A_1 = a_1) \cdot \Pr_{S_{a_1}} \left( (A_1, \ldots,A _{n-1})  = (a_2,\ldots ,a_n) \right).$$
		\item 
		If $(a_1,\ldots, a_n)$ is  consistent with $S$, 
		let $a_{\ell} = (j_{\ell},i_{\ell})$ for every $1\leq \ell \leq n$.   Then,
		\begin{equation*}
			\begin{aligned}
				\Pr_{S}&\left( (A_1,\ldots A_n)=(a_1,\ldots, a_n ) \right) =
				\Pr_{S} \left(\forall 1\leq \ell \leq n: 
				I^{S(a_1,\ldots ,a_{\ell-1})}_\ell = i_{\ell}
				\right)\\
				&=
				\Pr_{S} \left( 
				I^{j_1}_1 = i_{1}
				\right)\cdot
				\Pr_{S} \left(\forall 2\leq \ell \leq n: 
				I^{j_{\ell}}_\ell = i_{\ell}
				\right)\\
				&= 	\Pr_{S} \left( 
				A_1 = a_{1}
				\right)\cdot \prod_{\ell =2}^n \bgam^{j_{\ell}}_{i_{\ell}}\\
				&=
				\Pr_{S} \left( 
				A_1 = a_{1}
				\right)\cdot
				\Pr_{S_{a_1}} \left(\forall 1\leq \ell \leq n-1: 
				I^{j_{\ell+1}}_\ell = i_{\ell+1}
				\right)\\
				&=
				\Pr_{S} \left( 
				A_1 = a_{1}
				\right)\cdot
				\Pr_{S_{a_1}} \left(\forall 1\leq \ell \leq n-1: 
				I^{S_{a_1}(a_2,\ldots,a_{\ell+1})}_\ell = i_{\ell+1}
				\right)\\
				&=
				\Pr_{S} \left( 
				A_1 = a_{1}
				\right)\cdot
				\Pr_{S_{a_1}} \left(
				(A_1,\ldots, A_{n-1}) = (a_2, \ldots , a_n)
				\right).
			\end{aligned}
		\end{equation*}
		The first and fifth equalities  hold as $(a_1,\ldots,a_n)$ and $(a_2,\ldots,a_n)$ are consistent with $S$ and $S_{a_1}$ (resp.). The second equality holds as the variables $I^j_n$ are independent. The third and fourth equalities  follow from $\Pr_S(I^j_\ell = i)=\Pr_{S_{a_1}}(I^j_\ell = i) = \bgam^j_i$. 
	\end{itemize}
	As the statement of the lemma holds in both cases, this completes the proof. 
\end{proof}

We use the above to show equivalence between a strategic recurrence and the random walk which uses the same strategy.
\begin{lemma}
	\label{lem:alternative_to_walk}
	Let $\tp$ be the stragetic composite recurrence of $\terms$, then $\tp(S,b,k) = \Pr_{S} (G^{b,k})$ for every $b,k\in \mathbb{Z}$ and strategy $S\in \mS$.
\end{lemma}
\begin{proof}
	
	For every $b\geq 0$ and $k>0$ define 
	\begin{equation}
		\label{eq:Z_def}
		Z^{b,k}= \left\{ (a_1, \ldots ,a_n )\in \Chi^* \middle| \begin{array}{l}
			\beta(a_1,\ldots, a_n)\leq b\\
			\kappa(a_1,\ldots, a_n)\geq k \\
			\kappa(a_1,\ldots, a_{n-1})< k
		\end{array}\right\}.
	\end{equation}
	That is, $Z^{b,k}$ is the collection of all vectors $(a_1,\ldots, a_n)\in \chi^*$ such that if $(A_1,\ldots, A_n) = (a_1,\ldots, a_n)$ then the $n$-th position of the walk, $(X_n,Y_n)$, is the {\em first} position in which $Y_n \geq k$, and furthermore, $X_n\leq b$ (recall that $X_n =\beta(A_1,\ldots, A_n)$ and $Y_n=\kappa(A_1,\ldots, A_n)$).  We note that the vectors in $Z^{b,k}$ are of varying dimensions. An important property of $Z^{b,k}$ is that it is {\em prefix free}; that is, if $(a_1,\ldots, a_n)\in Z^{b,k}$ then $(a_1,\ldots, a_{\ell}) \notin Z^{b,k}$ for every $\ell < n$. We also define $Z^{b,k} = \emptyset$ for $b<0$ and arbitrary $k$, and $Z^{b,k} =\{\emptyvec\}$ in case $b\geq 0$ and $k\leq 0$.  
	
	The set $Z^{b,k}$ has a recursive structure. It can be easily verified that
	\begin{equation}
		\label{eq:recZ}
		Z^{b,k}= \left\{ (a_1, \ldots ,a_n )\in \Chi^* \middle| 
		(a_2,\ldots,a_n) \in Z^{b-\beta(a_1), k-\kappa(a_1)} 
		\right\},
	\end{equation}
	for every $b\geq 0$ and $k>0$.
	Recall the event $G^{b,k}$ defined in~\eqref{eq:Gdef}. We can formulate $G^{b,k}$ using $Z^{b,k}$ by
	\begin{equation}
		\label{eq:GviaZ}
		\begin{aligned}
			G^{b,k}  \,&= \, \{ \exists n\geq 0:~X_n\leq b \textnormal{ and } Y_n\geq k\}\\
			\, &=\,  \{ \exists n\geq 0:~X_n\leq b,  Y_{n}\geq k \textnormal{ and }  Y_{n-1}<k\} \\
			&=\,  \{ \exists n\geq 0:~\beta(A_1,\ldots,A_n)\leq b \textnormal{ and } \kappa(A_1,\ldots,A_n)\geq k \textnormal{ and }  \kappa(A_1,\ldots,A_{n-1})<k\} \\
			\, &=\,  \{ \exists (a_1,\ldots,a_n)\in Z^{b,k}: ~(A_1,\ldots, A_n) = (a_1,\ldots, a_n)\} 
		\end{aligned}
	\end{equation}
	for all $b,k\in \mathbb{Z}$ such that $b\geq 0$ and $k>0$.   Since $Z^{b,k}$ is prefix free the events $(A_1,\ldots, A_n) =(a_1,\ldots,a_n)$ for different vectors $(a_1,\ldots,a_n) \in Z^{b,k}$ are disjoint. Therefore, for every strategy $S\in \mS$, $b\geq0$ and $k>0$, we have
	\begin{equation}
		\label{eq:strat_to_wakl1}
		\begin{aligned}
			\Pr_{S}& \left(G^{b,k} \right) \,=\, \sum_{(a_1,\ldots ,a_ n) \in Z^{b,k}} \Pr_{S}\left( (A_1,\ldots,A_n) =(a_1,\ldots, a_n) \right) \\
			&= \,\sum_{(a_1,\ldots ,a_ n) \in Z^{b,k}}  \Pr_{S}\left(A_1=a_1\right)\cdot \Pr_{S_{a_1}}\left( (A_1,\ldots,A_{n-1}) =(a_2,\ldots, a_n)\right)\\
			&= \, \sum_{(j,i)\in \chi} \sum_{~(a_2,\ldots ,a_ n) \in Z^{b-\beta(j,i),k-\kappa(j,i)}~}  \Pr_{S}\left(A_1= (j,i)\right)\cdot \Pr_{S_{(j,i)}}\left( (A_1,\ldots,A_{n-1}) =(a_2,\ldots, a_n)\right)\\
			& = \, \sum_{(j,i)\in \chi} \Pr_{S}\left(A_1= (j,i)\right)\cdot \sum_{~(a_2,\ldots ,a_ n) \in Z^{b-\bb^j_i,k-\bk^j_i}~}   \Pr_{S_{(j,i)}}\left( (A_1,\ldots,A_{n-1}) =(a_2,\ldots, a_n)\right)\\
			&= \, \sum_{(j,i)\in \chi} \Pr_{S}\left(A_1= (j,i)\right) \cdot \Pr_{S_{(j,i)}}\left(G^{b-\bb^j_i, k-\bk^j_i}\right).
		\end{aligned}
	\end{equation}
	The first and last equalities follow from the representation of $G^{b,k}$ via  $Z^{b,k}$, as given in~\eqref{eq:GviaZ}. The second equality uses Lemma~\ref{lem:one_let_rec}, the third equality holds due to the recursive structure of $Z^{b,k}$, as given in~\eqref{eq:recZ}, and the fourth equality uses $\beta(j,i ) = \bb^j_i$ and $\kappa(j,i) =\bk^j_i$. 
	
	By the definition of the random variable $A_1$, it holds that $A_1=\left(S(\emptyvec), I^{S(\emptyvec)}_1\right)$; therefore, $\Pr(A_1 = (j,i))=\bgam^{j}_i$ if $j=S(\emptyvec)$ and $\Pr(A_1 = (j,i))=0$ if $j\neq S(\emptyvec)$. Plugging this observation into~\eqref{eq:strat_to_wakl1}, we get that for every $b\geq 0$, $k>0$ and a strategy $S\in \mS$, it holds that
	\begin{equation}
		\label{eq:strat_to_walk2}
		\begin{aligned}
			\Pr_{S}& \left(G^{b,k} \right) 
			\,= \, \sum_{(j,i)\in \chi} \Pr_{S}\left(A_1= (j,i)\right) \cdot \Pr_{S_{(j,i)}}\left(G^{b-\bb^j_i, k-\bk^j_i}\right)\\
			&=\, \sum_{i=1}^{r_{j^*}} \bgam^{j^*}_i \cdot \Pr_{S_{(j^*,i)}}\left(G^{b-\bb^{j^*}_i, k-\bk^{j^*}_i}\right),
		\end{aligned}
	\end{equation}
	where $j^*=S(\emptyvec)$. 
	
	We use \eqref{eq:strat_to_walk2} to  show that $\tp(S,b,k) = \Pr_{S} (G^{b,k})$ for every $b,k\in \mathbb{Z}$ and strategy $S\in \mS$, by induction on $b$.\\
	\noindent{\bf base case.} If $b<0$ then $\tp(S,b,k) = 0 = \Pr_{S} (G^{b,k})$, due to \eqref{eq:Gdef} and \eqref{eq:alternative_rec}. \\
	\noindent{\bf induction step.}  Let $b\geq 0$, $k\in \mathbb{Z}$ and $S\in \mS$.
	In case $k\leq 0$ it holds that  $\tp(S,b,k) =1= \Pr_{S} (G^{b,k})$   due to \eqref{eq:Gdef} and \eqref{eq:alternative_rec}. If $k>0$ then by \eqref{eq:strat_to_walk2}, we have
	$$\Pr_S(G^{b,k}) \,=\,  \sum_{i=1}^{r_{j^*}} \bgam^{j^*}_i \cdot \Pr_{S_{(j^*,i)}}\left(G^{b-\bb^{j^*}_i, k-\bk^{j^*}_i}\right) \,=\,  \sum_{i=1}^{r_{j^*}} \bgam^{j^*}_i \cdot \tp(S_{(j^*,i)} ,b-\bb^{j^*}_i,k-\bk^{j^*}_i) \, =\, \tp(S,b,k),
	$$
	where $j^*=S(\emptyvec)$. The second equality follows from the induction hypothesis, and the last equality is due to the definition of strategic composite recurrences \eqref{eq:alternative_rec}. Thus completes the induction step.
	
	Overall, we showed that  $\tp(S,b,k) = \Pr_{S} (G^{b,k})$ for every $b,k\in \mathbb{Z}$ and strategy $S\in \mS$, as required. 
\end{proof} 

Lemma~\ref{lem:comp_to_walk} is a simple consequence of Lemmas~\ref{lem:alternative_equiv} and~\ref{lem:alternative_to_walk}.
\begin{proof}[Proof of Lemma~\ref{lem:comp_to_walk}]
	Let  $\tp$  be the composite recurrence and strategic composite recurrence of $\terms$. 
	Then for every $b,k\in \mathbb{Z}$, we have
	$$
	p(b,k) \,=\, \min_{S\in \mS} \tp(S,b,k) \,=\, \min_{S\in \mS} \Pr_{S}\left( G^{b,k}\right),
	$$
	where the first equality follows from Lemma~\ref{lem:alternative_equiv} and the second equality uses Lemma~\ref{lem:alternative_to_walk}. 
\end{proof}

\subsection{Types}
\label{sec:types}
Our analysis relies on the notion of {\em types}. As in the previous sections, we fix a set of $N$ terms  $\terms = \left\{(\bb^j,\bk^j,\bgam^j)\,\middle|\, j\in[N]\right\}$ where  $(\bb^j,\bk^j,\bgam^j)$ is of length $r_j$. We refer to the random process and the random variables associated with it (e.g., $A_n$, $I^j_n$, $X_n$ and $Y_n$), as defined in Section~\ref{sec:random_walk_formal}.  Recall that the set $\chi$ is defined in \eqref{eq:chidef}.

The {\em type}
of  $(a_1, \ldots, a_n) \in \Chi^n$, denoted  $\type(a_1, \ldots, a_n) =T\in \nonneg^{\chi}$,  is defined by $T_a=\frac{|\{\ell | a_{\ell }= a \}|}{n}$ for every $a\in \chi$. That is, $T_a$ is the frequency of each $a\in \Chi$  in $(a_1,\ldots, a_n)$. 
For example, the type $T$ of $( (1,1),(2,1),(1,1)) \in \chi^3$ is $T_{(1,1)} =\frac{2}{3}$, $T_{(2,1)}=\frac{1}{3}$ and $T_{(j,i)}=0$ for every other $(j,i)\in \chi$. 
Observe that $\sum_{a\in \chi } T_a = 1$, thus the type $T$ can be interpreted as a distribution over $\chi$. 

Surprisingly, this simple notion is highly powerful in proving various combinatorial and probabilistic properties, if those can be expressed in terms of types (see, e.g., \cite{C98,Co06}). 
In our case, as we show below, the position $(X_n,Y_n)$ of the random walk can be expressed in terms of the type of the random variables $A_1,\ldots, A_n$. Subsequently, the events $G^{b,k}$ (defined in~\eqref{eq:Gdef}) can be expressed in terms of types. As $p(b,k)$ can be expressed as the probability of the event $G^{b,k}$ (see Lemma~\ref{lem:comp_to_walk}), this allows us to obtain an estimate for~$p(b,k)$.

As defined above, $\kappa(j,i) =\bk^{j}_i$ ($\beta(j,i) =\bb^{j}_i$) for all $(j,i)\in \chi$, and $\kappa(a_1,\ldots,a_n) =\sum_{\ell=1}^{n }\kappa(a_{\ell})$ ($\beta(a_1,\ldots,a_n) =\sum_{\ell=1}^{n }\beta(a_{\ell})$) for every $n\in \mathbb{N}$ and $(a_1,\ldots, a_n)\in \chi^n$. With a slight abuse of notation, we extend the definition of $\kappa$, and $\beta$ to types. Given $T\in \mathbb{R}_{\geq 0}^{\chi}$, we define $\kappa(T) = \sum_{ a \in \chi } T_{a}\cdot \kappa(a)$ and $\beta(T) = \sum_{ a \in \chi } T_{a}\cdot \beta(a)$. That is, $\kappa(T)$ is the expected value of $\kappa(A)$, assuming $A\in\chi$ is a random variable such that $\Pr(A=a) = T_a$ for all $a\in \chi$. 

For every $n\in \mathbb{N}$ and $(a_1,\ldots, a_n) \in \chi^n$, it holds that
\begin{equation}
	\label{eq:kappa_by_type_explicit}
\kappa(a_1,\ldots, a_n) \,=\,  \sum_{\ell =1}^{n} \kappa(a_{\ell}) \,=\, \sum_{a\in \chi } \kappa(a) \cdot \left|\{\ell | a_{\ell }=a\} \right| \, = \, \sum_{a\in \chi } \kappa(a) \cdot n\cdot T_a \,=\, n\cdot \kappa(T),
\end{equation}
and 
\begin{equation}
	\label{eq:beta_by_type}
	\beta(a_1,\ldots, a_n) \,=\,  \sum_{\ell =1}^{n} \beta(a_{\ell}) \,=\, \sum_{a\in \chi } \beta(a) \cdot \left|\{\ell | a_{\ell }=a\} \right| \, = \, \sum_{a\in \chi } \beta(a) \cdot n\cdot T_a \,=\, n\cdot \beta(T),
\end{equation}
where $T= \type(a_1,\ldots, a_n)$.
Therefore, 
\begin{equation}
	\label{eq:kappa_by_type}
	(X_n, Y_n) \,=\, (\beta (A_1,\ldots ,A_n), \kappa(A_1,\ldots ,A_n)  ) \,=\, (n\cdot \beta(\type(A_1,\ldots, A_n)), n\cdot \kappa(\type(A_1,\ldots ,A_n))).
\end{equation}
That is, the position of the random walk after $n$ steps is a function of the type of $A_1,\ldots,A_n$.  Consequently, 
$$
G^{b,k } \,=\, \{ \exists n\geq 0:~X_n\leq b \textnormal{ and } Y_n\geq k\}  = \, \{ \exists n\geq 0: \type(A_1,\ldots, A_n) \in Q_{b,k,n}\},$$
where $Q_{b,k,n} = \left\{ T\in \mathbb{R}_{\geq 0}^{\chi} ~|~\beta(T)\leq \frac{b}{n} \textnormal { and }  \kappa(T)\geq \frac{k}{n} \right\}$. Therefore, 
$$
p(b,k) \, =\, \min_{S\in \mS} \Pr_S (G^{b,k}) \, =\, \min_{S\in \mS} \Pr_{S} \left(  \exists n\geq 0: \type(A_1,\ldots, A_n) \in Q_{b,k,n}\right).
$$

We note that the {\em method of types} exactly deals with estimation of probabilities of the form $\Pr(\type(A_1,\ldots ,A_n) \in Q)$, and thus promises to be useful in our case. In its standard form, the method is used for independent random variables $A_1,\ldots, A_n$ \cite{Co06}. However, this is not the case in our random process (due to the existence of multiple terms and the adversary). To overcome this hurdle, we show that some properties of types can be adjusted for our random process	. 

The first property of types is the somewhat trivial observation that the number of types of vectors of length $n$ is {\em polynomial} in $n$. 
This is in contrast to the number of vectors of length $n$, which is {\em exponential} in $n$.
Define 
\begin{equation}
	\label{eq:Kndef}
	\mK_n  = \left\{  \frac{1}{n} \cdot \bm ~\Big|~ \bm \in \{0,1,\ldots, n\}^{\chi}   \right\}.
\end{equation}
In the above definition, the set $\mK_n$ consists of  all the scalar by vector multiplications of the scalar $\frac{1}{n}$ and a vector $\bm$ of dimension $|\chi|= \sum_{j=1}^{N} r_j$ with integral entries in the range $0$ to $n$. 
\begin{obs}
	\label{obs:types}
	For every $n\in \mathbb{N}$ and $(a_1,\ldots ,a_n) \in \chi^n$ it holds that $\type(a_1,\ldots, a_n) \in \mK_n$. Furthermore,   $|\mK_n| = (n+1)^{|\chi|}$.
\end{obs}
We note that the above observation is independent of the random process, and is indeed proved in classic textbooks (e.g., Theorem 11.1.1 in \cite{Co06}).  Observation~\ref{obs:types} is commonly used to estimate the probability of an event via the probability of the most common type satisfying the event. That is, for every  $Q\subseteq \mathbb{R}^{\chi}_{\geq 0}$ it holds that  
$$\Pr_S(\type(A_1,\ldots, A_n)\in Q) \,\leq\, (n+1)^{|\chi|}\cdot \max_{T\in \mK_n \cap Q} \Pr_S(\type(A_1,\ldots, A_n)=T),$$
and 
$$ \Pr_S(\type(A_1,\ldots, A_n)\in Q)   \,\geq\, \max_{T\in \mK_n \cap Q} \Pr_S(\type(A_1,\ldots, A_n)=T).$$
That is, $$\Pr_S(\type(A_1,\ldots, A_n)\in Q)\, \approx\,  \max_{T\in \mK_n \cap Q}   \Pr_S(\type(A_1,\ldots, A_n)=T),$$ up to polynomial factors. 

The second property of types we use below is that the probability of the event $(A_1,\ldots, A_n) =(a_1,\ldots,a_n)$ only depends on the type of $(a_1,\ldots,a_n)$ and on whether $(a_1,\ldots, a_n)$ is consistent with the strategy $S$ (see Definition~\ref{def:consistent}). 
This property is analogous to a classic property of types implying the probability that a sequence of $n$ {\em i.i.d.} $A_1, \ldots , A_n$ taking values in $\chi$ satisfies $(A_1, \ldots , A_n) = (a_1,\ldots,a_n) \in \chi^n$ only depends on the type of $(a_1,\ldots, a_n)$ (see Theorem 11.1.2 in \cite{Co06}).

The {\em entropy} of a type $T\in \mK_n$ is given by
$$
\entropy(T )\, =\, \sum_{a\in \chi} T_a\cdot \ln \frac{1}{T_a}\,.
$$
We also (symbolically) extend the definition of Kullback-Leibler divergence  to measure the divergence between a type $T\in \mK_n$ and the vectors $\left(\bgam^{j}\right)_{j=1}^{N}$. With a slight abuse of notation, we use $\bgam$ to refer to the $N$ vectors $\left(\bgam^{j}\right)_{j=1}^{N}$.  We define 
\begin{equation}
	\label{eq:div_type}
	\D{T}{\bgam} \,= \, \sum_{(j,i)\in \chi } T_{(j,i)}\cdot \ln \frac{T_{(j,i)}}{\bgam^j_i}.
\end{equation}
\begin{lemma}
	\label{lem:type_prob}
	Let $(a_1, \ldots a_n) \in \Chi^n$,  $T=\type(a_1,\ldots, a_n)$ and $S\in \mS$. If $(a_1,\ldots,a_n)$ is consistent with the strategy $S$ then 
	$$\Pr_S\bigg( (A_1,\ldots, A_n ) = (a_1,\ldots, a_n)\bigg)  \, = \, \exp\bigg(  -n \left( \entropy(T) + \D{T}{\bgam}\right) \bigg).
	$$
\end{lemma}

\begin{proof}
	Let $a_{\ell }= (j_{\ell}, i_{\ell})$ for every $1\leq \ell \leq n$. Recall that the case in which $(a_1,\ldots,a_n)$ is not consistent with $S$ was considered in Lemma~\ref{lem:inconsistent_prob}. Assume then that $(a_1,\ldots ,a_n)$ is consistent with $S$. Thus, $j_{\ell } =S(a_1,\ldots,a_{\ell-1})$ for all $1\leq \ell \leq n$. It follows that
	\begin{equation*}
		\begin{aligned}
			\Pr_{S}&\left( (A_1,\ldots A_n)=(a_1,\ldots, a_n ) \right) \,=\,
			\Pr_{S} \left(\forall 1\leq \ell \leq n: 
			I^{j_{\ell}}_\ell = i_{\ell}
			\right)
			&= 	 \,\prod_{\ell =1}^n \bgam^{j_{\ell}}_{i_{\ell}}
			&= 	\, \prod_{(j,i)\in \chi}^n \left( \bgam^j_i\right)^{n\cdot T_{(j,i)}},
		\end{aligned}
	\end{equation*}
	where the last equality holds as $(j,i)$ appears $n\cdot T_{(j,i)}$ many times in $a_1,\ldots, a_n$. Using simple algebraic manipulations, we have
	\begin{equation*}
		\begin{aligned}
			\Pr_{S}&\left( (A_1,\ldots A_n)=(a_1,\ldots, a_n ) \right) \,=\,	\, \prod_{(j,i)\in \chi}^n \left( \bgam^j_i\right)^{n\cdot T_{(j,i)}} \\ 
			&= \,\exp \left(  -n \cdot \sum_{ (j,i)\in \chi } T_{(j,i)} \cdot \ln \frac{1}{\bgam^{j}_i} \right) \\
			&= \exp \left(  -n \left(  \sum_{ (j,i)\in \chi } T_{(j,i)} \cdot \ln \frac{T_{(j,i)}}{\bgam^{j}_i} + \sum_{(j,i)\in \chi} T_{(j,i)} \cdot \ln \frac{1}{T_{(j,i)}} \right) \right) \\
			&= \exp\left( -n \cdot \left( \D{T}{\bgam} +\entropy(T)\right)\right).
		\end{aligned}
	\end{equation*}
\end{proof}

Furthermore, we can upper bound the number of vectors $(a_1,\ldots, a_n)$ of type $T$ which are consistent with a strategy. To this end, we use the following classic bound from the method of types.
\begin{lemma}[Theorem 11.1.3 in \cite{Co06}]
	\label{lem:classic_type_bound}
	Let $T\in \mK_n$. The number of vectors $(a_1,\ldots, a_n) \in \chi^n$ of type $T$ is at most $\exp\left( n\cdot \entropy (T)\right)$. That is,
	$$\left| \left\{ (a_1,\ldots, a_n)\in\chi^n\,\middle| \, \type(a_1,\ldots, a_n) =T\right\}\right| \, \leq \, \exp\left( n\cdot \entropy(T)\right).$$
\end{lemma}

We define {\em term entropy} of a type $T$ by 
\begin{equation}
	\label{eq:terment}
	\terment(T) = \sum_{j=1}^{N}  \lambda_j \cdot \ln \frac{1}{\lambda j} \textnormal{ where } \lambda_j =\sum_{i=1}^{r_j} T_{(j,i)}.
\end{equation}
Let $(a_1,\ldots, a_n)\in \chi^n$ be a vector of type $T$. 
Then the term entropy is the entropy of the vector $(\lambda_1,\ldots,\lambda_N)$ where $\lambda_j$ is the frequency of letters from the alphabet $\chi_j = \{(j,i)~|~i\in [r_j]\}$ in the vector $(a_1,\ldots, a_n)$.  Informally, the vector $(\lambda_1,\ldots, \lambda_N)$ is the type of the vector $(j_1,\ldots, j_{n})\in [N]^n$, where $a_{\ell} =(j_{\ell},i_{\ell})$ for every $1\leq \ell \leq n$. 
\begin{lemma}
	\label{lem:const_type_bound}
	Let $T\in \mK_n$ and $S\in \mS$ a strategy. Then the number of vectors $(a_1,\ldots, a_n) \in \chi^n$ of type $T$  which are consistent with $S$ is at most
	$\exp\left(n\cdot  \left( \entropy(T)-\terment(T)\right) \right)$. 
	That is, 
	$$\begin{aligned} \left| \left\{ (a_1,\ldots, a_n)\in \chi^n\,\middle| \, \type(a_1,\ldots, a_n) =T\textnormal{ and $(a_1,\ldots,a_n)$ is consistent with $S$} \right\}\right| \, \\
		\leq \, \exp\left(n\cdot  \left( \entropy(T)-\terment(T)\right) \right).
	\end{aligned}$$
\end{lemma}
We emphasize that while Lemma~\ref{lem:classic_type_bound} counts vectors of type $T$, Lemma~\ref{lem:const_type_bound} adds the restriction that the vectors have to be consistent with the strategy $S$. The number of such vectors is smaller, as reflected by the $-\terment(T)$ factor in the upper bound given in Lemma~\ref{lem:const_type_bound}. 
\begin{proof}[Proof of Lemma~\ref{lem:const_type_bound}]
	Define 
	$$V = \left\{ (a_1,\ldots, a_n)\in \chi^n\,\middle| \, \type(a_1,\ldots, a_n) =T\textnormal{ and $(a_1,\ldots,a_n)$ is consistent with $S$} \right\}.$$
	Also, recall that $\chi_j=\{(j,i)\,|\,i\in [r_j]\}	 =\{(j',i')\in \chi\,|\,j'=j\}$.
	
	In order to upper bound $|V|$ we define a function $\varphi:V \rightarrow \chi_1^*\times \chi_2^*\times \ldots \times \chi_N^*$. We then bound the cardinality of its image\footnote{The {\em image} of a function $f:X\rightarrow Y$ is $\{f(x) |~ x\in X\}$ and denoted $\Ima(f)$.} and show it is an injection to get the required upper bound on $|V|$. 
	We use~$N$ functions, $\varphi^j$ for every $j\in [N]$, to define $\varphi$.

	For $1\leq j \leq N$ let $\varphi^j(a_1, \ldots, a_n)$ be the result of removing from $(a_1, \ldots, a_n)$ all entries that do not belong to $\chi_j$. Formally,
	$\varphi^j:V\rightarrow \chi_j$ is defined by  
	$\varphi^j(a_1, \ldots, a_n)=(a_{\pi(1)}, \ldots, a_{\pi(h)})$,
	where  $h= |\{ \ell~|~a_{\ell}\in \chi_j \}|$ is the number of entries of $(a_1,\ldots,a_n)$ in $\chi_j$, and $\pi:[h]\rightarrow  \{\ell~|~a_{\ell}\in \chi_j \} $ is the unique monotone one-to-one function from $\{1,\ldots, h\}$ to entries in $(a_1,\ldots,a_n)$ which are in $\chi_j$. 
	We  define $$\varphi(a_1,\ldots,a_n)= \left(\varphi^1(a_1,\ldots ,a_n), \ldots, \varphi^N(a_1,\ldots,a_n)\right)$$ 
	for every $(a_1,\ldots, a_n)\in V$. That is, the $j$-th entry of $\varphi(a_1,\ldots,a_n)$ is  $\varphi^j(a_1,\ldots,a_n)$, the substring of $a_1,\ldots,a_n$  which contains only entries from $\chi_j$.   For example, $\varphi^2( (2,1),(1,1), (1,3),(2,2)) = ((2,1),(2,2))$.

	\begin{claim}
		\label{claim:image_bound}
		$\left|\Ima(\varphi)\right|\, \leq \,\exp\left( n\cdot \left(  \entropy(T) - \terment(T)\right)\right)$.
	\end{claim}
	\begin{claimproof}
		We first show that the type of $\varphi^j(a_1,\ldots,a_n)$ is the same for every $(a_1,\ldots,a_n)\in V$. 
		For every $1\leq j \leq N$ define $\lambda_j = \sum_{i=1}^{r_j} T_{(j,i)} =\sum_{a\in \chi_j} T_a$. The value $\lambda_j$ is the frequency of elements from $\chi_j$ in a vector $(a_1,\ldots,a_n) \in\chi^n$ such that $\type(a_1,\ldots, a_n) = T$.
		By Observation~\ref{obs:types}, it holds that $n\lambda_j$ is integral, and it can be easily verifiedthat $\varphi^j(a_1,\ldots,a_n) \in \chi_j^{\lambda_j n }$ for every $1\leq j \leq N$ and $(a_1,\ldots,a_n)\in V$.
		
		For $1\leq j \leq N$ such that $\lambda_j\neq 0$, define  $T^j\in \nonneg^{\chi}$ by $T^j_{(j,i)}=\frac{1}{\lambda_j}\cdot  T_{(j,i)}$ for $(j,i)\in \chi_j$, and $T^j_{(j',i')}=0$ for $(j',i')\in \chi \setminus \chi_j$. For $1\leq j \leq N$ such that $\lambda_j=0$ define $T^j=\bar{0}\in \nonneg^r$ .
		
		For every $j\in [N]$ and $(a_1,\ldots,a_n)\in V$, the frequency of $(j,i)\in \chi$ in $\varphi^j(a_1,\ldots, a_n)$ is $\frac{T_{(j,i)}}{\lambda_j}$. This is the result of dividing   $n\cdot T_{(j,i)}$, the number of times $(j,i)$ appears in $(a_1,\ldots,a_n)$,  by $n\cdot \lambda_j$, the dimension of $\varphi^j(a_1,\ldots,a_n)$. Following this argument, 
		it is easy to verify that if $\lambda_j\neq 0$ then $\type(\varphi^j(a_1,\ldots,a_n)) =T^j$ for every $(a_1,\ldots,a_n)\in V$. That is, $T^j$ is the type of the substring of  $(a_1,\ldots,a_n ) \in V$ which contains only entries in $\chi_j$. 
		
		Define $$U^j =\left\{(a_1,\ldots, a_{\lambda_j\cdot n}) \in \chi^{\lambda_j \cdot n }~\middle|~ \type(a_1,\ldots, a_{\lambda_j\cdot n }) = T^j \right\}.$$ It follows that $\Ima(\varphi^j)\subseteq U^j$, and  $\Ima(\varphi)\subseteq U^1 \times U^2 \times \ldots \times U^N$,
		all the vectors of dimension $\lambda_j\cdot n$ with entries in $\chi$ whose type is $T^j$. 
		By Lemma~\ref{lem:classic_type_bound}, it holds that $$
		\begin{aligned}
			|U^j|\,&\leq\,  \exp\left(\lambda_j\cdot n \cdot \entropy(T^j)\right)\,\\
			&=\, \exp\left( \lambda_j\cdot n \sum_{i=1}^{r_j} T^j_{(j,i)} \cdot \ln \frac{1}{T^j_{(j,i)}}\right) \\
			&=\, \exp\left( \lambda_j\cdot n \sum_{i=1}^{r_j} \frac{T_{(j,i)}}{\lambda_j} \cdot \ln \frac{\lambda_j}{T_{(j,i)}}\right)\\
			&=\,\exp\left( n\cdot \left(\sum_{i=1}^{r_j} T_{(j,i) } \cdot \ln \frac{1}{T_{(j,i)}} -\lambda_j \cdot \ln \frac{1}{\lambda_j}\right)\right).
		\end{aligned}$$  
		The second equality follows from $T^j_{(j,i)} =\frac{T_{(j,i)}}{\lambda_j}$ (by definition), and the last equality holds as $\lambda_j=\sum_{i=1}^{r_j} T_{(j,i)}$. 
		Hence, 
		\begin{equation*}
			\begin{aligned}
				|\Ima(\varphi)|\, &\leq\, |U^1| \cdot |U^2| \cdot \ldots \cdot |U^N|\\ 
				&\leq\, \exp\left( n\cdot \sum_{j=1}^{N} \left(\sum_{i=1}^{r_j} T_{(j,i) } \cdot \ln \frac{1}{T_{(j,i)}} -\lambda_j \cdot \ln \frac{1}{\lambda_j}\right)\right),
				\\
				&=\,\exp\left( n\cdot \left(  \entropy(T) - \terment(T)\right)\right). 
			\end{aligned}
		\end{equation*}
	\end{claimproof}
	\begin{claim}
		\label{claim:injection}
		$\varphi$ is an injection.
	\end{claim}
	\begin{claimproof}
		Let $(a_1, \ldots, a_n ), (d_1, \ldots ,d_n)\in V$ such that $\varphi(a_1, \ldots, a_n) = \varphi(d_1, \ldots, d_n)$. Assume towards contradiction  that $(a_1, \ldots, a_n )\neq (d_1, \ldots ,d_n)$. Let $\ell$ be the minimal index such that $a_{\ell} \neq d_{\ell}$. 
		Since both $(a_1,\ldots,a_n)$ and $(d_1,\ldots,d_n)$ are consistent with $S$,  it holds that  $a_{\ell}, d_{\ell} \in \chi_j$
		where  $j=S(a_1, \ldots ,a_{\ell-1})=S(d_1, \ldots ,d_{\ell-1})$.

		The proof idea is that since $a_{\ell}\neq d_{\ell}$, it must also hold that $\varphi^j(a_1,\ldots, a_n) \neq \varphi^j(d_1,\ldots,d_n)$. On the other hand, we must have that $\varphi^j(a_1,\ldots, a_n) = \varphi^j(d_1,\ldots,d_n)$ since $\varphi(a_1,\ldots,a_n)=\varphi(d_1,\ldots,d_n)$. 
		
		As $\varphi(a_1,\ldots,a_n) = \varphi(d_1,\ldots,d_n)$, it follows that $\varphi^j(a_1,\ldots,a_n) = \varphi^j(d_1,\ldots, d_n )$. 
		Let $h=|\{\ell'~|~a_{\ell' }\in \chi_j\}| = |\{\ell'~|~d_{\ell' }\in \chi_j\}|$ (the number of entries in $(a_1,\ldots,a_n)$ and $(d_1,\ldots,d_n)$ from $\chi_j$ is the same since the type of both is $T$).
		Let $\pi:[h]\rightarrow \{\ell'~|~a_{\ell' }\in \chi_j\}$ and $\sigma:[h]\rightarrow \{\ell'~|~d_{\ell' }\in \chi_j\}$  be the unique monotone one-to-one functions such that $\varphi^j(a_1,\ldots,a_n) = (a_{\pi(1)},\ldots, a_{\pi(h)})$ and $\varphi^j(d_1,\ldots,d_n) = (d_{\sigma(1)},\ldots, d_{\sigma(h)})$.
		Let $k=|\{ \ell'<\ell\,| a_{\ell'} \in \chi_j\}| = |\{ \ell'<\ell\,| d_{\ell'} \in \chi_j\}|$ be the number of entries in $(a_1,\ldots,a_{\ell -1})=(d_1,\ldots,d_{\ell -1})$ from $\chi_j$. By the definition of $\sigma$ and $\pi$ it must hold that $\sigma(k+1) = \pi(k+1) = \ell$. Furthermore, $a_{\ell}=a_{\pi(k+1)} = d_{\sigma(k+1) }= d_{\ell}$ (since $\varphi^j(a_1,\ldots,a_n) = \varphi^j(d_1,\ldots, d_n )$), which contradicts the selection of~$\ell$. Thus, $(a_1,\ldots,a_n) =(d_1,\ldots,d_n)$. That is, $\varphi$ is an injection.
	\end{claimproof}
	
	Since $\varphi$ is an injective function (Claim~\ref{claim:injection}),  by Claim~\ref{claim:image_bound} we  have
	\begin{equation*}
		|V| =|\Ima(\varphi)|  
		\leq \exp\left( n\cdot \left(  \entropy(T) - \terment(T)\right)\right). 
	\end{equation*}
\end{proof}

The method of types can be used to upper bound the probability  that  a sequence of $n$ {\em i.i.d.} is of a given type (e.g., Theorem 11.1.4 in \cite{Co06}). 
Using Lemmas~\ref{lem:type_prob} and~\ref{lem:const_type_bound}, we obtain a similar upper bound which applies to our setting. 
\begin{lemma}
	\label{lem:all_type_prob}
	Let $n\in \mathbb{N}_{>0}$, $T\in \mK_n$, and $S\in\mS$. Then, $$\Pr_S\left(\type(A_1,\ldots,A_n) = T\right)\, \leq \, \exp\bigg(- n\cdot \big( \D{T}{\bgam} +\terment(T)\big)\bigg).$$ 
\end{lemma}
\begin{proof}
	The lemma follows from a simple calculation.
	$$
	\begin{aligned}
		\Pr_S&\left(\type(A_1,\ldots,A_n) = T\right)\,\\
		&=\,\sum_{\begin{array}{c}(a_1,\ldots,a_n)\in \chi^n \textnormal{ s.t. }\\ \type(a_1,\ldots,a_n) = T\end{array}} \Pr_S\left((A_1,\ldots,A_n)= (a_1,\ldots,a_n)\right)\\
		&=\,\sum_{\begin{array}{c}(a_1,\ldots,a_n)\in \chi^n \textnormal{ s.t. }\\ \type(a_1,\ldots,a_n) = T \textnormal{ and}\\
				\textnormal{ $(a_1,\ldots,a_n)$ is consistent with $S$}
		\end{array}}  \Pr_S\left((A_1,\ldots,A_n)= (a_1,\ldots,a_n)\right)\\
		&=\, \left| \left\{ (a_1,\ldots,a_n)\in\chi^n~\middle|~\begin{array}{c}\type(a_1,\ldots,a_n) = T \textnormal{ and}\\
			\textnormal{ $(a_1,\ldots,a_n)$ is consistent with $S$}
		\end{array} \right\}\right| 
		\cdot \exp\left( - n\cdot \left( \D{T}{\bgam}+\entropy(T)\right)\right)\\
		&\leq \,  \exp\left( n\cdot \left(\entropy(T) - \terment(T)\right)\right)
		\cdot \exp\left( - n\cdot \left( \D{T}{\bgam}+\entropy(T)\right)\right)\\
		& = \, \exp\bigg(- n\cdot \big( \D{T}{\bgam} +\terment(T)\big)\bigg).
	\end{aligned}
	$$
	The second equality follows from Lemma~\ref{lem:inconsistent_prob}, the third equality holds due to Lemma~\ref{lem:type_prob},  and the inequality follows from Lemma~\ref{lem:const_type_bound}. 
\end{proof}

In particular, Lemma~\ref{lem:all_type_prob} implies that if $\Pr_S\left(\type(A_1,\ldots,A_n) = T\right)$ is large then $ \D{T}{\bgam} +\terment(T)$ is {small}. 

The last property we need to establish provides a connection between $ \D{T}{\bgam} +\terment(T)$ and an $\ell_1$-like measure of distance between $T$ and $\bgam$. 
We will use this property, together with Lemma~\ref{lem:all_type_prob}, to argue that if $\Pr(\type(A_1,\ldots, A_n) =T)$ is high then, in some sense, the $\ell_1$ distance between $T$ and $\bgam$ must be small.

A  type $T\in \mK_n$ is a distribution, whereas $\bgam$ represents the $N$ vectors $\bgam^1, \ldots, \bgam^N$, and each of these is a distribution. To measure the distance between $T$ and $\bgam$ we should take this difference into account. One way to do that is to scale
$\bgam$. We measure the distance between $T$ and  $(\lambda_j \cdot \bgam^j_i)_{(j,i)\in \chi}$, where $\lambda_j = \sum_{i=1}^{r_j} T_{(j,i)}$. Intuitively, this matches the frequency of each of the alphabets $\chi_1,\ldots, \chi_N$ to its frequency in $T$. We prove the following.
\begin{lemma}
	\label{lem:norm}
	Let $n\in \mathbb{N}_{>0}$ and $T\in \mK_n$.  Then 
	$$
	\sum_{(j,i)\in \chi} \left| T_{(j,i)} -\lambda_j \cdot \bgam^j_i \right| \,\leq \, 2\cdot \sqrt{\D{T}{\bgam} + \terment(T)},
	$$
	where $\lambda_j = \sum_{i=1}^{r_j} T_{(j,i)}$ for every $i\in [N]$. 
\end{lemma}
The proof of Lemma \ref{lem:norm}  uses the  next result (Lemma 11.6.1 cf. \cite{Co06}). 
\begin{lemma}
	\label{lem:KLtol1}
	For every two distributions  $\bup^1,\bup^2 \in \nonneg^n$, it holds that 
	$$ \left( \sum_{i=1}^{n} |\bup_i^1-\bup_i^2| \right)^2 \,\leq\,   2\cdot \D{\bup^1}{\bup^2}.$$
\end{lemma}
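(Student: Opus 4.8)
\textbf{Plan for the proof of Lemma~\ref{lem:KLtol1}.}

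The statement to prove is Pinsker's inequality in its quadratic form: for two distributions $\bup^1,\bup^2\in\nonneg^n$,
$$\frac{1}{2}\left(\sum_{i=1}^n|\bup^1_i-\bup^2_i|\right)^2\leq \D{\bup^1}{\bup^2}.$$
Since the excerpt cites this as Lemma 11.6.1 of \cite{Co06}, the natural proof proposal is to reproduce the standard information-theoretic argument, which reduces the $n$-ary case to the binary case and then verifies the binary case by one-variable calculus.

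\textbf{Step 1: reduce to the binary case via a data-processing / grouping argument.} First I would fix the two distributions and let $A=\{i:\bup^1_i\geq \bup^2_i\}$. Define the binary ``coarsened'' distributions $\bar c=(\bup^1(A),\,1-\bup^1(A))$ and $\bar d=(\bup^2(A),\,1-\bup^2(A))$, where $\bup(A)=\sum_{i\in A}\bup_i$. The key observations are: (i) the total variation is preserved by this coarsening in the sense that $\sum_{i=1}^n|\bup^1_i-\bup^2_i| = 2\bigl(\bup^1(A)-\bup^2(A)\bigr) = |\bup^1(A)-\bup^2(A)| + |(1-\bup^1(A))-(1-\bup^2(A))|$, because $A$ is exactly the set on which $\bup^1$ dominates $\bup^2$; and (ii) the divergence can only decrease under coarsening, i.e. $\D{\bar c}{\bar d}\leq \D{\bup^1}{\bup^2}$. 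Fact (ii) is the log-sum inequality applied to the block $A$ and to its complement: $\sum_{i\in A}\bup^1_i\log\frac{\bup^1_i}{\bup^2_i}\geq \bup^1(A)\log\frac{\bup^1(A)}{\bup^2(A)}$ and similarly for $\Chi\setminus A$. Hence it suffices to prove the inequality for $\bar c,\bar d$, i.e. the binary case.

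\textbf{Step 2: the binary case by calculus.} Writing $p=\bup^1(A)$ and $q=\bup^2(A)$ with $p\geq q$, it remains to show
$$g(p,q):=p\log\frac{p}{q}+(1-p)\log\frac{1-p}{1-q}-2(p-q)^2\geq 0\qquad\text{for }0\leq q\leq p\leq 1.$$
I would fix $p$ and regard $g$ as a function of $q\in(0,p]$. At $q=p$ we have $g(p,p)=0$. Differentiating in $q$, $\frac{\partial g}{\partial q}=-\frac{p}{q}+\frac{1-p}{1-q}+4(p-q)=\frac{q-p}{q(1-q)}+4(p-q)=(p-q)\Bigl(4-\frac{1}{q(1-q)}\Bigr)\leq 0$, since $q(1-q)\leq \tfrac14$ makes the bracketed factor $\leq 0$ while $p-q\geq 0$. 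Thus $g(p,\cdot)$ is non-increasing on $(0,p]$, so $g(p,q)\geq g(p,p)=0$ for all $q\leq p$. (Degenerate cases $q=0$ with $p>0$, or $p=1$, are handled by the convention $0\log 0=0$ and by taking limits; in those cases the divergence is $+\infty$ and the inequality is trivial, or a direct limiting computation applies.) Combining Step 1 and Step 2 gives the lemma.

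\textbf{Main obstacle.} There is no deep obstacle here — this is a textbook inequality — but the one place requiring care is the bookkeeping in Step 1: one must check that the set $A$ on which $\bup^1$ dominates $\bup^2$ is precisely the set for which the coarsening preserves total variation, and that the log-sum inequality is applied in the correct direction (it lower-bounds $\D{\bup^1}{\bup^2}$ by $\D{\bar c}{\bar d}$, not the reverse). A secondary nuisance is handling boundary/degenerate distributions with zero coordinates so that the $0\log 0$ and $\log\frac{x}{0}$ conventions are invoked consistently; this is routine but should be stated explicitly. I expect the entire proof to be short once these conventions are pinned down.
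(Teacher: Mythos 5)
Your proof is correct, and it is precisely the canonical argument — reduction to the binary alphabet by partitioning on the sign of $\bup^1_i-\bup^2_i$ and applying the log-sum inequality, followed by a one-variable derivative check that $g(p,q)=p\log\frac{p}{q}+(1-p)\log\frac{1-p}{1-q}-2(p-q)^2$ is nonincreasing in $q$ on $(0,p]$. Note, though, that the paper does not prove this lemma at all: it cites it as Lemma 11.6.1 of Cover and Thomas, where exactly this two-step proof appears, so there is no alternative approach in the paper to compare against. Your bookkeeping in Step 1 (that the set $A=\{i:\bup^1_i\geq\bup^2_i\}$ makes the coarsening total-variation preserving, and that log-sum lower-bounds $\D{\bup^1}{\bup^2}$ by $\D{\bar c}{\bar d}$) and the algebra $\frac{\partial g}{\partial q}=(p-q)\bigl(4-\frac{1}{q(1-q)}\bigr)\leq 0$ both check out, and your treatment of the degenerate $q=0$ case is handled correctly by the divergence being $+\infty$ there.
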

\begin{proof}[Proof of Lemma \ref{lem:norm}]
	For every $j\in [N]$ define a distribution $\bt^{j }\in \mathbb{R}_{\geq 0}^{r_j}$ by $\bt^j_i = \frac{T_{(j,i)}}{\lambda_j}$ for all $i\in [r_j]$. In case $\blam_j=0$ we let $\bt^j$ be an arbitrary distribution.  By  Lemma~\ref{lem:KLtol1}, we have
	$$
	\left(\sum_{i=1}^{r_j} \left| \bt^j_i -\bgam^j_i \right|\right)^2 \, \leq \,2\cdot \D{\bt^j}{\bgam^j} 
	$$
	for every  $j\in [N]$. By the above,
	\begin{equation}
		\label{eq:norms_eq1}
		\begin{aligned}
			2\cdot 	\sum_{j=1}^{N} \lambda_j \cdot \D{\bt^j}{\bgam^j}\, &\geq\, \sum_{j=1}^{N} \lambda_j \cdot  \left(\sum_{i=1}^{r_j} \left| \bt^j_i -\bgam^j_i \right|\right)^2 \, \\
			&\geq \, \left(\sum_{j=1}^{N}\sum_{i=1}^{r_j }\blam_j \cdot \left| \bt^j_i -\bgam^j_i\right| \right)^2 \,\\
			&=\,  \left(\sum_{(j,i)\in \chi}  \left| T_{(j,i)} -\lambda_j\cdot \bgam^j_i\right| \right)^2,
		\end{aligned}
	\end{equation}
	where the second inequality follows from Jensen inequality as $x^2$ is convex, and the  equality follows from $\bt^j_i=\frac{T_{(j,i)}}{\lambda_j}$.  Furthermore,
	\begin{equation}
		\label{eq:norms_eq2}
		\begin{aligned}
			\sum_{j=1}^{N} \lambda_j \cdot \D{\bt^j}{\bgam^j} \, &= \, \sum_{j=1}^{N} \lambda_j \cdot \sum_{i=1}^{r_j} \bt^j_i \cdot \ln \frac{\bt^j_i}{\bgam^j_i} \,\\& = 
			\, \sum_{j=1}^{N} \lambda_j \cdot \sum_{i=1}^{r_j} \frac{T_{(j,i)}}{\lambda_j} \cdot \ln \frac{ \left(\frac{T_{(j,i)}}{\lambda_j}\right)}{\bgam^j_i} \\
			&=\,\sum_{j=1}^{N}  \sum_{i=1}^{r_j} {T_{(j,i)}} \cdot \ln \frac{ T_{(j,i)}}{\bgam^j_i} + \sum_{j=1}^{N} \lambda_j \cdot \ln \frac{1}{\lambda_j}\\
			&= \D{T}{\bgam} + \terment(T),
		\end{aligned}
	\end{equation}
	where the first equality follows from the definition of $\D{\cdot}{\cdot}$, the second  from the definition of $\bt^j_i$ and the last equality follows from the definitions of divergence \eqref{eq:div_type} and term entropy \eqref{eq:terment}. 
	
	By \eqref{eq:norms_eq1} and \eqref{eq:norms_eq2} we have, 
	$$\sum_{(j,i)\in \chi} \left| T_{(j,i)} -\lambda_j\cdot \bgam^j_i\right|  \, \leq \, \sqrt{ 2\cdot \sum_{j=1}^{N} \lambda_j \cdot \D{\bt^j}{\bgam^j}  } \, \leq  \, 2 \cdot \sqrt{ \D{T}{\bgam} +\terment(T)}.
	$$
\end{proof}

	\subsection{Random Walks with High Probability}
	\label{sec:trivial}
		We can now proceed to  the proof of Lemma~\ref{lem:trivial_rec}. We first restate the lemma.
\trivial*

We prove the lemma using the equivalence between the random walk and the composite recurrence.
Let $\terms=\{(\bb^j,\bk^j,\bdel^j)\,|\,j\in N\}$ and $\alpha$    which satisfy the condition in Lemma~\ref{lem:trivial_rec}. 
Also, consider the random walk associated with $\terms$ as defined in Section~\ref{sec:random_walk_formal}. 
We use the same notations as in Section~\ref{sec:random_walk_formal} to denote the random variables, sets, probability space etc, associated with the random walk. 
Also, recall that by Lemma~\ref{lem:comp_to_walk} it holds that 
\begin{equation}
	\label{eq:high_prop_eq}
	p(
	\floor{\alpha\cdot k},k) \,=\, \min_{S\in \mS} \Pr_S(G^{\floor{\alpha\cdot k},k})\,=\, \min_{S\in \mS} \Pr_S\left(\exists n\geq 0 :~X_n\leq \floor{\alpha \cdot k} \textnormal{ and }Y_n\leq k\right).
\end{equation}

Intuitively, in the $n$-th step of the random walk the adversary $S\in \mS$  selects $j\in [N]$ based on the history of the walk, and subsequently a value $I$ is selected such that $\Pr_S(I = i)=\bdel^j_i$; the position of the walk then moves by $(\beta(j,I) ,\kappa(j,I)) = (\bb^j_I, \bk^j_I)$.\footnote{Note that $ \Pr_S(I=j) =\bdel^j_i$ since we consider the random walk associated with the terms $(\bb^j,\bk^j,\bdel^j)$ for $j\in [N]$.}
Thus, the expected movement on the  $x$-axis is $\bb^j\cdot \bdel^j$, and the expected movement on the $y$-axis is $\bk^j \cdot \bdel^j$. By the condition of the lemma $\bb^j\cdot \bgam^j < \alpha \cdot \bk^j\cdot \bgam^j $ for all $j\in [N]$, thus we expect the ratio between the movement on the $x$-axis and the $y$-axis to be less than~$\alpha$, regardless of the adversary's choices. 
That is, we expect $X_n\leq \alpha Y_n$ to hold throughout the walk.  
In particular, we anticipate $G^{\alpha \cdot k,k }$ to occur with high probability against every adversary. By \eqref{eq:high_prop_eq} this means that $p(\floor{\alpha \cdot k },k)$ is expected to be high, which in turn implies Lemma~\ref{lem:trivial_rec}. 

For a fixed $k\in \mathbb{N}$ we take the ``worst'' adversary  $S^*\in \mS$, for which $\Pr_{S^*}(G^{\floor{\alpha \cdot k},k}) = \min_{S\in \mS}\Pr_{S}(G^{\floor{\alpha \cdot k},k})$.  The proof of Lemma~\ref{lem:trivial_rec} considers a specific value $n\in \mathbb{N}$ such that $\Pr_{S^*}( \alpha k - O(1)<X_n\leq\alpha k) \geq \Omega\left( \frac{1}{k}\right)$, and then focuses on a type $T\in \mK_n$ such that 
\begin{equation}
	\label{eq:trivial_proof_flow}
	\Pr_{S^*}( \type(A_1,\ldots, A_n ) =T \textnormal{ and }\alpha k - O(1)<X_n\leq\alpha k) \geq \frac{1}{\poly(k)}.
\end{equation}
Both $n$ and $T$ are found using the pigeon hole principle. Subsequently, the proof proceeds to show that $n\cdot \kappa(T) \geq k$. This suffices to show that $\Pr_S(G^{\floor{\alpha \cdot k,k}})\geq \frac{1}{\poly(k)}$, as it implies that the event $\left\{\type(A_1,\ldots,A_n) = T \textnormal{ and  } X_n\leq \alpha \cdot k\right\}$ is contained in $G^{\floor{\alpha\cdot k},k}$, and the probability of the former is lower bounded in \eqref{eq:trivial_proof_flow}. 
To show $n\cdot \kappa(T)\geq k$ we use Lemma~\ref{lem:all_type_prob} to argue that $\D{T}{\bdel} +\terment(T)$ is close to zero. Combining this with Lemma~\ref{lem:norm}, we have that $T$ and $\bdel$ are close. Once this is established, we use the property of the vectors $\bdel^j$ ($\bb^j\cdot \bdel^j < \alpha \cdot \bk^j\cdot\bdel^j $ for all $j\in [N]$) to show that $n\cdot \kappa(T)\geq k$.  
Finally, $\Pr_S(G^{\floor{\alpha \cdot k,k}})\geq \frac{1}{\poly(k)}$ together with \eqref{eq:high_prop_eq} imply the statement of Lemma~\ref{lem:trivial_rec}. 

\begin{proof}[Proof of Lemma~\ref{lem:trivial_rec}]
	As  $p(b,k)\in [0,1]$, it also holds that $\limsup_{k\rightarrow \infty} \frac{1}{k} \cdot \ln p(\floor{\alpha \cdot k},k)\leq 0$. This means that in order to prove the lemma, it suffices to show that $\liminf_{k\rightarrow \infty} \frac{1}{k} \cdot \ln p(\floor{\alpha \cdot k},k)\geq 0$.
	
	Let $k>M$, where $M>\frac{2}{\alpha}$ will be determined later in the proof.\footnote{The value of $M$ does not depend on $k$.} Let $S^*\in \mS$ such that $\Pr_{S^*}(G^{\floor{\alpha\cdot k},k}) = \min_{S\in \mS } \Pr_S(G^{\floor{\alpha\cdot k},k})$. 
	Also, let $s = \ceil{\alpha \cdot k  } $.
	Observe that $$X_{s+1}\,=\, \beta(A_1,\ldots, A_{s+1} ) \,=\,\sum_{\ell =1}^{s+1} \beta(A_\ell)\, \geq\, s+1\,>\,\alpha \cdot k,$$
	since $\beta(j,i)=\bb^j_i\geq 1$ for every $(j,i)\in \chi$. 
	Also, define $\bmax= \max_{a\in \chi} \beta(a) = \max_{j\in [N]}\max_{i\in [r_j]}\bb^j_i$ and~$t = \floor{\frac{\alpha \cdot k}{\bmax}}$. Then,
	$$
	X_t\,=\,\beta(A_1,\ldots,A_{t}) \, \leq t \cdot \bmax \leq \alpha k .
	$$
	Therefore, 
	$$ 1= \sum_{n=t}^{s} \Pr_{S^*} \left( X_n\leq \alpha\cdot  k \textnormal{ and } X_{n+1} > \alpha \cdot k \right).$$
	By the above inequality and the pigeon hole principle, there is $t\leq n \leq s$ such that 
	$$
	\Pr_{S^*} \left( X_{n}\leq \alpha\cdot  k \textnormal{ and } X_{n+1} > \alpha \cdot k \right) \,\geq \, \frac{1}{s}\, \geq \frac{1}{\alpha \cdot k+1}. 
	$$ 
	By Observation~\ref{obs:types}, the type of $A_1,\ldots, A_n$ is in $\mK_n$; thus,
	$$
	\begin{aligned}
		\sum_{T\in \mK_n}  \Pr_{S^*}& \left( X_{n}\leq \alpha\cdot k \textnormal{ and } X_{n+1}>\alpha\cdot  k \textnormal{ and } \type(A_1,\ldots, A_n)=T\right)
		\\
		&=\,\Pr_{S^*} \left( X_{n}\leq \alpha\cdot k \textnormal{ and } X_{n+1} > \alpha \cdot k \right) \,\geq \frac{1}{\alpha \cdot k+1}.
	\end{aligned}
	$$
	Since $|\mK_n|\leq (n+1)^{|\chi|}\leq (\alpha \cdot k +2)^{|\chi|}$ and by the pigeon hole principle, there is $T\in \mK_n$ such that 
	\begin{equation}
		\label{eq:piegon_T}
		\begin{aligned}
			\Pr_{S^*}&\left( X_{n}\leq \alpha \cdot k \textnormal{ and } X_{n+1}>\alpha \cdot k \textnormal{ and } \type(A_1,\ldots, A_n)=T\right) \\
			&\geq\, \frac{1}{(n+1)^{|\chi|}}\cdot \frac{1}{\alpha \cdot k+1} \geq \left(\frac{1}{ 2\cdot \alpha \cdot k }\right)^{|\chi|+1},
		\end{aligned}
	\end{equation}
	where the last inequality follows from  $\alpha \cdot k + 2 \leq 2\cdot \alpha \cdot k$ since $k>M>\frac{2}{\alpha}$. 
	Next, we want to show that $\{\type(A_1,\ldots, A_n ) =T\}\subseteq G^{\floor{\alpha \cdot k},k}$. This is done by showing that $n\cdot \beta(T) \leq\floor{\alpha \cdot k}$ and $n\cdot \kappa (T)\geq k$, using the next two claims.
	\begin{claim}
		\label{claim:betaT}
		$\alpha \cdot k-\bmax\leq n\cdot \beta(T) \leq \floor{\alpha \cdot k}$
	\end{claim}
	\begin{claimproof}
		By \eqref{eq:piegon_T}, it holds that 
		$$ 
		\begin{aligned}
			\Pr_{S^*}&\left(\beta(A_1,\ldots,A_n)\leq \alpha \cdot k \textnormal{ and } \beta(A_1,\ldots,A_{n+1})>\alpha \cdot k \textnormal{ and } \type(A_1,\ldots, A_n)=T \right)  \\
			&=\Pr_{S^*}\left(X_{n}\leq \alpha \cdot k \textnormal{ and } X_{n+1}>\alpha \cdot k \textnormal{ and } \type(A_1,\ldots, A_n)=T \right) >0,
		\end{aligned}$$
		Hence, there is $(a_1,\ldots ,a_{n+1})\in \chi^{n+1}$  such that $\beta(a_1,\ldots, a_n) \leq \alpha \cdot k$ ,$\beta(a_1,\ldots, a_{n+1}) >\alpha \cdot k$ and $\type(a_1,\ldots, a_n) =T$.
		Therefore,
		$$n\cdot \beta(T)  = \beta(a_1,\ldots, a_n) \leq \alpha \cdot k,$$
		where the first equality follows from \eqref{eq:beta_by_type}.  Since $n\cdot \beta(T)$ is integral, this implies $n\cdot \beta(T)\leq \floor{\alpha \cdot k}$. 
		
		Similarly,
		$$n\cdot \beta(T) \,=\,\beta(a_1,\ldots, a_n) \, = \, \beta(a_1,\ldots, a_{n+1})- \beta(a_{n+1}) \, \geq \alpha \cdot k - \bmax,
		$$
		where the last inequality holds as $\beta(a_{n+1})\leq \bmax$. 
	\end{claimproof}

	Let $\rho = \max_{j\in [N]} \frac{\bdel^j \cdot \bb^j }{\bdel^j\cdot\bk^j}$. By the condition of the lemma, it holds that $\frac{\bdel^j \cdot \bb^j}{\bdel^j\cdot \bk^j}<\alpha$ for every $j\in [N]$, therefore $\rho < \alpha$.   Also, define $\eta = \max_{(j,i)\in \chi }\left|  \bk_i^j - \frac{\bb^j_i}{\rho}\right|$. 
	\begin{claim}
		\label{claim:kappaT}
		$n\cdot \kappa(T) \geq k\cdot \left(  \frac{\alpha}{\rho}  - \frac{\bmax}{\rho} \cdot \frac{1}{k} -4\cdot  \alpha  \cdot \eta \cdot \sqrt{\D{T}{\bdel} -\terment(T)}\right) $.
	\end{claim}
	\begin{claimproof}
		Define $\lambda_j = \sum_{i=1}^{r_j } T_{(j,i)}$ for every $j\in [N]$. It holds that,
		\begin{equation}
			\label{eq:kappaT_first}
			\begin{aligned}
				n \cdot& \left( \kappa(T) - \frac{\beta(T)}{\rho }\right) = n\cdot \sum_{j=1}^{N} \sum_{i=1}^{r_j} T_{(j,i)}\cdot \left( \bk^j_i -\frac{\bb^j_i}{\rho} \right)\\
				&\geq n\cdot \sum_{j=1}^{N} \sum_{i=1}^{r_j} T_{(j,i)}\cdot \left( \bk^j_i -\frac{\bb^j_i}{\rho} \right) - n\cdot \sum_{j=1}^{N}
				\lambda_j \sum_{i=1}^{r_j}  \bdel^j_i\cdot \left( \bk^j_i -\frac{\bb^j_i}{\rho} \right)\\
				& = n\cdot \sum_{j=1}^{N} \sum_{i=1}^{r_j} \left( T_{j,i} - \lambda_j \cdot \delta^j_i\right)  \cdot \left(\bk^j - \frac{\bb^j_i}{\rho }\right)\\
				& \geq -n \cdot \eta \cdot \sum_{j=1}^{N} \sum_{i=1}^{r_j } \left|t_{(j,i)} -\lambda_j \cdot \bdel^j_i\right|\\
				&\geq  -4\cdot   \alpha \cdot k\cdot \eta \cdot \sqrt{\D{T}{\bdel} -\terment(T)} .
			\end{aligned}
		\end{equation}
		The first inequality holds since $\rho = \max_{j\in [N]} \frac{\bdel^j \cdot \bb^j }{\bdel^j\cdot\bk^j}$; thus, $\sum_{i=1}^{r_j} \delta^j_i \cdot \left(\bk^j_i -\frac{\bb^j_i}{\rho} \right) \geq 0 $ for every $j\in [N]$.  The second inequality follows from the triangle inequality and the definition of $\eta$. The last inequality follows from Lemma~\ref{lem:norm} and from $n\leq \ceil{\alpha \cdot k }\leq 2\cdot \alpha \cdot k$.  By rearranging the terms in \eqref{eq:kappaT_first}, we have
		$$
		\begin{aligned}
			n\cdot \kappa(T)  &\geq \frac{n\cdot \beta(T)}{\rho}  -4\cdot  \alpha \cdot k \cdot \eta \cdot \sqrt{\D{T}{\bdel} -\terment(T)}\\
			&\geq \frac{\alpha \cdot k -\bmax}{\rho}  -4\cdot  \alpha \cdot k \cdot \eta \cdot \sqrt{\D{T}{\bdel} -\terment(T)}, \\
			&= k \cdot \left( \frac{\alpha}{\rho} - \frac{\bmax}{\rho \cdot k} -4\cdot \alpha \cdot \eta   \cdot \sqrt{\D{T}{\bdel} -\terment(T)}\right) ,
		\end{aligned}
		$$
		where the second inequality is by Claim \ref{claim:betaT}.
	\end{claimproof}
	
	The lower bound for $n\cdot \kappa(T)$ in Claim \ref{claim:kappaT} depends on $  \sqrt{\D{T}{\bdel} -\terment(T)}$.   However, inequality~\eqref{eq:piegon_T} together with Lemma~\ref{lem:all_type_prob} imply that $\sqrt{\D{T}{\bdel} -\terment(T)}$ is arbitrarily small, depending on the selection of $M$.   Formally, by Lemma~\ref{lem:all_type_prob} we have
	$$
	\exp\bigg(- n\cdot \big( \D{T}{\bdel} +\terment(T)\big)\bigg)\, \geq\, \Pr_{S^*}\left(\Pr(A_1,\ldots ,A_n ) = T\right) \,\geq \, \left(\frac{1}{2\cdot \alpha \cdot k }\right)^{|\chi|+1},
	$$
	where the last inequality is by \eqref{eq:piegon_T}. Therefore,
	$$
	\D{T}{\bdel} +\terment(T) \leq \frac{|\chi|+1}{n}\cdot \ln\left( 2\cdot \alpha \cdot k\right) \,\leq \, \frac{ 4\cdot |\chi| \cdot \bmax\cdot \ln(2\cdot \alpha \cdot k)}{\alpha \cdot k} .
	$$
	The last inequality uses $n\geq t = \floor{\frac{\alpha \cdot k }{\bmax}}  > \frac{\alpha \cdot k}{2\cdot\bmax}  $, assuming $k>M > 2\cdot \bmax\cdot \alpha^{-1}$.
	We select $M >  \max\left\{2\cdot \bmax,\frac{2}{\alpha}\right\}$ such that 
	$$ \frac{\bmax}{\rho \cdot \ell }+4\cdot \alpha \cdot \eta\cdot \sqrt{\frac{ 4\cdot |\chi| \cdot \bmax\cdot \ln(2\cdot \alpha \cdot \ell)}{\alpha \cdot \ell}}  < \frac{\alpha}{\rho} -1$$ for every $\ell > M$. This is possible as the left-hand term in the above inequality converges to $0$ as $\ell$ goes to infinity, and the right-hand term is positive since $\rho<\alpha$.  Therefore, 
	$$
	\frac{\bmax}{\rho \cdot k} +4\cdot \alpha \cdot \eta\cdot  \sqrt{\D{T}{\bdel} +\terment(T)}  \leq  \frac{\bmax}{\rho \cdot k}+ 4\cdot \alpha \cdot \eta \cdot \sqrt{ \frac{ 4\cdot |\chi| \cdot \bmax\cdot \ln(2\cdot \alpha \cdot k)}{\alpha \cdot k}} < \frac{\alpha}{\rho }-1,
	$$
	and by Claim~\ref{claim:kappaT}, we have
	\begin{equation}
		\label{eq:kappaT}
		n\cdot \kappa(T) \geq k\cdot \left(  \frac{\alpha}{\rho}  - \frac{\bmax}{\rho} \cdot \frac{1}{k} -4\cdot  \alpha  \cdot \eta \cdot \sqrt{\D{T}{\bdel} -\terment(T)}\right)  >  k\cdot \left( \frac{\alpha}{\rho} -\frac{\alpha}{\rho}+1\right) = k.
	\end{equation}
	
	By Claim~\ref{claim:betaT} and \eqref{eq:kappaT}, we have 
	$$
	\begin{aligned}
		G^{\floor{\alpha \cdot k, k}} 
		&=\left\{\exists n': X_{n'}\leq \floor{\alpha \cdot k} \textnormal{ and } Y_{n} \geq k   \right\} \\
		&\supseteq \left\{ X_{n}\leq \floor{\alpha \cdot k} \textnormal{ and } Y_{n} \geq k   \right\}  \\
		&= \left\{ n\cdot \beta(\type(A_1,\ldots, A_n) ) \leq \floor{\alpha k} \textnormal{ and } n\cdot \kappa(\type(A_1,\ldots, A_n)) \geq k \right\}\\
		&\supseteq \left\{\type(A_1,\ldots, A_n) = T\right\}.
	\end{aligned}
	$$
	Therefore,
	$$p(\floor{\alpha\cdot k},k)= \min_{S\in \mS} \Pr_S(G^{\floor{\alpha\cdot k},k})  = \Pr_{S^*}(G^{\alpha\cdot k,k}) \geq \Pr_{S^*}(\type(A_1,\ldots,A_n) =T) \geq \left(\frac{1}{2\cdot \alpha \cdot k }\right)^{|\chi|+1}
	$$
	for every $k>M$.
	By the above inequality, 
	\begin{equation*}
		\label{eq:liminf}
		\liminf_{k\rightarrow \infty} \frac{1}{k} \cdot \ln p(\floor{\alpha \cdot k }, k) \geq  \liminf_{k\rightarrow \infty} \frac{1}{k} \cdot \ln \left(\left( \frac{1}{2\cdot \alpha \cdot k}\right)^{|\chi| +1}\right) = 0 
	\end{equation*}
	which completes the proof. 
\end{proof}

\subsection{Changing Probability Space}
\label{sec:translation}
Our next step is to prove Lemma~\ref{lem:translation}. We first restate the lemma.
\translation*
The lemma considers two recurrences which differ in their probability vectors: $\bgam^j$ vs. $\bdel^j$. We consider the two recurrences through the lens of the random walks, meaning we have two random walks to consider: the first is the one associated with $p_{\gamma}$ and the second is associated with $p_{\delta}$.  Denote by $(\Omega,\mF, \Pr_{\gamma, S})$ the probability space associated with the strategy $S\in \mS$ and the terms $(\bb^j,\bk^j,\bgam^j)$ for $j\in [N]$, and by  $(\Omega,\mF, \Pr_{\delta, S})$ the probability space associated with the strategy $S\in \mS$ and the terms $(\bb^j,\bk^j,\bdel^j)$ for $j\in [N]$. Observe that random variables such as $X_n$ and $A_n$ are defined in both probability spaces. 

For a fixed $k$, the proof of Lemma~\ref{lem:translation} first focuses on a strategy $S^*\in \mS$  such that 
$$p_{\gamma }(\floor{\alpha \cdot k},k)= \Pr_{\gamma, S^*} (G^{\floor{\alpha\cdot k},k}).$$ $S^*$ exists by Lemma~\ref{lem:comp_to_walk}.  Furthermore, the lemma implies that 
$$
p_{\delta}(\floor{\alpha \cdot k},k) = \min_{S\in \mS} \Pr_{\delta, S} \left( G^{\floor{\alpha \cdot k},k}\right)  \leq  \Pr_{\delta, S^*} \left( G^{\floor{\alpha \cdot k},k}\right).
$$
The proof then uses the pigeon hole principle to find $n>0$ and a type $T\in \mK_n$ such that $\{\type(A_1,\ldots ,A_n) =T\} \subseteq G^{\floor{\alpha \cdot k },k} $  and 
\begin{equation}
	\label{eq:translation_type_prob}\Pr_{\delta, S^*} \left(\type(A_1,\ldots, A_n)=T\right)  \geq \frac{p_{\delta}(\floor{\alpha \cdot k},k) }{\poly(k)}.
\end{equation}
The  main idea in the proof  is to evaluate the probability of the event $\{\type(A_1,\ldots ,A_n) =T\}$ in the probability space $(\Omega,\mF,\Pr_{\gamma, S^*})$ associated with the composite recurrence $p_{\gamma}$. Specifically, by Lemmas~\ref{lem:inconsistent_prob} and~\ref{lem:type_prob}, it can be shown that 
$$
\Pr_{\gamma,S^*}(\type(A_1,\ldots,A_n) = T) = \Pr_{\delta,S^*}(\type(A_1,\ldots, A_n ) =T)\cdot \exp\left( -n\left(\D{T}{\bgam} -\D{T}{\bdel}\right)\right).
$$
By \eqref{eq:translation_type_prob} and since $\lim_{k\rightarrow \infty} \frac{1}{k}\ln  p_{\delta}(\floor{\alpha\cdot k},k) =0$, the above probability is dominated by the expression  $\exp\left( -n\left(\D{T}{\bgam} -\D{T}{\bdel}\right)\right)$. 
To complete the proof, we show that $  -n\left(\D{T}{\bgam} -\D{T}{\bdel}\right)\gtrsim  -k\cdot \max_{j\in [N]} \frac{\D{\bdel^j}{\bgam^j}}{\delta^j\cdot \bk^j}$, which follows from the fact that the type $T$ and the vectors $\bdel^j$ for $j\in [N]$ must be close.

\begin{proof}[Proof of Lemma~\ref{lem:translation}]
	
	Define $M = \max_{j\in [N]} \frac{\D{\bdel^j }{\bgam^j} }{ \delta^j\cdot \bk^j}$, $\kmax=\max_{(j,i)\in \chi}\bk^j_i$ and $\eta = \max_{(j,i)\in \chi }\left| \ln\frac{\bdel^j_i}{\bgam^{j}_i}\right|$. Let $\eps>0$ and  define 
	\begin{equation}
		\label{eq:translate_Z_def}
		Z= \min\left\{ \frac{\eps^2}{2^{10}\cdot \eta^2\cdot \alpha^2\cdot \kmax}, ~\frac{\eps^2}{2^{10}\cdot M^2\cdot\alpha^2\cdot  \kmax^3},\frac{\eps}{8}\right\}.
	\end{equation}
	Select $K>\max\left\{2\cdot \kmax,\frac{4\cdot \kmax \cdot M}{\eps},\frac{1}{\alpha}\right\}$ such that for every $k > K$ it holds that 
	$$
	\frac{1}{k} \cdot \ln p_{\delta} \left( \floor{\alpha \cdot k},k\right) \geq -Z \iff p_{\delta}(\floor{\alpha \cdot k } , k) \geq \exp \left( - k\cdot Z \right) ,
	$$
	and 
	\begin{equation}
		\label{eq:Z_cond}
		\frac{1}{k}\cdot  (|\chi|+1) \cdot \ln (\alpha \cdot k +1) < Z.
	\end{equation}
	Such $K$ exists by the conditions of the lemma. The selection of $K$ and $Z$ will be made clearer later in the proof. Let $k>K$.  
	
	By Lemma~\ref{lem:comp_to_walk}, it holds that 
	$
	p_{\gamma} (\floor{\alpha\cdot k},k) = \min_{S\in \mS} \Pr_{\gamma, S}(G^{\floor{\alpha\cdot k },k })
	$; then, there exists a strategy $S^*\in \mS$ such that $p_{\gamma} (\floor{\alpha \cdot k } ,k) = \Pr_{\gamma, S^*} (G^{\floor{\alpha\cdot k},k})$.  Using Lemma~\ref{lem:comp_to_walk} w.r.t. $p_{\delta}$, we also have
	$$
	\exp\left( -k \cdot Z \right) \leq p_{\delta} (\floor{\alpha \cdot k} , k) = \min_{S\in \mS} \Pr_{\delta, S} (G^{\floor{\alpha\cdot k},k}) \leq \Pr_{\delta, S^*} (G^{\floor{\alpha\cdot k},k}).
	$$
	We can further expand the terms in the above inequality and get
	$$
	\begin{aligned}
		\exp\left( -k\cdot Z\right) &\leq \Pr_{\delta, S^*} \left( G^{\floor{\alpha \cdot k},k}\right) \\
		&= \Pr_{\delta, S^*} \left(\exists  n\in \mathbb{N}: X_n \leq \alpha \cdot k \textnormal{ and } Y_n \geq k   \right)\\
		&= \sum_{n=\floor{ k/\kmax } }^{\ceil{\alpha \cdot k}}
		\Pr_{\delta, S^*} \left(X_n \leq \alpha \cdot k \textnormal{ and } Y_n \geq k   \textnormal{ and } Y_{n-1} < k \right),
	\end{aligned}
	$$
	where the last equality holds as $X_{\ceil{\alpha k}+1} = \beta(A_1,\ldots, A_{\ceil{\alpha k} +1}) \geq  \alpha k +1$, and $$Y_{\floor{k/\kmax} -1} = \sum_{\ell = 1}^{\floor{k/\kmax} -1} \kappa(A_\ell )\leq \sum_{\ell = 1}^{\floor{k/\kmax} -1} \kmax <k.$$ 
	By the pigeon hole principle, there is $\floor{k/\kmax}\leq n \leq \ceil{\alpha \cdot k}$ such that 
	$$ 
	\Pr_{\delta, S^*} \left(X_n \leq \alpha \cdot k \textnormal{ and } Y_n \geq k   \textnormal{ and } Y_{n-1} < k \right) \geq \frac{\exp\left( -k\cdot Z \right)}{ \alpha k +1 }  = \exp\left( -k \cdot Z - \ln(\alpha \cdot k +1) \right).
	$$
	Since  $\type(A_1,\ldots ,A_n)\in \mK_n$  (by Observation~\ref{obs:types}), it follows that
	$$
	\begin{aligned}
		&\sum_{T\in \mK_n} \Pr_{\delta, S^*} \left(X_n \leq \alpha \cdot k \textnormal{ and } Y_n \geq k   \textnormal{ and } Y_{n-1} < k  \textnormal{ and } \type(A_1,\ldots ,A_n) = T\right)  \\
		=& \Pr_{\delta, S^*} \left(X_n \leq \alpha \cdot k \textnormal{ and } Y_n \geq k   \textnormal{ and } Y_{n-1} < k \right)  \geq \exp\left( -k \cdot Z - \ln(\alpha \cdot k +1) \right).
	\end{aligned}
	$$
	Recall that  $|\mK_n| = (n+1)^{|\chi|} \leq (\alpha \cdot k +1)^{|\chi|}$ (by Observation \ref{obs:types}); therefore, by the pigeon hole principle, there is $T\in \mK_n$  such that 
	\begin{equation}
		\label{eq:Tselect}
		\begin{aligned}
			&\Pr_{\delta, S^*} \left(X_n \leq \alpha \cdot k \textnormal{ and } Y_n \geq k   \textnormal{ and } Y_{n-1} < k  \textnormal{ and } \type(A_1,\ldots ,A_n) = T\right)  \\
			\geq&\frac{ \exp\left( -k \cdot Z - \ln(\alpha \cdot k +1) \right)}{ (\alpha \cdot k +1)^{|\chi|}} =  \exp\left( -k \cdot Z- (|\chi|+1)\ln(\alpha \cdot k +1) \right)\geq \exp \left( -2\cdot  k\cdot Z\right) .
		\end{aligned}
	\end{equation}
	The last inequality follows from \eqref{eq:Z_cond}. 
	Since the event in \eqref{eq:Tselect} has a positive probability, it holds that $ n\cdot \beta(T )\leq \alpha \cdot k$ and $k\leq n\cdot \kappa(T) \leq k+\kmax$. Therefore, 
	\begin{equation}
		\label{eq:pgam_to_prop}
		p_{\gamma}(\floor{\alpha \cdot k}, k) = \Pr_{\gamma,S^*}(G^{\floor{\alpha \cdot k},k} ) \geq \Pr_{\gamma,S^*}(\type(A_1,\ldots, A_n ) =T).
	\end{equation}
	
	Let
	$$C = \{ (a_1,\ldots, a_n)\in \chi^n ~|~\type(a_1,\ldots, a_n) =T \textnormal{ and } (a_1,\ldots, a_n) \textnormal{ is consistent with $S^*$} \}$$
	be the set of all strings of length $n$ of type $T$ which are consistent with $S^*$ (Definition~\ref{def:consistent}). By Lemmas~\ref{lem:inconsistent_prob} and~\ref{lem:type_prob}, it holds that
	$$
	\Pr_{\delta, S^*}(\type(A_1,\ldots,A_n) =T) = |C| \cdot \exp\left(-n\cdot \left( \entropy(T) + \D{T}{\bdel}\right) \right) 
	$$
	and 
	$$
	\Pr_{\gamma, S^*}(\type(A_1,\ldots,A_n) =T) = |C| \cdot \exp\left(-n\cdot \left( \entropy(T) + \D{T}{\bgam}\right) \right).
	$$
	Therefore, 
	\begin{equation}
		\label{eq:prob_gamma}
		\begin{aligned}
			p_\gamma(\floor{\alpha\cdot k},k) &\geq 
			\Pr_{\gamma, S^*}(\type(A_1,\ldots,A_n) =T)\\ &= \frac{\Pr_{\delta, S^*}(\type(A_1,\ldots,A_n) =T)}{   \exp\left(-n\cdot \left( \entropy(T) + \D{T}{\bdel}\right) \right) } \cdot  \exp\left(-n\cdot \left( \entropy(T) + \D{T}{\bgam}\right) \right) \\
			&\geq \exp\left(-2\cdot Z \cdot k  -n\left( \D{T}{\bgam} - \D{T}{\bdel}\right)  \right)\\
			&\geq \exp\left( -n\left( \D{T}{\bgam} - \D{T}{\bdel}\right) - \frac{\eps}{4}\cdot k \right),
		\end{aligned}
	\end{equation}
	where the first inequality follows from \eqref{eq:pgam_to_prop},  the second  inequality follows from \eqref{eq:Tselect}, and the last inequality holds as $Z\leq \frac{\eps}{8}$ (by~\eqref{eq:translate_Z_def}). In \eqref{eq:prob_gamma} the probability of an event in the probability space associated with $p_{\delta}$ is used to lower bound the probability of an event in the probability space associated with $p_{\gamma}$. This transition is the core of the proofs of Lemma~\ref{lem:translation} and Theorem~\ref{thm:rec}.

	We use the next claim to bound the last term in \eqref{eq:prob_gamma} (recall that $M = \max_{j\in [N]} \frac{\D{\bdel^j }{\bgam^j} }{ \delta^j\cdot \bk^j}$).
	\begin{claim}
		\label{claim:divs}
		$	-n\left( \D{T}{\bgam} - \D{T}{\bdel}\right) \geq  - M \cdot k -\eps\cdot\frac{3}{4}\cdot k$
	\end{claim}
	Before we prove Claim~\ref{claim:divs}, we show how it can be used to complete the proof of the lemma. 
	By  \eqref{eq:prob_gamma} and Claim~\ref{claim:divs}, we have
	$$
	\begin{aligned}
		p_{\gamma} (\floor{\alpha \cdot k} , k )
		&\geq \exp\left( -n\left( \D{T}{\bgam} - \D{T}{\bdel}\right) -\frac{\eps}{4}\cdot k  \right)\\
		&\geq \exp\left(- M \cdot k -\eps\cdot \frac{3}{4} - \eps\cdot  \frac{1}{4}\cdot k  \right)\\
		&= \exp\left( -(M+\eps )\cdot k \right).
	\end{aligned}
	$$
	Therefore,
	$$
	\frac{1}{k} \cdot p_{\gamma} (\floor{\alpha \cdot k}, k )  \geq\frac{1}{k}\cdot \ln \exp\left(-k(M-\eps)\right) \geq -M-\eps,
	$$
	where the first inequality follows from \eqref{eq:pgam_to_prop}. 
	This implies that $$\liminf_{k\rightarrow \infty} \frac{1}{k}\cdot \ln p_{\gamma} (\floor{\alpha\cdot k},k) \geq  -M= -\max_{j\in [N]} \frac{ \D{\delta^j}{\gamma^j}}{\delta^j\cdot \bk^j},$$ as required.

	We use the next claim in the proof of Claim~\ref{claim:divs}, which basically states that the type $T$ must be close to the vectors $\bdel^j$, due to \eqref{eq:Tselect}. Define $\lambda_j= \sum_{i=1}^{r_j} T_{(j,i)}$. 
	\begin{claim}
		\label{claim:TtoDel}
		$\sum_{j=1}^{N} \sum_{i=1}^{r_j}\left| T_{(j,i)}-\lambda_j\cdot \bdel^j_i\right|\leq 4 \cdot\sqrt {\kmax \cdot Z}$ 
	\end{claim}
	\begin{claimproof}
		By \eqref{eq:Tselect} and Lemma~\ref{lem:all_type_prob}, it holds that 
		$$
		\exp(-2\cdot k\cdot Z) \leq \Pr_{\delta, S^*}(\type(A_1,\ldots, A_n )=T)  \leq \exp\left(-n\cdot \left(\D{T}{\bdel} +\terment(T)\right)\right).
		$$
		Therefore, 
		$$
		\D{T}{\bdel} +\terment(T)\leq 2\cdot \frac{k}{n}\cdot Z \leq 4\cdot \kmax \cdot Z,
		$$
		where the last inequality holds as $n\geq k/\kmax-1\geq \frac{k}{2\cdot \kmax}$. 
		By Lemma~\ref{lem:norm}, we have
		$$
		\sum_{j=1}^{N}\sum_{i=1}^{r_j} \left|T_{(j,i)} - \lambda_j\cdot \bdel^j_i \right| \leq 2\cdot\sqrt{\D{T}{\bdel} +\terment(T) } \leq 4 \cdot\sqrt {\kmax \cdot Z}.
		$$
	\end{claimproof}

	\begin{claimproof}[Proof of Claim~\ref{claim:divs}]
		By the definition of $\D{}{}$ \eqref{eq:div_type}, we have
		\begin{equation}
			\label{eq:divs_first}
			\begin{aligned}
				-n\left( \D{T}{\bgam} - \D{T}{\bdel}\right) &= 
				-n \sum_{j=1}^{N}\sum_{i=1}^{r_j} T_{(j,i)}\cdot \ln \frac{T_{(j,i)}}{\bgam^j_i} + n \sum_{j=1}^{N}\sum_{i=1}^{r_j} T_{(j,i)}\cdot \ln \frac{T_{(j,i)}}{\bdel^j_i} \\
				&= -n \sum_{j=1}^{N}\sum_{i=1}^{r_j} T_{(j,i)}\cdot \ln \frac{\delta^j_i}{\bgam^j_i} \\
				&= -n \sum_{j=1}^{N} \lambda_j \sum_{i=1}^{r_j} \bdel^j_i \cdot \ln \frac{\bdel^j_i}{\bgam^j_i} -n\sum_{j=1}^{N} \sum_{i=1}^{r_j} \left( T_{(j,i)} - \lambda_j\cdot \bdel^j_i\right)\cdot \ln \frac{\bdel^j_i}{\bgam^j_i}\\
				& \geq -n\sum_{j=1}^{N} \lambda_j \cdot \D{\bdel^j}{\bgam^j} -n\sum_{j=1}^{N} \sum_{i=1}^{r_j} \left|T_{(j,i)} - \lambda_j\cdot \bdel^j_i\right| \cdot \eta \\
				&\geq  -n\sum_{j=1}^{N} \lambda_j \cdot \D{\bdel^j}{\bgam^j} - 2\cdot\alpha \cdot k\cdot \eta \cdot4\cdot  \sqrt{\kmax \cdot Z}\\
				&\geq -n\sum_{j=1}^{N} \lambda_j \cdot \D{\bdel^j}{\bgam^j} - k\cdot \frac{\eps}{4},
			\end{aligned}
		\end{equation}
		where the first inequality holds as  $\eta = \max_{(j,i)\in \chi }\left| \ln\frac{\bdel^j_i}{\bgam^{j}_i}\right|$,  the second inequality follows from Claim~\ref{claim:TtoDel} and since $n\leq \alpha\cdot k+1\leq 2\cdot \alpha \cdot k$, and the last inequality holds as $Z\leq \frac{\eps^2}{2^{10} \cdot \alpha^2\cdot \eta^2\cdot \kmax}$ \eqref{eq:translate_Z_def}. 
		
		Recall that $M = \max_{j\in [N]} \frac{\D{\bdel^j }{\bgam^j} }{ \delta^j\cdot \bk^j}$. Then,
		\begin{equation}
			\label{eq:divs_second}
			\begin{aligned}
				-n\sum_{j=1}^{N} \lambda_j \cdot \D{\bdel^j}{\bgam^j}  &= 	  -n\sum_{j=1}^{N} \lambda_j \cdot \frac{\bdel^j\cdot \bk^j}{\bdel^j\cdot \bk^j}\cdot \D{\bdel^j}{\bgam^j} \\
				&\geq -n\cdot \sum_{j=1}^N  \lambda_j\cdot \bdel^j\cdot \bk^j \cdot M \\
				&= -n\cdot  M\cdot  \sum_{j=1}^N   \sum_{i=1}^{r_j} T_{(j,i)} \cdot \bk^j_i -n \cdot M \cdot \sum_{j=1}^{r_j } \left( \lambda_j\cdot \bdel^j_i -T_{(j,i)}\right)\cdot  \bk^j_i \\
				&\geq  - M \cdot n\cdot \kappa(T) -n \cdot M \cdot \sum_{j=1}^{r_j } \left|\lambda_j\cdot \bdel^j_i -T_{(j, i)}\right|\cdot  \kmax\\
				&\geq  -M \cdot k - M\cdot\kmax   - \kmax\cdot M \cdot 2\cdot \alpha \cdot k \cdot 4\cdot \sqrt{\kmax \cdot Z}\\
				&\geq -M \cdot k -\frac{\eps}{4} \cdot k -\frac{\eps}{4}\cdot k.
			\end{aligned}
		\end{equation}
		The third inequality holds as  $n\cdot \kappa(T)\leq k + \kmax$ and $n\leq \alpha \cdot k+1\leq 2\cdot \alpha \cdot k$. The last inequality holds as $k>K>\frac{4\cdot M\cdot \kmax}{\eps}$ and  $Z\leq \frac{\eps^2}{2^{10}\cdot M^2\cdot\alpha^2\cdot  \kmax^3}$ \eqref{eq:translate_Z_def}. 
		By \eqref{eq:divs_first} and \eqref{eq:divs_second}, we have
		$$
		-n\left(\D{T}{\bgam} +\D{T}{\bdel}\right) \geq - M \cdot k -\eps\cdot\frac{3}{4}\cdot k.
		$$
	\end{claimproof}

\end{proof}

\subsection{The Lower Bound}
\label{sec:proof_lower}

Next, we prove Lemma~\ref{lem:rec_lower} that we now restate.
\reclower*
The proof of Lemma~\ref{lem:rec_lower} follows from a simple application of Lemmas~\ref{lem:trivial_rec} and~\ref{lem:translation}. 
\begin{proof}[Proof of Lemma~\ref{lem:rec_lower}]

For every $j\in [N]$ let $\bdel^j\in\mathbb{R}_{\geq 0}^{r_j}$ be a distribution such that $M_j = \frac{\D{\bdel^j}{\bgam^j}}{\bdel^j\cdot \bk^j}$ and $\bdel^j\cdot \bb^j \leq \alpha \cdot \bdel^j\cdot \bk^j$. 
Such distributions exist by the definitions of $\alpha$-branching numbers (Definition~\ref{def:alpha_branching}). 
Ideally, we would like to use Lemma~\ref{lem:trivial_rec} with respect to the composite recurrence $p_{\delta}$ of the terms $\{(\bb^j,\bk^j,\bdel^j)~|~j\in [N]\}$. However, the recurrence  $p_{\delta}$ does not satisfy the conditions of the Lemma~\ref{lem:trivial_rec}. It may be that $\bdel^j\cdot \bb^j =\alpha \cdot \bdel^j\cdot \bk^j$ while the lemma requires that $\bdel^j\cdot \bb^j <\alpha \cdot \bdel^j\cdot \bk^j$. Furthermore, $p_{\delta}$ may not be  well defined, as we require for a term $(\bb,\bk,\bgam)$ of a composite recurrence that $\bgam_i>0$ for every $i$. We use the next claim to overcome these technical obstacles. 
\begin{claim}
	\label{claim:calculus}
For every $j\in [N]$, there is a sequence  of distributions $\bdel^{j,\ell}\in \mathbb{R}^{r_j}_{+}$ for  $\ell\in \mathbb{N}$ such that:
\begin{enumerate}
	\item For every $\ell\in \mathbb{N}$  and $i\in [r_j]$ it holds that $ \bdel^{j,\ell}_i >0$.
	\item For every $\ell \in \mathbb{N}$ it holds that $\bdel^{j,\ell} \cdot \bb^j <  \alpha \cdot \bdel^{j,\ell} \cdot \bk^j$.
	\item For every $i\in [r_j]$ it holds that $\lim_{\ell \rightarrow \infty} \bdel^{j,\ell}_i =\bdel^{j}_i$. 
\end{enumerate}
\end{claim}
The proof of Claim~\ref{claim:calculus} uses simple calculus arguments and the fact that $\alpha$ is strictly greater than the critical ratio of each of the terms $(\bb^j,\bk^j,\bgam^j)$. We prove Claim~\ref{claim:calculus} below.

Let $\bdel^{j,\ell}$ be the vectors defined in Claim~\ref{claim:calculus} for every $j\in [N]$ and $\ell \in \mathbb{N}$. Let $p_{\delta,\ell}$ be the composite recurrence of $\{(\bb^j,\bk^j,\bdel^{j,\ell} ~|~j\in [N])\}$ for every $\ell\in \mathbb{N}$.  Since $\bdel^{j,\ell} \cdot \bb^j <  \alpha \cdot \bdel^{j,\ell} \cdot \bk^j$ for every $j\in \mathbb{N}$ and $\ell \in \mathbb{N}$, by Lemma~\ref{lem:trivial_rec} we have 
$$
\forall \ell\in\mathbb{N}:~~~~\lim_{k\rightarrow \infty} \frac{1}{k} \cdot \ln p_{\delta,\ell}(\floor{\alpha \cdot k},k) = 0. 
$$

Recall that $p$ is the composite recurrence of $\{(\bb^j,\bk^j,\bgam^j)~|~j\in [N]\}$.
Then, by Lemma~\ref{lem:translation},
$$
\forall \ell\in\mathbb{N}:~~~~\liminf_{k\rightarrow \infty} \frac{1}{k} \cdot \ln p(\floor{\alpha \cdot k},k) \geq - \max_{j\in [N]} \frac{\D{\bdel^{j,\ell}}{\gamma}}{\bdel^{j,\ell}\cdot \bk^j}. 
$$
Therefore,
$$
\begin{aligned}
\liminf_{k\rightarrow \infty} \frac{1}{k} \cdot \ln p(\floor{\alpha \cdot k},k) &=  \liminf_{\ell \rightarrow \infty } \liminf_{k\rightarrow \infty} \frac{1}{k} \cdot \ln p(\floor{\alpha \cdot k},k)\\
&\geq   \liminf_{\ell \rightarrow \infty } \left( - \max_{j\in [N]} \frac{\D{\bdel^{j,\ell}}{\gamma}}{\bdel^{j,\ell}\cdot \bk^j}\right) \\
&=  - \max_{j\in [N]} \frac{\D{\bdel^{j}}{\gamma}}{\bdel^{j}\cdot \bk^j}\\ &= -\max_{j\in [N]} M_j \\
&= -M.
\end{aligned}
$$
The second equality holds as $\lim_{\ell\rightarrow \infty} \bdel^{j,\ell}_i = \bdel^j_i$ for every $j\in [N]$ and $i\in [r_j]$. 

It remains to prove Claim~\ref{claim:calculus}.
\begin{claimproof}[Proof of Claim~\ref{claim:calculus}]
	Fix arbitrary  $j\in [N]$. By the definition of critical ratio (Definition~\ref{def:critical_ratio}) 
	there exists ${i^*}\in [r_j]$ such that $\critical(\bb^j,\bk^j,\bgam^j) = \frac{\bb^j_{i^*}}{\bk^j_{i^*}}$. Define $\bq\in [0,1]^{r_j}$ by $\bq_{i^*} = 1$ and $\bq_{i}=0$ for all $i\in [r_j]\setminus \{i^*\}$. Then, 
	\begin{equation}\label{eq:calculus_first}
	\bq \cdot \bb^j = \bb^j_{i^*}= \critical(\bb^j,\bk^j,\bgam^j) \cdot  \bk^j_{i^*} = \critical(\bb^j,\bk^j,\bgam^j) \cdot \bq\cdot  \bk^j  <\alpha \cdot \bq \cdot \bk^j.
	\end{equation}
	Observe that $\bq\cdot \bb^j > 0$ (since $\bb^j\in \mathbb{N}_{>0}^{r_j}$), therefore $\bq \cdot \bk^j>0$, which together with $\alpha>\critical(\bb^j,\bk^j,\bgam^j)$ justifies the strict inequality. By rearranging \eqref{eq:calculus_first} we have 
	\begin{equation}\label{eq:calculus_second}
	\bq \cdot \left( \bb^j-\alpha \cdot \bk^j\right)< 0.
	\end{equation}
	
	Also, let $\bt\in \mathbb{R}_{>0}^{r_j}$ be the distribution defined by $\bt_i = \frac{1}{r_j}$ for every $i\in [r_j]$.   By \eqref{eq:calculus_second} there is $a\in (0,1)$ such that 
	$$
	 a\cdot \bq \cdot \left( \bb^j-\alpha \cdot \bk^j\right) + (1-a)\cdot \bt \cdot \left( \bb^j-\alpha \cdot \bk^j\right)<0.
	 $$
	 The above inequality is equivalent to
	 \begin{equation}
	 	\label{eq:positive_convex}
	 a\cdot \bq \cdot \bb^j + (1-a) \cdot \bt \cdot \bb^j < \alpha \cdot a \cdot \bq \cdot \bk^j + \alpha \cdot (1-a)\cdot \bt \cdot \bk^j.
	 \end{equation}
	 
	Define $$\bdel^{j,\ell} = \left( 1-\frac{1}{\ell}\right) \cdot \bdel^j + \frac{a}{\ell} \cdot \bq + \frac{1-a}{\ell} \cdot \bt$$
	for every $\ell\in \mathbb{N}$. 
	It holds that $\delta^{j,\ell}$ is a distribution as it is a convex combination of distributions. 
	It remains to show  $\bdel^{j,\ell}$ satisfies the properties in the claim. 
	
	For every $\ell \in [\mathbb{N}]$ and $i\in [r_j]$ it holds that 
	$$
	\bdel^{j,\ell}_i = \left( 1-\frac{1}{\ell}\right) \cdot \bdel_i^j + \frac{a}{\ell} \cdot \bq_i + \frac{1-a}{\ell} \cdot \bt_i\geq \frac{1-a}{\ell} \cdot \bt_i >0.
	$$
	
	For every $\ell\in \mathbb{N}$ it holds that 
	$$
	\begin{aligned}
	\delta^{j,\ell}\cdot \bb^j &=\left( 1-\frac{1}{\ell}\right) \cdot \bdel^j\cdot \bb^j  + \frac{a}{\ell} \cdot \bq \cdot \bb^j+ \frac{1-a}{\ell} \cdot \bt\cdot \bb^j  \\
	&< 
	\left( 1-\frac{1}{\ell}\right) \cdot \alpha \cdot  \bdel^j\cdot \bk^j  + \frac{a}{\ell} \cdot \alpha \cdot \bq \cdot \bk^j+ \frac{1-a}{\ell}\cdot \alpha \cdot \bt\cdot \bk^j  \\
	&=  \alpha \cdot \bdel^{j,\ell} \cdot \bk^j,
	\end{aligned}
	$$
	where the inequality follows from \eqref{eq:positive_convex} and since $\delta^j\cdot \bb^j\leq \alpha \cdot\delta^j\cdot \bk^j$ by the definition of $\delta^j$. 
	
	Finally, for every $i\in [r_j]$ it holds that
	$$
	\lim_{\ell\rightarrow \infty } \delta^{j,\ell}_i = \lim_{\ell\rightarrow \infty} \left(  \left( 1-\frac{1}{\ell}\right) \cdot \bdel_i^j + \frac{a}{\ell} \cdot \bq_i + \frac{1-a}{\ell} \cdot \bt_i\right) = \bdel^{j}_i. 
	$$
\end{claimproof}

\end{proof}

\subsection{The Upper Bound}
\label{sec:proof_upper}

The final ingredient in the proof of Theorem~\ref{thm:rec} is the missing proof of Lemma~\ref{lem:rec_upper}.
\recupper*
The proof of Lemma~\ref{lem:rec_upper} uses the random walk associated with the recurrence $p$ using a specific strategy $S^*\in \mS$ which always selects a term $j^* \in [N]$ for which $M_{j^*}= \max_{j\in [N]} M_{j}$.  By Lemma~\ref{lem:comp_to_walk}, $p(\floor{\alpha \cdot k},k)\leq \Pr_{S^*}(G^{\alpha ,k})$. The proof focuses on a specific type $T$ and length $n$ such that $ \Pr_{S^*}(G^{\alpha ,k})\approx \Pr_{S^*} (\Pr(A_1,\ldots, A_n)=T)$. The probability of the last event is upper bounded using Lemma~\ref{lem:all_type_prob}, and the properties $T$ are used to show that this upper bound is at most $\exp(-k\cdot M)$. 

\begin{proof}[Proof of Lemma~\ref{lem:rec_upper}]
The proof considers the random walk associated with the composite recurrence $p$ of $\{(\bb^j,\bk^j,\bgam^j)~|~j\in[N]\}$ as defined in Section~\ref{sec:random_walk}, and uses the notation defining the random walk. This includes the random variables $X_n$, $Y_n$ and $A_n$ for any $n\in \mathbb{N}$, the set of strategies $\mS$ and the measure function $\Pr_S$ for the random walk when the adversary is $S$.  We also use the notation for types as given in Section~\ref{sec:types}. 
	
Fix arbitrary $j^*\in [N]$ such that $M_{j^*}= \max_{j\in [N]} M_j$, and define a strategy $S^*\in \mS$ by $S^*(a)=j^*$ for every $a\in \chi^*$. 
Let $k\in \mathbb{N}$ and assume $p(\floor{\alpha \cdot k }, k) > 0$. 
By Lemma~\ref{lem:comp_to_walk}, it holds that 
\begin{equation}
	\label{eq:upper_initial}
0<p(\floor{\alpha\cdot k},k) = \min_{S\in \mS} \Pr_{S}(G^{\floor{\alpha \cdot k},k} ) \leq \Pr_{S^*}(G^{\floor{\alpha \cdot k},k} )
\end{equation}
for every $k\in \mathbb{N}$. 

For every $n$ it holds that $X_n = \sum_{\ell=1}^{n} \beta(A_\ell) \geq n$ since $\beta(A_{\ell}) \geq 1$. Therefore,
$$
\begin{aligned}
0<\Pr_{S^*}(G^{\floor{\alpha \cdot k},k} ) &	\leq \Pr_{S^*} \left(\exists n: X_n\leq \alpha \cdot k  \textnormal{ and } Y_n\geq k\right)\\
& = \Pr_{S^*} \left(\exists 1\leq n\leq \floor{\alpha \cdot k}: X_n\leq \alpha \cdot k  \textnormal{ and } Y_n\geq k  \right).
\end{aligned}
$$

By the pigeon hole principle, there is $1\leq n \leq \alpha \cdot k$ such that 
$$
\begin{aligned}
\frac{\Pr_{S^*}(G^{\floor{\alpha \cdot k},k} ) }{{\alpha \cdot k}+1} \leq \Pr_{S^*}\left( X_n\leq \alpha \cdot k  \textnormal{ and } Y_n\geq k \right).
\end{aligned}
$$
Recall that the type of $A_1,\ldots,A_n$ is in $\mK_n$, and $|\mK_n|\leq (n+1)^{|\chi|}\leq (\alpha k +1)^{\chi}$  (Observation~\ref{obs:types}). Therefore,
$$
\begin{aligned}
	0<\frac{\Pr_{S^*}(G^{\floor{\alpha \cdot k},k} ) }{{\alpha \cdot k}+1} &\leq \Pr_{S^*}\left( X_n\leq \alpha \cdot k  \textnormal{ and } Y_n\geq k \right)\\
	&=\sum_{T\in \mK_n} \Pr_{S^*}\left( X_n\leq \alpha \cdot k  \textnormal{ and } Y_n\geq k \textnormal{ and } \type(A_1,\ldots, A_n) = T\right),
\end{aligned}
$$
and by the pigeon hole principle, there is $T\in \mK_n$ such that 
\begin{equation}
	\label{eq:upper_type}
0<\frac{\Pr_{S^*}(G^{\floor{\alpha \cdot k},k} ) }{(\alpha \cdot k+1)^{|\chi|+1}} \leq \Pr_{S^*}\left( X_n\leq \alpha \cdot k  \textnormal{ and } Y_n\geq k \textnormal{ and }  \type(A_1,\ldots, A_n) = T\right).
\end{equation}
By Lemma~\ref{lem:all_type_prob} it holds that 
\begin{equation}
	\label{eq:upper_to_div}
	\begin{aligned}
		\frac{\Pr_{S^*}(G^{\floor{\alpha \cdot k},k} ) }{(\alpha \cdot k+1)^{|\chi|+1}} &\leq \Pr_{S^*}\left( X_n\leq \alpha \cdot k  \textnormal{ and } Y_n\geq k \textnormal{ and }  \type(A_1,\ldots, A_n) = T\right)\\
		&\leq \exp\left(- n \cdot \left( \D{T}{\bgam} +\terment(T)\right) \right) . 
	\end{aligned}
\end{equation}
Furthermore, since $S^*$ is a constant function and the event in \eqref{eq:upper_type} has a positive probability, we can show the following claim, whose proof is given below.
\begin{claim}
	\label{claim:simple_type}
	For every $j\in [N]\setminus \{j^*\}$ and $i\in [r_j]$ it holds that $T_{(j,i)}=0$. 
\end{claim}

Define $\bt\in \mathbb{R}^{r_{j^*}}_{\geq 0}$ by $\bt_i = T_{(j^*, i)}$ for every $i\in [r_{j^*}]$.  By Claim~\ref{claim:simple_type}, we have
 \begin{equation}
 	\label{eq:t_is_dist}\sum_{i=1}^{r_{j^*}} \bt_i =\sum_{i=1}^{r_{j^*} }T_{(j^*,i)} = \sum_{j=1}^{N} \sum_{i=1}^{r_j} T_{(j,i)} = 1.
 	\end{equation}
 That is, $\bt$ is a distribution. 
Thus,
\begin{equation}
	\label{eq:upper_div_elim}\D{T}{\bgam} = \sum_{(j,i)\in \chi}  T_{(j,i)} \cdot \ln \frac{T_{(j,i)}}{\bgam^j_i}= \sum_{i=1}^{r_{j^*}}  T_{(j^*,i)} \cdot \ln \frac{T_{(j^*,i)}}{\bgam^{j^*}_i} = \sum_{i=1}^{r_{j^*}} \bt_i \cdot \ln \frac{\bt_i}{\bgam^{j^*}_i} = \D{\bt}{\bgam^{j^*}},
 \end{equation}
where the first equality  is due to \eqref{eq:div_type} and the second equality is due to  Claim~\ref{claim:simple_type}. For every $j\in [N]$ define $\lambda_j= \sum_{i=1}^{r_j} T_{(j,i)}$.  By Claim~\ref{claim:simple_type}, it holds that $\lambda_j=0$ for $j\neq j^*$ and $\lambda_{j^*}=1$. Therefore, by \eqref{eq:terment} we have
\begin{equation}
	\label{eq:terment_elim}
\terment(T) = \sum_{j=1}^{N} \lambda_j\cdot \ln \frac{1}{\lambda_j}= 0 .
\end{equation}

By \eqref{eq:upper_to_div}, \eqref{eq:upper_div_elim} and \eqref{eq:terment_elim}, we have 
\begin{equation}
	\label{eq:upper_ineq}
\begin{aligned}	
	\frac{\Pr_{S^*}(G^{\floor{\alpha \cdot k},k} ) }{(\alpha \cdot k+1)^{|\chi|+1}} &\leq \exp\left(- n \cdot \left( \D{T}{\bgam} +\terment(T)\right) \right) 
	&\leq \exp\left(  -n \cdot \D{\bt}{\bgam^{j^*}}\right) .
	\end{aligned}
\end{equation}
Recall that $X_n = n \cdot\beta(\type(A_1,\ldots, A_n))$ and $Y_n = n\cdot \kappa(\type(A_1,\ldots,A_n))$
by  \eqref{eq:kappa_by_type}.  Therefore, since the event in   \eqref{eq:upper_type} has a positive probability, it must holds that
 \begin{equation}
	\label{eq:T_type_props}
	n\cdot \beta(T) \leq \alpha \cdot k  \textnormal{ and  }k\leq n\cdot \kappa(T).
	\end{equation}
Furthermore, by Claim~\ref{claim:simple_type}, it holds that 
\begin{equation}
	\label{eq:ktimest} 
 \kappa(T) = \sum_{j=1}^{N} \sum_{i=1}^{r_j} \kappa((j,i)) \cdot T_{(j,i)} =  \sum_{i=1}^{r_{j^*}} \kappa((j^*,i))\cdot T_{(j^*,i)} = \sum_{i=1}^{r_{j^*}} \bk^{j^*}_i\cdot \bt_i = \bk^{j^*}\cdot \bt,
\end{equation}
where the first equality is by \eqref{eq:kappa_by_type_explicit}, and the third equality follows from the definition of $\bt$.  By plugging \eqref{eq:T_type_props} and \eqref{eq:ktimest} into \eqref{eq:upper_ineq} we get
\begin{equation}
	\label{eq:prob_to_t}
	\frac{\Pr_{S^*}(G^{\floor{\alpha \cdot k},k} ) }{(\alpha \cdot k+1)^{|\chi|+1}} 
\leq \exp\left(  -n \cdot \D{\bt}{\bgam^{j^*}}\right)
 \leq  \exp\left(  -\frac{k}{\kappa(T)} \cdot \D{\bt}{\bgam^{j^*}}\right)
	= \exp\left(-\frac{k}{\bk^{j^*}\cdot \bt} \cdot \D{\bt}{\bgam^{j^*}} \right).
\end{equation}

\begin{claim}
	\label{claim:t_to_number}
	$\frac{\D{\bt}{\bgam^{j^*}}}{\bk^{j^*}\cdot \bt}\geq M_{j^*}$. 
\end{claim}
\begin{claimproof}
We prove the claim by showing that $\bt$ is a feasible solution for \eqref{eq:alpha_num}, the optimization problem which defines the branching numbers, with respect to the term $(\bb^{j^*},\bk^{j^*},\bgam^{j^*})$. Similar to~\eqref{eq:ktimest}, by Claim~\ref{claim:simple_type}, it holds that
	$$
	\beta(T) = \sum_{j=1}^{N} \sum_{i=1}^{r_j} \beta((j,i))\cdot T_{(j,i)} = \sum_{i=1}^{r_{j^*}} \beta((j^*,i)) \cdot T_{(j^*,i)} = \sum_{i=1}^{r_{j^*} } \bb^{j^*}_i \cdot \bt_{i} = \bb^{j^*}\cdot \bt.
	$$
	Therefore, by \eqref{eq:T_type_props} we have
	$$
	\bb^{j^*} \cdot \bt =\beta(T)\leq \frac{\alpha \cdot k }{ n} \leq \alpha \cdot \kappa(T) = \alpha \cdot \bk^{j^*}\cdot \bt, 
	$$
	where the last equality is by \eqref{eq:ktimest}. 
	
	Furthermore,  by \eqref{eq:t_is_dist} it holds that  $\bt$ is a distribution. Overall, we showed that $\bt^*$ is a feasible solution for the optimization problem in \eqref{eq:alpha_num} with respect to the term $(\bb^{j^*},\bk^{j^*},\bgam^{j^*})$. The value of $\bt^*$ as a solution for the optimization problem is $\frac{1}{\bk^{j^*}\cdot \bt}\cdot \D{\bt}{\bgam^{j^*}}$, and since $M_{j^*}$ is the optimum, it follows that $\frac{1}{\bk^{j^*}\cdot \bt}\cdot \D{\bt}{\bgam^{j^*}} \geq M_{j^*}$. 
\end{claimproof}

By \eqref{eq:prob_to_t} and Claim~\ref{claim:t_to_number}, we have
$$
	\frac{p(\floor{\alpha k },k)}{(\alpha \cdot k+1)^{|\chi|+1} }\leq \frac{\Pr_{S^*}(G^{\floor{\alpha \cdot k},k} ) }{(\alpha \cdot k+1)^{|\chi|+1}} \leq 
\exp\left(-\frac{k}{\bk^j\cdot \bt} \cdot \D{\bt}{\bgam^{j^*}} \right) \leq \exp\left(-k\cdot M_{j^*}\right) = \exp(-k\cdot M ),
$$
where the first inequality is by \eqref{eq:upper_initial}.
As the above  inequality holds for every $k$ such that $p(\floor{\alpha \cdot k}, k)>0$, we have
$$
p(\floor{\alpha k}, k)\leq  (\alpha \cdot k+1)^{|\chi|+1} \cdot \exp\left(-k\cdot M\right)
$$
for all $k\in \mathbb{N}$. Therefore,
$$
\limsup_{k \rightarrow \infty} \frac{1}{k}\ln p\left(\floor{\alpha \cdot k},k\right) \leq \limsup_{k \rightarrow \infty} \frac{1}{k}\ln  \left((\alpha \cdot k+1)^{|\chi|+1} \cdot \exp\left(-k\cdot M\right)  \right) =-M.
$$

\begin{claimproof}[Proof of Claim~\ref{claim:simple_type}]
	Assume towards contradiction that there are $j\in [N]\setminus \{j^*\}$ and $i\in [r_j]$ such that $T_{(j,i)} > 0$. 
Let $(a_1,\ldots, a_n) \in \chi^{n}$ be a length $n$ vector of type $T$. That is, $\type(a_1,\ldots, a_n)=T$.  Since $T_{(j,i)}>0$, there is $1\leq \ell \leq n$ such that $a_{\ell} =(j,i)$. Therefore, $j\neq j^* = S^*(a_1,\ldots,a_{\ell-1})$ which implies that $(a_1,\ldots, a_n )$  is not consistent with $S^*$ (see Definition~\ref{def:consistent}). By Lemma~\ref{lem:inconsistent_prob}, we have that $\Pr_{S^*}\left(  (A_1,\ldots, A_n) = (a_1,\ldots,a_n)\right)  =0$ for every $(a_1,\ldots, a_n ) \in \chi^*$ such that $\type(a_1,\ldots, a_n) =T$. Therefore,
$$
\Pr_{S^*}(\type(A_1,\ldots,A_n)= T)  = \sum_{(a_1,\ldots, a_n )\in \chi^* \textnormal{ s.t. } \type(a_1,\ldots, a_n )= T}\Pr_{S^*}\left(  (A_1,\ldots, A_n) = (a_1,\ldots,a_n)\right)  =0.
$$
By \eqref{eq:upper_type}, we also have
$$
\begin{aligned}
0 &< \Pr_{S^*}\left( X_n\leq \alpha \cdot k  \textnormal{ and } Y_n\geq k \textnormal{ and } \type(A_1,\ldots, A_n) = T\right)\\
& \leq \Pr_{S^*} \left( \type(A_1,\ldots,A_n)= T\right) =0.
\end{aligned}
$$ 
A contradiction. Therefore, $T_{(j,i)}=0$ for every $j\in [N]\setminus \{j^*\}$ and $i\in [r_j]$. This completes the proof of the lemma.
\end{claimproof} 
 \end{proof}

\section{Discussion}
\label{sec:discussion}

In this paper we introduced a new technique for obtaining parameterized approximation algorithms leading to significant improvements in running times over existing algorithms. The analysis of our algorithms 
required the development of a mathematical machinery for the analysis
of a wide class of two-variable recurrence relations. Following the above results, several issues remain open:

\begin{itemize}
	\item From theoretical perspective, it is desirable to obtain deterministic variants of our algorithms. Derandomizing our technique is left for future work.

	\item
	Sanov's theorem also falls into the category of Large Deviation Theory. 
	There are some extensions of the theorem 
	from the viewpoint of probability theory. One of the most general of these is Gartner-Ellis theorem \cite{Ga77, El84} (see a unified claim in \cite{Ho00}). By using this theorem, some steps in the proof of Theorem \ref{thm:rec} may be skipped.
	We keep these steps to make the proof clearer and more accessible to readers outside the above areas.

	\item Often the analyses of branching algorithms use complex recurrence relations 
	involving two functions or more to obtain improved
	bounds on running times. Examples for such analyses can be 
	found in~\cite{CKJ01} and \cite{Fer10}. When transformed
	to the context of randomized branching, the analyses yield 
	recurrence relations in two functions, such as
	\begin{equation}
		\label{eq:multi_func_rec}
	\begin{aligned}
	&p(b,k) &=& \min \begin{cases}{ 0.5 \cdot p(b-1, k-1)+ 0.5 \cdot q(b-2, k)} \\
	0.5 \cdot p(b-1, k ) +  0.25\cdot q (b-2, k) + 	0.25 \cdot q(b-2, k-2) \end{cases} \\ 
	&q(b,k) &=& \min \begin{cases}{ 0.5 \cdot p(b-1, k-1)+ 0.5 \cdot q(b-3, k)} \\
	0.5 \cdot p(b-1, k ) + 0.25 \cdot q (b-3, k) + 	0.25 \cdot q (b-3, k-3) \end{cases}	\end{aligned}
	\end{equation}
	A tight analysis for such recurrences is likely to lead to improved parameterized approximations for small values of 
	$\alpha$ (for both Vertex Cover and $3$-Hitting Set), as the (exact) algorithms of \cite{CKJ01} and \cite{Fer10} have 
	better running times, compared to the running times of our
	algorithms for approximation ratios approaching $1$.  Our initial results suggest that it is possible to lower bound such recurrences using adaptation of the techniques presented in this paper. 
	
	Currently, the (exact) parameterized algorithm for Vertex Cover with best 
	running time is due to \cite{CKX10}. We were unable to obtain a randomized branching variant for this algorithm. One reason is that an incorrect branching can lead to an unbounded increase 
	in the mininmal vertex cover size.

	\item We showed the application of randomized branching to Vertex Cover and to $3$-Hitting Set.
	Following the publication of the conference version of this paper, a simple form of randomized branching has been used in  \cite{JLMRS23, EK24} to design parameterized approximation algorithms for Feedback Vertex Set and other Vertex Deletion problems on graphs. 
	
	In general, designing parameterized approximation algorithms for Vertex Deletion problems, such as Vertex Cover and $3$-Path Vertex Cover \cite{Tsur19}, seems 
	similar w.r.t. difficulty level to the design of {\em exact} parameterized branching algorithms for these problems. In both settings, the running times of natural algorithms can be improved by introducing more sophisticated branching rules. This
	holds also for many of the algorithms proposed in~\cite{EK24}.

\end{itemize}

\bigskip
\noindent {\bf Acknowledgments.}
We thank Henning Fernau and Daniel Lokshtanov for stimulating discussions on the paper. 
We are grateful to the Technion Computer Systems Laboratory for providing us the computational infrastructure used for the numerical evaluations in Section \ref{sec:HS}.

\bigskip 

\bibliographystyle{abbrv}

\bibliography{randbranch}
\end{document}